\newtheorem{problem}{Problem}
\newtheorem{example}{Example}
\newtheorem{definition}{Definition}
\newtheorem{theorem}{Theorem}
\newtheorem{lemma}{Lemma}
\newtheorem{remark}{Remark}
\newtheorem{corollary}{Corollary}
\newtheorem*{example2}{Example 2}
\newcommand{\Z}{\mathbb{Z}}
\newcommand{\R}{\mathbb{R}}
\newcommand{\B}{\mathcal{B}}
\newcommand{\F}{\mathcal{F}}
\newcommand{\D}{\mathcal{D}}
\newcommand{\PPB}{\mathcal{P}(\mathcal{B})}
\newcommand{\PPBbar}{\overline{\mathcal{P}}(\mathcal{B})}
\newcommand{\PPBo}{\accentset{\circ}{\mathcal{P}}(\mathcal{B})}
\newcommand{\CB}{\mathcal{C}(\mathcal{B})}
\newcommand{\CBONE}{\mathcal{C}^{1}_{\PPB}}
\newcommand{\CBZERO}{\mathcal{C}^{0}_{\PPB}}
\newcommand{\hx}{\hat{x}}
\newcommand{\hz}{\hat{z}}
\newcommand{\bias}{p}
\newcommand{\CN}{\mathcal{T}_f}
\DeclareMathOperator*{\vol}{Vol}
\def\biglen{20cm} 
\tikzset{
  half plane/.style={ to path={
       ($(\tikztostart)!.5!(\tikztotarget)!#1!(\tikztotarget)!\biglen!90:(\tikztotarget)$)
    -- ($(\tikztostart)!.5!(\tikztotarget)!#1!(\tikztotarget)!\biglen!-90:(\tikztotarget)$)
    -- ([turn]0,2*\biglen) -- ([turn]0,2*\biglen) -- cycle}},
  half plane/.default={1pt}
}
\title{Neural network approaches \\ to point lattice decoding}
\author{Vincent Corlay, Joseph J. Boutros, Philippe Ciblat, and Lo\"ic Brunel 
\thanks{V. Corlay is with Mitsubishi Electric R\&D Centre Europe, Rennes, France, and Telecom Paris,  Palaiseau, France (v.corlay@fr.merce.mee.com). 
J. J. Boutros is with the Department of Electrical and Computer Engineering, Texas A\&M University at Qatar, Doha, Qatar (boutros@tamu.edu).
P.~Ciblat is with Telecom Paris, Palaiseau, France (philippe.ciblat@telecom-paris.fr). 
L. Brunel is with Mitsubishi Electric R\&D Centre Europe, Rennes, France (l.brunel@fr.merce.mee.com).
}
\thanks{Part of this paper was presented at the IEEE International Symposium on Information Theory, Paris, France, July 2019.}
}
\begin{document}

\maketitle



\begin{abstract}
We characterize the complexity of the lattice decoding problem from a neural network perspective.
The notion of Voronoi-reduced basis is introduced to restrict the space of solutions to a binary set.
On the one hand, this problem is shown to be equivalent to computing a continuous piecewise linear (CPWL) function restricted to the fundamental parallelotope. 
On the other hand, it is known that any function computed by a ReLU feed-forward neural network is CPWL.
As a result, we count the number of affine pieces in the CPWL decoding function to characterize the complexity of the decoding problem.
It is exponential in the space dimension $n$, which induces shallow neural networks of exponential size. 
For structured lattices we show that folding, a technique equivalent to using a deep neural network, enables to reduce this complexity from exponential in $n$ to polynomial in $n$. Regarding unstructured MIMO lattices, in contrary to dense lattices many pieces in the CPWL decoding function can be neglected for quasi-optimal decoding on the Gaussian channel. This makes the decoding problem easier and it explains why shallow neural networks of reasonable size are more efficient with this category of lattices (in low to moderate dimensions).
\end{abstract}
\begin{IEEEkeywords}
Neural network, dense lattice, MIMO lattice, continuous piecewise linear function, basis reduction.
\end{IEEEkeywords}

\section{Introduction}

In 2012 Alex Krizhevsky and his team presented a revolutionary deep neural network in the ImageNet Large Scale Visual Recognition Challenge \cite{Krizhevsky2012}. The network largely outperformed all the competitors. This event triggered not only a revolution in the field of computer vision but has also affected many different engineering fields, including the field of digital communications. 

In our specific area of interest, the physical layer, countless studies have been published since 2016. 
For instance, reference papers such as \cite{Hoydis2017} gathered more than 800 citations in less than three years.
However, most of these papers present simulation results: 
e.g. a decoding problem is set and different neural network architectures are heuristically considered.
Learning via usual gradient-descent-like techniques is performed and the results are presented.

Our approach is different: we try to characterize the complexity of the decoding problem that should be solved by the neural network.

Neural network learning is about two key aspects: first, finding  a function class
$\Phi=\{f\}$ that contains a function ``close enough'' to a target function $f^*$.
Second, finding a learning algorithm  for the class $\Phi$.
Naturally, the less ``complex'' the target function $f^*$, the easier the problem is.
We argue that understanding this function $f^*$ encountered in the scope of the decoding problem
is of interest to find new efficient solutions.

Indeed, the first attempts to perform decoding operations with “raw” neural networks (i.e. without
using the underlying graph structures of existing sub-optimal algorithms, as done in \cite{Nachmani2016}) were unsuccessful. 
For instance, an exponential number of neurons in the network is needed in \cite{Gruber2017} to achieve satisfactory performance when decoding small length polar codes. 
We made the same observation when we tried to decode dense lattices typically used for channel coding \cite{Corlay2018}.
So far, it was not clear whether such a behavior is due to either an unadapted learning algorithm or a consequence of the complexity of the function to learn.
However, unlike for channel decoding (i.e. dense lattice decoding), neural networks can sometimes be successfully trained in the scope of multiple-input multiple-output (MIMO) detection \cite{Samuel2017}\cite{Corlay2018}. 
Note that it is also possible to unfold existing iterative algorithms to establish the neural network structure for MIMO detection as done in \cite{He2020}. For lattices in reasonable number of dimensions it is possible to maintain sphere decoding but tune its parameters via a neural network \cite{Mohammadkarimi2019}, this is outside the context of our study.

In this paper, the problem of neural-network lattice decoding is investigated.
Lattices are well-suited to understand these observed differences as they can be used both for channel coding and for modelling MIMO channels.

We embrace a feed-forward neural network perspective. 
These neural networks are aggregation of perceptrons and compute a composition of the functions executed by each perceptron.
For instance, if the activation functions are rectified linear unit (ReLU), each perceptron computes a piecewise affine  function.
Consequently, all  functions   in  the function class $\Phi$ of this feed-forward neural network  are CPWL.

We shall see that, under some assumptions, the lattice decoding problem is equivalent to computing a CPWL function. 
The target $f^*$ is thus CPWL.
The complexity of $f^*$ can be assessed, for instance, by counting its number of affine pieces.

It has been shown that the minimum size of shallow neural networks, such that $\Phi$ contains a given CPWL function $f^*$, directly depends on the number of affine pieces of $f^*$ whereas deep neural networks can ``fold'' the function and thus benefit of an exponential complexity reduction \cite{Montufar2014}.
On the one hand, it is critical to determine the number of affine pieces in $f^*$ to figure out if shallow neural networks can solve the decoding problem.
On the other hand, when this is not the case, we can investigate if there exist preproccessing  techniques to reduce the number of pieces in the CPWL function. We shall see that these preprocessing techniques are sequential and thus involve deep neural networks.

Due to the nature of feed-forward neural networks, our approach is mainly geometric and combinatorial. 
It is restricted to low and moderate dimensions. 
Again, our main contribution is not to present new decoding algorithms but to provide a better understanding of the decoding/detection problem from a neural network perspective.

The paper is organized as follows.
Preliminaries are found in Section~\ref{sec_prele}. 
We show in Section~\ref{sec_deco_step} how the lattice decoding problem can be restricted to the compact set $\PPB$.
This new lattice decoding problem in $\PPB$ induces a new type of lattice-reduced basis. The category of basis, called Voronoi-reduced basis, is presented in Section~\ref{sec_vr_proof}. In Section~\ref{sec_deco}, we introduce the decision boundary to decode componentwise. The discrimination with respect to this boundary can be implemented via the  hyperplane logical decoder  (HLD) also presented in this section. It is proved that, under some assumptions, this boundary is a CPWL function with an exponential number of pieces. Finally, we show in Section~\ref{sec_complex_reduc} that this function can be computed at a reduced complexity via folding with deep neural networks, for some famous dense lattices. We also argue that the number of pieces to be considered for quasi-optimal decoding is reduced for MIMO lattices on the Gaussian channel, which makes the problem easier.



We summarize below the main contributions of the paper.

\begin{itemize}
\item We first state a new closest vector problem (CVP), where the point to decode is restricted to the fundamental parallelotope $\mathcal{P}(\B)$. See Problem~\ref{pb_CVP_P}. 
This problem naturally induces a new type of lattice basis reduction, where the corresponding basis is called Voronoi-reduced basis. See Definition~\ref{def_Voronoi-reduced}.
In Section~\ref{sec_vr_proof}, we prove that some famous dense lattices admit a Voronoi-reduced basis. We also show that it is easy to get quasi-Voronoi-reduced bases for random MIMO lattices up to dimension $n=12$.
\item A new paradigm to address the CVP problem in  $\mathcal{P}(\B)$ is presented.
We introduce the notion of decision boundary in order to decode componentwise in $\mathcal{P}(\B)$. This decision boundary partition $\mathcal{P}(\B)$ into two regions. The discrimination of a point with respect to this boundary enables to decode.
The hyperplane logical decoder (HLD, see Algorithm~\ref{alg_hld}) is a brute-force algorithm which computes the position of a point with respect to this decision boundary. The HLD can be viewed as a shallow neural network.
\item In Section~\ref{sec_complex_ana}, we show that the number of affine pieces in the decision boundary grows exponentially with the dimension for some basic lattices such as $A_n$, $D_n$, and $E_n$ (see e.g. Theorem~\ref{theo_nbReg_Lin}). This induces both a HLD of exponential complexity and a shallow (one hidden layer) neural network of exponential size (Theorem~\ref{theo_shallow}).
\item  In Section~\ref{sec_fold}, in order to compute the decision boundary function in polynomial time,  the folding strategy is utilized (see Theorems~\ref{theo_An_linear}-\ref{theo_En_folding} for new results of folding applied to lattices). The folding strategy can be naturally implemented by a deep neural network.
\item Regarding less structured lattices such as those considered in the scope of MIMO, we argue that the decoding problem on the Gaussian channel, to be addressed by a neural network, is easier compared to decoding dense lattices (in low to moderate dimensions). Namely, only a small fraction of the total number of pieces in the decision boundary function should be considered for quasi-optimal decoding. As a result, smaller shallow neural networks can be considered for random MIMO lattices, which makes the training easier and the decoding complexity reasonable.
\end{itemize}

\section{Preliminaries}
\label{sec_prele}

This section is intended to introduce the notations for readers with a sufficient background in lattice theory.
It is also useful as a short introduction to lattices for newcomers to whom we suggest reading chapters 1-4 in \cite{Conway1999}.
Additional details on all elements of this section are found in \cite{Conway1999} and \cite{Forney1988}. 

\noindent \textbf{Lattice.} 
A lattice $\Lambda$ is a discrete additive subgroup of $\R^n$.
For a rank-$n$ lattice in $\R^n$, the rows of a $n\times n$ generator matrix $G$ constitute
a basis of $\Lambda$ and any lattice point $x$ is obtained via $x=z \cdot G$, where $z \in \Z^n$.
The Gram matrix is $\Gamma=G\cdot G^T=(GQ) \cdot (GQ)^T$, 
where $Q$ is any $n \times n$ orthogonal matrix.
All bases defined by a Gram matrix are equivalent modulo rotations and reflections.
A lower triangular generator matrix is obtained from the Gram matrix 
by Cholesky decomposition \cite[Chap.~2]{Cohen1996}.
For a given basis $\mathcal{B}=\{ g_i \}_{i=1}^n$ forming the rows of $G$,
the fundamental parallelotope of $\Lambda$ is defined by
\begin{equation}
\label{equ_PB}
\mathcal{P}(\mathcal{B}) = \{ y \in \R^n : y=\sum_{i=1}^{n}\alpha_{i}g_{i}, \ 0 \leq \alpha_{i} < 1  \}.
\end{equation}
The Voronoi region of $x$ is:
\begin{equation}
\mathcal{V}(x)=\{ y \in \R^n : \|y-x\| \le \|y-x'\|, \forall x' \neq x,  \ x,x' \in \Lambda  \}.
\end{equation}
A Voronoi facet denotes a subset of the points
\begin{equation}
\{ y \in \R^n : \|y-x\| = \|y-x'\|, \forall x' \neq x,  \ x,x' \in \Lambda  \},
\end{equation}
which are in a common hyperplane. \\
$\PPB$ and $\mathcal{V}(x)$ are fundamental regions of the lattice: one can perform a tessellation of $\R^n$ with these regions.
The fundamental volume of $\Lambda$ is $\vol(\mathcal{V}(x))=\vol(\mathcal{P}(\mathcal{B}))=|\det(G)|$.

The minimum Euclidean distance of $\Lambda$ is $d(\Lambda)=2\rho(\Lambda)$, where $\rho(\Lambda)$ is the packing radius.
The nominal coding gain $\gamma$ of a lattice $\Lambda$ is given by the following ratio \cite{Forney1988}
\begin{equation}
\gamma(\Lambda) = \frac{d^{2}(\Lambda)}{\text{vol}(\Lambda)^{\frac{2}{n}}}.
\end{equation}
A vector $v \in \Lambda$ is called Voronoi vector if the hyperplane \cite{Conway1992}
\begin{equation} 
\{y \in \mathbb{R}^{n} \ : \ y \cdot v = \frac{1}{2}||v||^2 \}
\end{equation}
has a non empty intersection with $\mathcal{V}(0)$. The vector is said relevant \cite[Chap.~2]{Conway1999} if the intersection includes a $(n-1)$-dimensional face of $\mathcal{V}(0)$.
We denote by $\tau_{f}$ the number of relevant Voronoi vectors, referred to as the Voronoi number in the sequel.
For root lattices \cite{Conway1999}, the Voronoi number is equal to the kissing number~$\tau$, defined as the number of points at a distance $d(\Lambda)$ from the origin.
For random lattices, we have $\tau_{f}=2^{n+1}-2$ (with probability 1) \cite{Conway1992}.
The set $\CN(x)$, for $x \in \Lambda$, denotes the set of lattice points having a common Voronoi facet with~$x$.
The theta series of $\Lambda$ is \cite[Chap.~2, Section~2.3]{Conway1999}
\begin{align}
\label{equ_theta_series}
\Theta_{\Lambda}(q)=\sum_{x \in \Lambda} q^{\|x\|^2}=\sum_{\ell=0}^\infty \tau_\ell q^\ell,
\end{align} 
where $\tau_\ell$ represents the number of lattice points of norm $\ell$ in $\Lambda$ (with $\tau_{4\rho^2}=\tau$).
 Moreover, a lattice shell denotes the set of $\tau_i$ lattice points at a distance $i$ from the origin.  
For instance, the first non-zero term of the series is $\tau q^{4\rho^2}$ as there are $\tau$ lattice points at a distance $d(\Lambda)$ from the origin. These lattice points constitute the first lattice shell.

For any lattice $\Lambda$ the dual lattice $\Lambda^{*}$ is defined as follows \cite[Chap.~2, Section~2.6, (65)]{Conway1999}:
\begin{align}
\Lambda^{*} = \{u \in \mathbb{R}^{n}: u \cdot x \in \mathbb{Z}, \ \forall \ x \in \Lambda \}.
\end{align}
Hence if $G$ is a square generator matrix for $\Lambda$, then $(G^{-1})^{T}$ is a generator matrix for $\Lambda^{*}$. Moreover, if a lattice is $equivalent$ to its dual, it is called a self-dual (or unimodular) lattice. For instance, $E_{8}$ and $\Lambda_{24}$ are self-dual. 

The main lattice parameters are depicted on Figure~\ref{fig_prez_lat}.
The black arrows represent a basis $\mathcal{B}$.  The shaded area is the parallelotope $\PPB$. The facets of the Voronoi region are shown in red. In this example, the Voronoi region has six facets generated by the perpendicular bisectors with six neighboring points.
The two circles represent the packing sphere of radius $\rho(\Lambda)$ and the covering sphere of radius $R(\Lambda)$ respectively, $R(\Lambda)>\rho(\Lambda)$.  The kissing number $\tau$ of this lattice is 2 and the Voronoi number $\tau_f$ is 6. In this case, all Voronoi vectors are relevant. \\

\begin{figure}
  \centering
	\includegraphics[width=0.45\columnwidth]{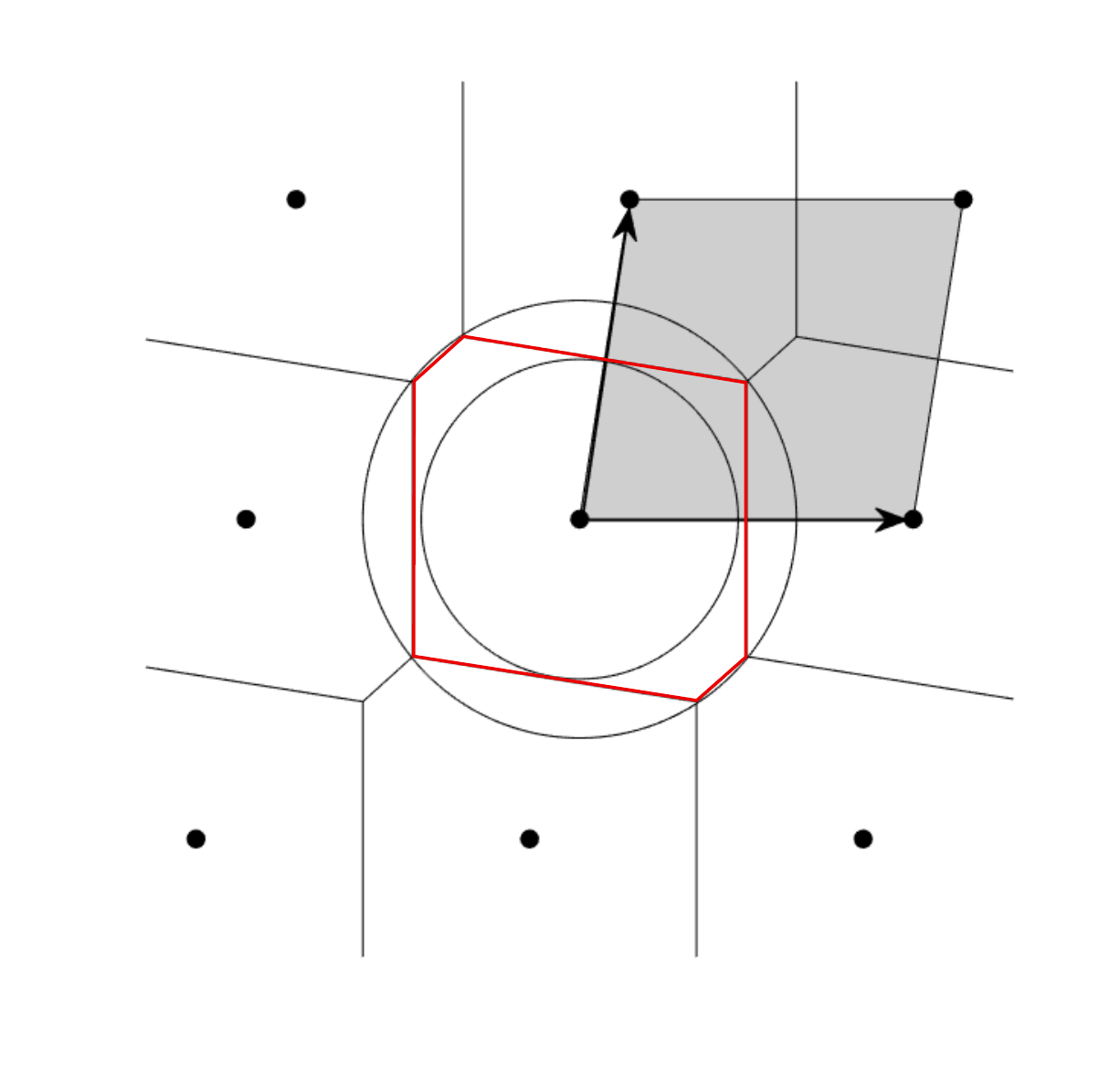}
  \caption{ Illustration of the main parameters of a lattice. }
  \label{fig_prez_lat}
\end{figure}


\noindent \textbf{Geometry.} 
Let $\PPBbar$ be the topological closure of $\PPB$ and $\PPBo$ the interior of $\PPB$. A $k$-dimensional element of $\PPBbar\setminus \PPBo$
is referred to as $k$-face of $\PPB$. There are $2^n$ 0-faces, called corners or vertices.
This set of corners is denoted $\mathcal{C}_{\PPB}$.
The subset of $\mathcal{C}_{\PPB}$ obtained with $z_i=1$ is
$\mathcal{C}^1_{i,\PPB}$ and $\mathcal{C}^0_{i,\PPB}$ for $z_i=0$. 
To lighten the notations, we shall sometimes use $\mathcal{C}^1_{\PPB}$ and $\mathcal{C}^0_{\PPB}$.

The remaining $k$-faces of $\PPB$, $k>0$, are parallelotopes. For instance, a $(n-1)$-face of $\PPB$,
say $\F_{i}$, is itself a parallelotope of dimension $n-1$ defined by $n-1$ vectors of $\B$. 
Throughout the paper, the term facet refers to a $n-1$-face.


Let $v_j$ denote the vector orthogonal to the hyperplane 
\begin{align}
\label{equ_vec_vj}
\{ y \in \R^n : \ y \cdot v_j - \bias_j =0  \}.
\end{align}
A polytope (or convex polyhedron) is defined as the intersection of a finite number of half-spaces 
(as in e.g. \cite{Coxeter1973})
\begin{align}
P_{o}=\{x \in \R^{n} : \ x \cdot A \leq b, \ A \in \R^{n \times m}, \ b \in \mathbb{R}^{m}\},
\end{align}
where the columns of the matrix $A$ are $m$ vectors $v_j$. \\
Since a parallelotope is a polytope, it can be alternatively defined from its bounding hyperplanes. Note that the vectors orthogonal to the facets of $\mathcal{P}(\mathcal{B})$ are basis vectors of the dual lattice. Hence, a second useful definition for $\mathcal{P}(\mathcal{B})$ is obtained through the basis of the dual lattice:
\begin{align}
\begin{split}
\mathcal{P}(\mathcal{B})= &\{x \in \mathbb{R}^{n} : \ x \cdot G^{-1} \geq 0 \ , \ \ x \cdot G^{-1}  \leq 1, \\
& G \in \mathbb{R}^{n \times n}\},
\end{split}
\end{align}
where each column vector of $G^{-1}$ is orthogonal to two facets of $\PPB$ and  $(G^{-1})^{T}$ is a basis for the dual lattice of $\Lambda$.



We say that a function $g : \mathbb{R}^{n-1} \rightarrow \mathbb{R}$ is CPWL 
if there exists a finite set of polytopes covering $\mathbb{R}^{n-1}$,
and $g$ is affine over each polytope. 
The number of pieces of $g$ is the number of distinct polytopes partitioning its domain.

$\vee$ and $\wedge$ denote respectively the maximum and the minimum operator. 
We define a convex (resp. concave) CPWL function formed by a set of affine functions
related by the operator $\vee$ (resp. $\wedge$). 
If $\{g_{k}\}$ is a set of $K$ affine functions,
the function $f=g_{1} \vee ... \vee g_{K}$ is CPWL and convex. \\

\noindent \textbf{Lattice decoding.} 
Optimal lattice decoding refers to finding the closest lattice point,
the closest in Euclidean distance sense.
This problem is also known as the CVP.
Its associated decision problem is NP-complete \cite[Chap.~3]{Micciancio2002}.

Let $x \in \Lambda$ and $\eta$ be a Gaussian vector where each component is i.i.d $\mathcal{N}(0,\sigma^2)$.  Consider $y \in \R^n$ obtained as
\small
\begin{align}
\label{eq_gauss_channel}
y = x + \eta.
\end{align}
\normalsize
Since this model is often used in digital communications, $x$ is referred to as the transmitted point, $y$ the received point, and the process described by \eqref{eq_gauss_channel} is called a Gaussian channel.
Given equiprobable inputs, maximum-likelihood decoding (MLD) on the Gaussian channel is equivalent to solving the CVP.
Moreover, we say that a decoder is quasi-MLD (QMLD) if $\mathcal{P}_{dec}(\sigma^2) \leq \mathcal{P}_{opt}(\sigma^2) \cdot (1 + \epsilon)$, where $\epsilon>0$.

In the scope of (infinite) lattices, the transmitted information rate and the signal-to-noise ratio based on the second-order moment are pointless. 
Poltyrev introduced the generalized
capacity \cite{Poltyrev1994} \cite{Zamir2014}, the analog of Shannon
capacity for lattices. The Poltyrev limit corresponds to a noise variance of
$\sigma^2_{max}=\vol(\Lambda)^{\frac{2}{n}}/(2 \pi e)$. 
The point error rate on the Gaussian channel is therefore evaluated 
with respect to the distance to Poltyrev limit, also called the volume-to-noise ratio (VNR) \cite{Zamir2014}, i.e. 
\begin{align}
\Delta= \frac{\sigma^2_{max}}{\sigma^2}.
\end{align} 
The reader should not confuse this VNR $\Delta$ with the standart notation of the lattice sphere packing density as in Section~1.2 of \cite{Conway1999}.
Using the union bound with the Theta series (see \eqref{equ_theta_series}), the MLD probability of error
per lattice point of lattice $\Lambda$ can be bounded from above by \cite[Chap.~3, Section~1.3, (19)]{Conway1999}
\begin{equation}
\label{eq_theta_s}
P_e(opt) \le P_e(ub),
\end{equation}
where \cite[Chap.~3, Section~1.4, (19) and (35)]{Conway1999}
\begin{align}
\label{equ_bound_prel}
P_e(ub) = \frac{1}{2}\Theta_{\Lambda}\left(\exp(-\frac{1}{8\sigma^2})\right) - \frac{1}{2}=\frac{1}{2} \sum_{x \in \Lambda \setminus \{ 0 \} } \exp\left( -\frac{\|x\|^2}{8\sigma^2} \right).
\end{align}
It can be easily shown that $\frac{\rho^2}{2\sigma^2}=\frac{\pi e \Delta\gamma}{4}$.
For $\Delta \rightarrow \infty$, the term $\tau q^{4\rho^2}$ dominates the sum in $\Theta_{\Lambda}(q)$ \cite[Chap.~3, Section~1.4, (21)]{Conway1999}. As proven in Appendix~\ref{App_proof_small_o}, \eqref{equ_bound_prel} becomes
\begin{align}
\label{eq_pe_opt}
P_e(ub)  =~\frac{\tau}{2} \exp(-\frac{\pi e \Delta\gamma}{4})
  + o\left(\exp(-\frac{\pi e \Delta\gamma}{4})\right).
\end{align}

Finally, lattices are often used to model MIMO channels \cite[Chap.~15]{Proakis2008}.
Consider a flat  quasi-static MIMO channel with  $n/2$ transmit antennas and $n/2$
receive antennas. 
Any complex matrix of size $n/2$ can be trivially transformed into a real matrix of size $n$.
Let $G$ be the  $n \times n$ real matrix representing the
channel coefficients.  
Let $z  \in \Z^n$ be  the channel
input,  i.e., $z$  is  the uncoded  information  sequence.  The  input
message yields  the output  $y \in  \R^n$ via  the standard  flat MIMO
channel equation,
\vspace{-1mm}
\begin{align*}
y = \underset{x}{\underbrace{z \cdot G}} + \eta.
\end{align*}
A MIMO lattice shall refer to a lattice generated by a matrix $G$ representing a MIMO channel. \\

\noindent \textbf{Neural networks.}
Given $n$ scalar inputs $y_1,...,y_n$ a perceptron performs the operation $\sigma(\sum_i w_i \cdot y_i)$ \cite[Chap.~1]{Goodfellow2016}.
The parameters $w_i$ are called the weights or edges of the perceptron and $\sigma(\cdot)$ is the activation function.
The activation function $\sigma(x)=\max(0,x)$ is called ReLU. A perceptron can alternatively be called a neuron. \\
Given the inputs  $y=(y_1,...,y_n)$, a feed-forward neural network simply performs the operation \cite[Chap.~6]{Goodfellow2016}:
\begin{align}
\label{eq_ff_nn}
\hat{z} = \sigma_d(... \sigma_2(\sigma_1(y \cdot G_1 + b_1)\cdot G_2 +b_2) \cdot ... \cdot G_d+b_d),
\end{align}
where:
\begin{itemize}
\item $d$ is the number of layers of the neural network.
\item Each layer of size $m_i$ is composed of $m_i$ neurons. The weights of the neurons in the $i$th layer are stored in the $m_i$ columns of the matrix $G_i$. The vector $b_i$ represents $m_i$ biases. 
\item The activation functions $\sigma_i$ are applied componentwise.
\end{itemize}


\section{From the CVP in $ \mathbb{R}^n$ to the CVP in $\PPB$.}
\label{sec_deco_step}

It is well known in lattice theory that $\R^n$ can be partitioned as $\R^n=\bigcup_{x \in \Lambda} (\mathcal{P}(\mathcal{B})+x)$.
The parallelotope to which a point $y_{0} \in \R^n$ belongs is:
\begin{align}
y_{0} \in  \mathcal{P}(\mathcal{B})+x,
\end{align}
with
\begin{align}
	 x = \lfloor y_0G^{-1} \rfloor \cdot G,
\end{align}
where the floor function $\lfloor \cdot \rfloor$ is applied componentwise. 
This floor function should not be confused with the round function $\lfloor \cdot \rceil$. 
Hence, a translation of $y_0$ by $-x$ results in a point $y$ located in the fundamental parallelotope $\PPB$. 
An instance of this operation is illustrated on Figure~\ref{fig_CVP_P}. 
The point $y \in \PPB + x$ is translated in the fundamental parallelotope (in red on the figure) to get the point $y' \in \PPB$. 
The blue arrows represent a basis $\B$ of the lattice. \\
As a result, a point $y_{0}$ to decode (e.g. in the scope of the CVP) can be processed as follows: 

\noindent {\bf Parallelotope-Based Decoding. } 
\begin{itemize}
\itemsep=-1mm
\item Step 0: a noisy lattice point $y_0=x+\eta$ is observed, where  $x \in \Lambda$ and $\eta \in \R^n$ is any additive noise.
\item Step 1: compute $t=\lfloor y_0 \cdot G^{-1} \rfloor$ and get $y=y_0-t\cdot G$ which now belongs to $\mathcal{P}(\mathcal{B})$. 
\item Step 2: find $\hz$, where $\hx=\hz \cdot G$ is the closest lattice point to $y$.
\item Step 3: the closest point to $y_0$ is $\hx_0=\hx+t \cdot G$. 
\end{itemize}

Since Step 1 and Step 3 have negligible complexity, an equivalent problem to the CVP (in $\R^n$) is the CVP in $\PPB$ (Step 2 above), which can simply be stated as follows.

\begin{problem}(CVP in $\PPB$)
\label{pb_CVP_P}
Given a point $y \in \PPB$, find the closest lattice point $\hat{x}= \hat{z} \cdot G$.
\end{problem}

 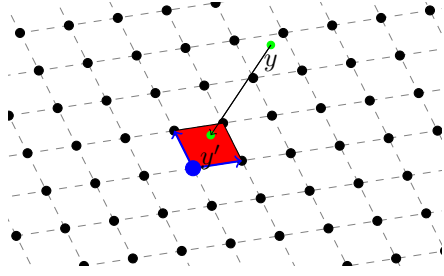
\begin{figure}
	\centering
          \begin{tikzpicture}[scale=0.25]
            \coordinate (Origin)   at (0,0);
            \coordinate (XAxisMin) at (-3,0);
            \coordinate (XAxisMax) at (5,0);
            \coordinate (YAxisMin) at (0,-2);
            \coordinate (YAxisMax) at (0,5);
            
            \clip (-3,-2) rectangle (20cm,12cm); 
            
            \pgftransformcm{0.5}{-1}{1.3}{0.2}{\pgfpoint{0cm}{0cm}} 
            
            \coordinate (Bone) at (0,2);
            \coordinate (Btwo) at (2,-2);
            \draw[style=help lines,dashed] (-20,-20) grid[step=2cm] (20,20);

            \foreach \x in {-10,-9,...,10}{
              \foreach \y in {-10,-9,...,10}{ 
                \node[draw,circle,inner sep=1.2pt,fill] at (2*\x,2*\y) {};
              }
            }

		\node[draw,circle,inner sep=1pt,fill,green] at (-7.5,11.3)  {};
		\draw  (-7.5,11.3)  node[below] {$y$};
		\node[draw,circle,inner sep=1pt,fill,green] at (-3.5,7.3)  {};
		\draw [->] (-7.5,11.3) --  (-3.5,7.3) ;
		\draw (-3.5,7.3)  node[below] {$y'$};

	\draw[fill=red]  (-2,6) -- (-2,8) -- (-4,8) -- (-4,6) -- cycle;

	  \node[draw,circle,inner sep=2pt,fill,blue] at (-2,6) {};
	 \draw[blue,->,thick](-2,6) -- (-2,8);
            \draw[blue,->,thick](-2,6) -- (-4,6);
	\node[draw,circle,inner sep=1pt,fill,green] at (-3.5,7.3)  {};
	\draw [->] (-7.5,11.3) --  (-3.5,7.3) ;
	\draw (-3.5,7.3)  node[below] {$y'$};
	

          \end{tikzpicture}

  	\caption{Translation of a noisy point into the fundamental parallelotope.}  
	 \label{fig_CVP_P}
     \end{figure}

\begin{remark}
\label{rem_ZF_1}
Consider a point $y=x + \eta$, where $\eta = \epsilon_1 g_1 + ... + \epsilon_n g_n$, $x \in \Lambda$, $0 \leq \epsilon_1, ... , \epsilon_n <1$, $g_1,...,g_n \in \B$.
Obviously, $y \in x + \PPB$. 
The well-known Zero-Forcing (ZF) decoding algorithm computes 
\begin{align}
\hat{z}=\lfloor y \cdot G^{-1} \rceil= \lfloor y_0 \cdot G^{-1} \rceil + x G^{-1}.
\end{align}
In other words, it simply replaces each $\epsilon_i$ by the
closest integer, i.e. 0 or 1.
The solution provided by this algorithm is one of the corners of the parallelotope $x + \PPB$.
\end{remark}

\begin{remark}
From a complexity theory view point, Problem~\ref{pb_CVP_P} is NP-hard. Indeed, since the above Steps 0, 1, and 3 are of polynomial complexity, the CVP, which is known to be NP-hard  \cite[Chap.~3]{Micciancio2002}, is polynomially reduced to Problem~\ref{pb_CVP_P}.
\end{remark}

\section{Voronoi-reduced lattice basis}
\label{sec_vr_proof}

\subsection{Voronoi- and quasi-Voronoi-reduced basis}

The natural question arising from Problem~\ref{pb_CVP_P} is the following: 
Is the closest lattice point to any point $y \in \PPB$  one of the corners of $\PPB$?
Unfortunately, as illustrated in Figure~\ref{fig_notVR}, this is not always the case.
The red arrows in the figure represent the basis vectors. The orange area in $\PPB$ belongs to the Voronoi region of the point $x=z \cdot G$, where $z=(-1,1)$ (in red on the figure). 
Since this lattice point is not a corner of $\PPB$, any point in this orange area, such as $y$, is not decoded to one of the corner of $\PPB$ (the four blue points on the figure).
Consequently, we introduce a new type of basis reduction.

\begin{figure}
\vspace{7mm}
\begin{center}
 \begin{tikzpicture}[scale = 1.4]

\def\maxxy{3.5} 
  \fill [black] (-0.5, -0.5) rectangle (1.7,\maxxy);

  \def\pts{}

    \xintFor* #1 in {\xintSeq {-1}{2}} \do{
		\xintFor* #2 in {\xintSeq {-5}{5}} \do{
      \pgfmathsetmacro{\ptx}{(#1)+0.5*(#2)} 
      \pgfmathsetmacro{\pty}{2*(#1)} 
	     \edef\pts{\pts, (\ptx,\pty)} 
  	}
}


   \xintForpair #1#2 in \pts \do{
    \edef\pta{#1,#2}
      \begin{scope}
	
        \xintForpair #3#4 in \pts \do{
          \edef\ptb{#3,#4}
          \ifx\pta\ptb\relax 
            \tikzstyle{myclip}=[];
          \else
            \tikzstyle{myclip}=[clip];
          \fi;
          \path[myclip] (#3,#4) to[half plane] (#1,#2);
        }
        \clip (-0.5,-0.5) rectangle (1.7,\maxxy); 
        \pgfmathsetmacro{\randhue}{rnd}
        \fill[white] (#1,#2) circle (4*\biglen); 
      \end{scope}
      \pgfresetboundingbox
}

           \foreach \x in {0,1,...,1}{
              \foreach \y in {-1,0,...,1}{
                \node[draw,circle,inner sep=1.2pt,fill] at (\x+\y*0.5,\x*2) {};
              }
            }
	\node[draw,circle,inner sep=1.2pt,fill] at (1,0) {};
	\node[draw,circle,inner sep=1.2pt,fill] at (-0,2) {};

	\fill[white] (0.78,1.1)--(0.78,2.9) -- (1.2,2.9)-- (1.2,1.1)-- (0.78,1.1);

	\fill[orange] (0.75,1)--(0.75,1.5) -- (0.5,1)-- (0.75,1);
			 \node[draw,circle,inner sep=1pt,fill,green] at (0.68,1.2)  {};
		\draw  (0.68,1.2)  node[below] {$y$}; 
	
	 \node[draw,circle,inner sep=2pt,fill,red] at (0.5,2)  {};
	  \draw  (0.5,2)  node[above] {\scriptsize $z = (-1,1)$};  

	\node[draw,circle,inner sep=3pt,fill,blue!50] at (0,0) {};
           \node[draw,circle,inner sep=2pt,fill,blue!50] at (0.5,0) {};
	  \node[draw,circle,inner sep=2pt,fill,blue!50] at (1.5,2) {};
	\node[draw,circle,inner sep=2pt,fill,blue!50] at (1,2) {};
	  \draw[red,->,thick](0,0) -- (1,2);
           \draw[red,->,thick](0,0) -- (0.5,0);
         \end{tikzpicture}
        \end{center}
	\centering
	\vspace{4mm}
  	\caption{Example of a non-VR basis.} 
	 \label{fig_notVR}
     \end{figure}

\begin{definition}
\label{def_Voronoi-reduced}
Let $\mathcal{B}$ be the $\Z$-basis of a rank-$n$ lattice $\Lambda$ in~$\R^n$.
$\mathcal{B}$ is said Voronoi-reduced if, for any point $y \in \mathcal{P}(\mathcal{B})$,
the closest lattice point $\hx$ to $y$ is one of the $2^n$ corners of $\mathcal{P}(\mathcal{B})$,
i.e. $\hx=\hz G$ where $\hz \in \{0, 1\}^n$.
\end{definition}
We will use the abbreviation {\em VR basis} to refer to a Voronoi-reduced basis.
Figure~\ref{fig_A2} shows the hexagonal lattice $A_2$, its Voronoi regions, and the fundamental parallelotope of the basis
$\mathcal{B}_1=\{v_1, v_2 \}$, where $v_1=(1, 0)$ corresponds to $z=(1, 0)$
and $v_2=(\frac{1}{2}, \frac{\sqrt{3}}{2})$ corresponds to $z=(0, 1)$.
$\mathcal{P}(\mathcal{B}_1)$ is partitioned into 4 parts included in the Voronoi regions of its corners.
$\mathcal{P}(\mathcal{B}_2)$ has 10 parts involving 10 Voronoi regions.
The small black dots in $\PPB$ represent Gaussian distributed points in $\R^2$ that have been
aliased in $\PPB$.
The basis $\mathcal{B}_1$ is Voronoi-reduced because
\begin{align}
\mathcal{P}(\mathcal{B}_1) \subset \mathcal{V}(0) \cup \mathcal{V}(v_1) \cup \mathcal{V}(v_2) \cup\mathcal{V}(v_1+v_2).
\end{align}

Lattice basis reduction is an important field in Number Theory. In general, a lattice basis is said to be of good quality when the basis vectors are relatively short and close to being orthogonal.
We cite three famous types of reduction to get a good basis:
Minkowski-reduced basis, Korkin-Zolotarev-reduced (or Hermite-reduced) basis, and $LLL$-reduced basis
for Lenstra-Lenstra-Lov\'asz \cite{Micciancio2002}\cite{Cohen1996}. A basis is said to be $LLL$-reduced if it has been processed by the $LLL$ algorithm. This algorithm, given an input basis of a lattice, outputs a new basis in polynomial time where the new basis respects some criteria, see e.g. \cite{Cohen1996}. The $LLL$-reduction is widely used in practice to improve the quality of a basis. 
The basis $\mathcal{B}_1$ in Figure~\ref{fig_A2} is Minkowski-, KZ-, and Voronoi-reduced.

Note that this new notion ensures that the 
closest lattice point $\hx$ to any point $y \in \PPB$ is obtained with a vector $\hz$ having only binary values (where $\hx = \hz \cdot G$). 
As a result, it enables to use a decoder with only binary outputs to optimally solve the CVP in $\PPB$.

\begin{figure}[!t]
\centering
\includegraphics[scale=0.65]{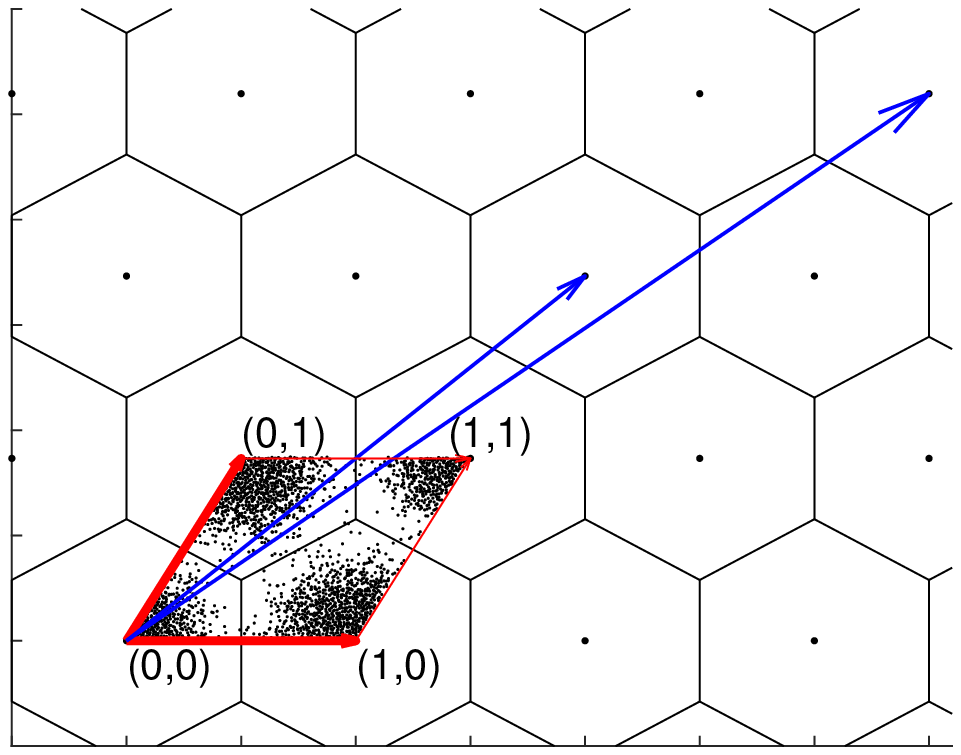}
\caption{Voronoi-reduced basis $\mathcal{B}_1$ for $A_2$ (in red) and a non-reduced basis $\mathcal{B}_2$ (in blue).
\label{fig_A2}}
\end{figure}

Unfortunately, not all lattices admit a VR basis (see the following subsection).
Nevertheless, as we shall see in the sequel, some famous dense lattices listed in \cite{Conway1999} admit a VR basis. Also, in some cases the $LLL$-reduction leads to a $quasi$-VR basis. 
Indeed, the strong constraint defining a VR basis can be relaxed as follows.

\begin{definition}
Let $\mathcal{C}(\mathcal{B})$ be the set of the $2^n$ corners of $\PPB$. Let $O$ be the subset of $\PPB$
that is covered by Voronoi regions of points not belonging to $\CB$, namely
\begin{equation}
\label{eq_reg_not_vr}
O=\PPB \setminus \left(\PPB \bigcap \left(\bigcup_{x \in \CB} V(x)\right) \right).
\end{equation}
The basis $\mathcal{B}$ is said quasi-Voronoi-reduced if $\vol(O) \ll \vol(\Lambda)$. 
\end{definition}

Let 
\begin{align}
d^2_{OC}(\mathcal{B})=\min_{x \in O, x' \in \CB} \|x-x'\|^2
\end{align}
be the minimum squared Euclidean distance between $O$ and $\CB$.
The sphere packing structure associated to $\Lambda$ guarantees that $d^2_{OC} \ge \rho^2$. 
Let $P_e(\mathcal{B})$ be the probability of error for a decoder
where the closest corner of $\PPB$ to $y$ is decoded. 
In other words, the space of solution for this decoder is restricted to $\mathcal{C}_{\PPB}$. 
The following lemma tells us
that a quasi-Voronoi-reduced basis exhibits quasi-optimal performance on a Gaussian channel
at high signal-to-noise ratio.
In practice, the quasi-optimal performance is also observed at moderate values of signal-to-noise ratio. 
\begin{lemma}
\label{lem_quasi-Voronoi-reduced}
The error probability on the Gaussian channel when decoding a lattice $\Lambda$ in $\PPB$ can be bounded from above as
\begin{align}
\begin{split}
  P_e(\mathcal{B}) \le & ~P_e(ub) +  \frac{\vol(O)}{\det(\Lambda)} \cdot (e\Delta)^{n/2} \cdot
  \exp(-\frac{\pi e \Delta\gamma}{4} \cdot \frac{d^2_{OC}}{\rho^2}),\label{equ_PeO}
  \end{split}
\end{align}
for $\Delta$ large enough and where $P_e(ub)$ is defined by \eqref{eq_pe_opt}.
\end{lemma}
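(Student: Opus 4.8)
\emph{Proof plan.} The plan is to separate the error of the corner-restricted decoder into the part that maximum-likelihood decoding (MLD) would also commit --- controlled by $P_e(ub)$ --- plus an ``excess'' part supported on a region of volume $\vol(O)$ sitting at distance at least $d_{OC}$ from the transmitted point, and then to control the excess part by a Gaussian-tail estimate.

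\textbf{Step 1 (reduction and error splitting).} By the lattice periodicity of the Parallelotope-Based Decoding procedure, $P_e(\mathcal{B})$ does not depend on the transmitted lattice point, so I assume $0$ is sent and $y_0=\eta$ with $\eta$ the Gaussian noise. Write $y=\eta-t\cdot G\in\PPB$ for the reduced point and $\hat{x}_0$ for the decoder output (the corner of $\PPB$ closest to $y$, shifted back by $t\cdot G$). I would first establish the event inclusion
\[
\{\hat{x}_0\neq 0\}\ \subseteq\ \{\eta\notin\mathcal{V}(0)\}\ \cup\ \big\{\eta\in\mathcal{V}(0)\ \text{and}\ y\in O\big\}.
\]
Indeed, if $\eta\in\mathcal{V}(0)$ then $0$ is the closest lattice point to $\eta$, hence $-t\cdot G$ is the closest lattice point to $y$; if in addition $-t\cdot G\in\CB$, then $-t\cdot G$ is also the closest \emph{corner} of $\PPB$ to $y$, so the decoder returns $-t\cdot G+t\cdot G=0$ and is correct. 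Thus a decoding error with $\eta\in\mathcal{V}(0)$ forces $-t\cdot G\notin\CB$, i.e.\ $y\in\mathcal{V}(-t\cdot G)$ with $-t\cdot G\notin\CB$, i.e.\ $y\in O$. Since $\Pr(\eta\notin\mathcal{V}(0))$ is exactly the MLD error probability, bounded by $P_e(ub)$ via \eqref{eq_theta_s} and (for $\Delta$ large) \eqref{eq_pe_opt}, a union bound gives
\[
P_e(\mathcal{B})\ \leq\ P_e(ub)\ +\ \Pr\big(\eta\in\mathcal{V}(0),\ y\in O\big).
\]

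\textbf{Step 2 (size and location of the excess region).} Reduction modulo $\Lambda$ is a measure-preserving bijection of $\mathcal{V}(0)$ onto $\PPB$ up to a null set (both are fundamental domains and the map is a piecewise translation), so the excess region $S:=\{\eta\in\mathcal{V}(0):y(\eta)\in O\}$ has Lebesgue measure $\vol(O)$. The key point is that every $\eta\in S$ satisfies $\|\eta\|\ge d_{OC}$: writing $\eta=y-p$ with $y=y(\eta)\in O$ and $p$ the lattice point closest to $y$, one has $\|\eta\|=\mathrm{dist}(y,\Lambda)$, and the sphere-packing structure of $\Lambda$ together with the defining property of $O$ (its points lie outside the Voronoi cells --- hence outside the packing balls --- of the corners $\CB$) pushes this distance up to at least $d_{OC}$; this is the same mechanism underlying the inequality $d^2_{OC}\ge\rho^2$ quoted above. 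Since the Gaussian density is radially decreasing, bounding $\Pr(\eta\in S)$ by $(\text{volume})\times(\text{supremum of the density on }S)$ yields
\[
\Pr\big(\eta\in\mathcal{V}(0),\ y\in O\big)\ \leq\ \vol(O)\cdot\frac{e^{-d^2_{OC}/(2\sigma^2)}}{(2\pi\sigma^2)^{n/2}}.
\]

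\textbf{Step 3 (constants) and the main obstacle.} It remains to rewrite this in the VNR parametrisation: from $\sigma^2_{max}=\vol(\Lambda)^{2/n}/(2\pi e)$ one gets $(2\pi\sigma^2)^{-n/2}=(e\Delta)^{n/2}/\det(\Lambda)$, and from $\rho^2/(2\sigma^2)=\pi e\Delta\gamma/4$ one gets $d^2_{OC}/(2\sigma^2)=\tfrac{\pi e\Delta\gamma}{4}\cdot\tfrac{d^2_{OC}}{\rho^2}$; substituting these into the previous display produces exactly the second term of \eqref{equ_PeO}. The routine ingredients are the union bound, the fundamental-domain volume identity, and the constant bookkeeping in Step~3. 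The step that requires genuine care --- and the main obstacle --- is the geometric claim in Step~2 that the excess region stays at Euclidean distance $\ge d_{OC}$ from the origin, which is where the packing geometry of $\Lambda$ relative to the corner set $\CB$ enters.
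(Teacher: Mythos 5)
Your overall strategy mirrors the paper's: split off the MLD error, bound it by $P_e(ub)$, and control the excess by a volume-times-density estimate; Steps~1 and~3 are correct (and more carefully stated than in the paper itself). The genuine gap is exactly the claim you flag as the main obstacle in Step~2, and the justification you offer does not close it. On your excess event $S=\{\eta\in\mathcal{V}(0),\ y\in O\}$ you correctly compute $\|\eta\|=\mathrm{dist}(y,\Lambda)$; but for $y\in O$ the nearest lattice point to $y$ is by definition \emph{not} a corner, so $\mathrm{dist}(y,\Lambda)=\mathrm{dist}(y,\Lambda\setminus\CB)$, whereas $d_{OC}$ is the distance from $O$ to the \emph{corner} set $\CB$. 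The packing argument you invoke only controls $\mathrm{dist}(y,\CB)$ (and $\mathrm{dist}(y,\CB)\ge d_{OC}$ is just the definition of $d_{OC}$); it says nothing about the distance to the non-corner point actually nearest to $y$. In fact the inequality $\|\eta\|\ge d_{OC}$ fails in general: if $y^*\in\overline{O}$ attains $d_{OC}$, then $\|y^*-c\|=\|y^*-p\|=d_{OC}$ for some $c\in\CB$ and some $p\in\Lambda\setminus\CB$ sharing a Voronoi facet, and moving $y^*$ slightly into $O$ toward $p$ keeps the point in $O$ while making $\mathrm{dist}(y,\Lambda)=\|y-p\|<d_{OC}$. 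Hence the supremum of the Gaussian density over $S$ exceeds $(2\pi\sigma^2)^{-n/2}\exp(-d_{OC}^2/(2\sigma^2))$, and your displayed bound in Step~2 is not established by the volume-times-supremum argument.

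The paper's proof takes a slightly different route at precisely this point: it writes the excess term as $\int_O (2\pi\sigma^2)^{-n/2}\exp(-\|x\|^2/(2\sigma^2))\,dx$, i.e.\ the Gaussian density centered at the transmitted point $0$ integrated over $O\subset\PPB$ itself, not over its preimage inside $\mathcal{V}(0)$. Since $0\in\CB$, every $x\in O$ satisfies $\|x\|\ge d_{OC}$ by the very definition of $d_{OC}$, and the stated bound follows. (The identification of the excess error probability with that particular integral is itself left informal in the paper, so your more explicit event decomposition is valuable --- but to salvage it you must either replace $d_{OC}$ by $\inf_{y\in O}\mathrm{dist}(y,\Lambda)$, which weakens the lemma, or supply an actual argument relating $\Pr(S)$ to the integral over $O$ centered at the transmitted corner. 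That comparison is the missing idea, not a routine verification.)
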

\begin{proof}
If $\mathcal{B}$ is Voronoi-reduced and the decoder works inside $\PPB$
to find the nearest corner, then the performance is given by $P_e(opt)$.\\
If $\mathcal{B}$ is quasi-Voronoi-reduced and the decoder only decides a lattice point from $\CB$,
then an error shall occur each time $y$ falls in $O$. We get
\begin{align}
\label{equ_error_term}
\begin{split}
P_e(\mathcal{B}) & \le P_e(opt) + P_e(O), \\
  &\le P_e(ub) + P_e(O).
  \end{split}
\end{align}
where
\begin{align*}
P_e(O) & =\idotsint_O \frac{1}{\sqrt{2\pi \sigma^2}^n} \exp(-\frac{\|x\|^2}{2\sigma^2}) \,dx_1 \dots dx_n\\
& \le \frac{1}{\sqrt{2\pi \sigma^2}^n} \exp(-\frac{d^2_{OC}}{2\sigma^2})~\vol(O)\\
& = \frac{\vol(O)}{\det(\Lambda)} \cdot (e\Delta)^{n/2} \cdot
  \exp(-\frac{\pi e \Delta\gamma}{4} \cdot \frac{d^2_{OC}}{\rho^2}).
\end{align*}
This completes the proof.
\end{proof}
\noindent

\subsection{Some examples}

\subsubsection{Structured lattices}
\label{sec_proofs_VR}

We first state the following three theorems on the existence of VR bases for some famous lattices. 
The proofs are provided in Appendix~\ref{App_proof_VR_bases}.

Consider a basis for the lattice $A_n$ with all vectors from the first lattice shell.
Also, the angle between any two basis vectors is $\pi/3$. Let $J_n$ denote the $n \times n$ all-ones matrix and $I_n$ the identity matrix.
The Gram matrix is
\begin{equation}
\label{eq_basis_An}
\Gamma_{A_n}=G \cdot G^T= J_n+I_n=
\left(
\begin{array}{cccccccc}
2 & 1 &1 &. . . &1 \\
1 &2& 1& . . . & 1 \\
1 &1& 2&  . . . & 1 \\
. & . & .& . .  . & . \\
1 & 1 & 1  & . . . & 2
\end{array}
\right).
\end{equation}

\begin{theorem}
\label{theo_An}
A lattice basis of $A_n$ defined by the Gram matrix~\eqref{eq_basis_An} is Voronoi-reduced.
\end{theorem}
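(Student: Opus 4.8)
The plan is to work with the standard realization of $A_n$ inside $\R^{n+1}$, where the sum-zero constraint does all the work. Take the generator whose rows are $g_i = e_i - e_0$, $i = 1,\dots,n$, with $e_0,\dots,e_n$ the canonical basis of $\R^{n+1}$; then $g_i \cdot g_j = 1 + \delta_{ij}$, so the Gram matrix is exactly $J_n + I_n$ as in \eqref{eq_basis_An}, and using this realization is legitimate because bases sharing a Gram matrix are equivalent modulo rotations/reflections. In these coordinates a point $y = \sum_{i=1}^n \alpha_i g_i \in \PPB$ has the form $y = \bigl(-\sum_{i=1}^n \alpha_i,\ \alpha_1,\dots,\alpha_n\bigr)$ with each $\alpha_i \in [0,1)$; a lattice point $x = zG$ has the form $\bigl(-\sum_i z_i,\ z_1,\dots,z_n\bigr)$ with $z \in \Z^n$; and the $2^n$ corners of $\PPB$ are precisely the lattice points whose last $n$ coordinates lie in $\{0,1\}$. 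Note also that $y$ and every point of $A_n$ lie in the hyperplane $H = \{v \in \R^{n+1} : \sum_{j=0}^n v_j = 0\}$.

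The key step is a ``spread bound'' on a closest lattice point. Let $\hx$ be any lattice point closest to $y$. For each pair $i \ne j$ in $\{0,\dots,n\}$, the vector $e_i - e_j$ belongs to $A_n$, so $\|y - \hx\|^2 \le \|y - (\hx + e_i - e_j)\|^2$; expanding gives $(y - \hx)\cdot(e_i - e_j) \le \tfrac12 \|e_i - e_j\|^2 = 1$, i.e.\ $(y-\hx)_i - (y-\hx)_j \le 1$. Hence all $n+1$ coordinates of $y - \hx$ lie in an interval of length at most $1$, while their sum is $0$ because $y - \hx \in H$.

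Now I would argue by contradiction that no coordinate $\hx_i$ with $i \in \{1,\dots,n\}$ can lie outside $\{0,1\}$. If $\hx_i \ge 2$, then $(y-\hx)_i = \alpha_i - \hx_i \le \alpha_i - 2 < -1$, and the spread bound forces every coordinate of $y-\hx$ to be $< 0$, contradicting that they sum to $0$. If $\hx_i \le -1$, then $(y-\hx)_i = \alpha_i - \hx_i \ge 1$, and the spread bound forces every coordinate of $y-\hx$ to be $\ge 0$; since they sum to $0$ they must all vanish, contradicting $(y-\hx)_i \ge 1$. Therefore $\hx_i \in \{0,1\}$ for every $i = 1,\dots,n$, so $\hz := (\hx_1,\dots,\hx_n) \in \{0,1\}^n$ and $\hx = \hz G$ is a corner of $\PPB$. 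As $\hx$ was an arbitrary closest point, the basis is Voronoi-reduced.

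I do not expect a genuine obstacle once the right setup is chosen; the only real trap is the setup itself. The tempting first attempt --- round each $\alpha_i - z_i$ back into $[0,1)$ coordinatewise and check that the Euclidean distance does not grow --- stumbles on the cross term in $\|wG\|^2 = \bigl(\sum_i w_i\bigr)^2 + \sum_i w_i^2$: the term $\sum_i w_i^2$ behaves well coordinatewise, but $\bigl(\sum_i w_i\bigr)^2$ need not, since rounding some coordinates up and others down can enlarge $|\sum_i w_i|$. The $\R^{n+1}$-embedding argument avoids this entirely. The one point deserving a line of care is the boundary case $\alpha_i = 0$, $\hx_i = -1$ (where $(y-\hx)_i = 1$ exactly) and, relatedly, non-uniqueness of the closest point; both are handled verbatim by the observation that coordinates that are all $\ge 0$ and sum to $0$ must all be $0$.
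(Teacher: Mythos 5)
Your proof is correct, and it takes a genuinely different route from the paper's. The paper argues facet by facet on the boundary of $\PPBbar$: it identifies $-g_1+\sum_{j=2}^n g_j$ as the nearest lattice point outside $\PPBbar$ to the hyperplane supporting the facet $\mathcal{F}_1$, notes that this point together with $\B\setminus\{g_1\}$ spans a simplex whose centroid is a hole of the lattice at distance $\alpha/(n+1)$ from $\mathcal{F}_1$, and concludes that the Voronoi region of a non-corner lattice point cannot reach into the parallelotope. You instead work in the standard sum-zero embedding of $A_n$ in $\R^{n+1}$ and bound the coordinates of an arbitrary closest point $\hx$ directly: the inequalities $(y-\hx)\cdot(e_i-e_j)\le 1$ furnished by the minimal vectors $e_i-e_j$, combined with the sum-zero constraint, force $\hx_i\in\{0,1\}$ for $i=1,\dots,n$. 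Every step checks out, including the reduction to this coordinate model (Voronoi-reducedness is invariant under the orthogonal map relating two bases with the same Gram matrix) and the identification of the lattice generated by $e_i-e_0$ with the full root lattice. What your approach buys is completeness and self-containedness: non-uniqueness of the closest point and the half-open boundary of $\PPB$ are handled for free, and no appeal to holes or to the global geometry of Voronoi regions is needed, whereas the paper's argument is shorter on the page but leaves more to the reader (why controlling one exterior point per facet suffices, and why the centroid computation bounds the entire Voronoi region). Your closing remark about the cross term $\bigl(\sum_i w_i\bigr)^2$ defeating naive coordinatewise rounding is also accurate.
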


Consider the following Gram matrix of $E_8$.
\begin{equation}
\label{eq_E8}
\Gamma_{E_8}=
\left(
\begin{array}{cccccccc}
  4  & 2  & 0  & 2  & 2  & 2  & 2  & 2 \\
  2  & 4  & 2  & 0  & 2  & 2  & 2  & 2 \\
  0  & 2  & 4  & 0  & 2  & 2  & 0  & 0 \\
  2  & 0  & 0  & 4  & 2  & 2  & 0  & 0 \\
  2  & 2  & 2  & 2  & 4  & 2  & 2  & 0 \\
  2  & 2  & 2  & 2  & 2  & 4  & 0  & 2 \\
  2  & 2  & 0  & 0  & 2  & 0  & 4  & 0 \\
  2  & 2  & 0  & 0  & 0  & 2  & 0  & 4
\end{array}  
\right).
\end{equation}

\begin{theorem}
\label{theo_E8}
A lattice basis of $E_8$ defined by the Gram matrix~\eqref{eq_E8} is Voronoi-reduced with respect to $\PPBo$.
\end{theorem}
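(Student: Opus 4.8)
The plan is to show that for every $y \in \PPBo$ (the interior of the fundamental parallelotope of the given basis of $E_8$), the closest lattice point lies in $\CB = \{0,1\}^8 G$. The natural strategy is a covering argument: exhibit explicitly that $\PPBo$ is contained in the union of the Voronoi regions of the $2^8 = 256$ corners, i.e. $\PPBo \subset \bigcup_{z \in \{0,1\}^8} \mathcal{V}(zG)$. Equivalently, I would prove the contrapositive in the spirit of Lemma~\ref{lem_quasi-Voronoi-reduced}: if a lattice point $x' = z'G$ with $z' \notin \{0,1\}^8$ were strictly closer to some $y \in \PPBo$ than all $256$ corners, derive a contradiction. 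First I would use the translation/symmetry structure of the parallelotope: $\PPB$ is centrally symmetric about its center $c = \tfrac12\sum_i g_i = \tfrac12(1,\dots,1)G$, and $E_8$ is invariant under $y \mapsto 2c - y$, so it suffices to analyze points $y$ closer to the "lower half" of corners, which cuts the casework roughly in half.

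The key reduction is to restrict attention to the \emph{relevant} lattice points: for a given $y \in \PPBo$, if $x' = z'G$ is closer to $y$ than every corner, then in particular it is closer than the corner $z'' \in \{0,1\}^8$ obtained by reducing $z'$ modulo $2$ componentwise (i.e. $z''_i = z'_i \bmod 2$); since $x'' = z''G$ differs from $x'$ by an element of $2\Lambda$, one gets a constraint forcing $z' - z''$ to be a "short" even vector, and then a finite check over the candidate difference vectors suffices. Concretely, I would enumerate, for each corner $z'' \in \{0,1\}^8$, the finitely many lattice points $x'$ such that $\mathcal{V}(x')$ could possibly intersect the sub-parallelotope of $\PPB$ "assigned" to $z''$ — these are governed by the $\tau_f = 240$ relevant Voronoi vectors of $E_8$ (the root system $E_8$). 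For each such candidate one checks a linear inequality (the perpendicular-bisector halfspace condition) against the vertices of the relevant face of $\PPB$, using convexity: a halfspace contains a polytope iff it contains all its vertices. Because the decoder's decision regions restricted to corners are themselves intersections of halfspaces, the whole claim reduces to verifying a finite (though large) list of inequalities among the $256$ corners and the $240$ relevant vectors.

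To make this tractable rather than a brute $256 \times 240$ table, I would exploit the automorphism group: the stabilizer of the basis (equivalently, of the Gram matrix \eqref{eq_E8}) acts on the corners of $\PPB$, and the Weyl group $W(E_8)$ acts transitively on the relevant vectors; combining these reduces the verification to a handful of orbit representatives. Alternatively — and this is likely the cleanest write-up — I would show that for each corner $v = zG$, $z \in \{0,1\}^8$, the sub-region $\{y \in \PPB : \|y-v\| \le \|y - v'\| \ \forall v' \in \CB\}$ is already contained in $\mathcal{V}(v)$, by checking that none of the relevant Voronoi vectors $r$ of $v$ (i.e. $v + r \in \CN(v)$ up to the first shell) "cuts into" that sub-region; this is where the specific numerology of the Gram matrix matters, because the corners must be positioned so that every relevant bisector of a non-corner lattice point is pushed outside $\PPB$.

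The main obstacle I anticipate is precisely the combinatorial bookkeeping in the last step: unlike $A_n$ (Theorem~\ref{theo_An}), where the basis is highly symmetric and the parallelotope decomposes transparently, the $E_8$ Gram matrix \eqref{eq_E8} is "lopsided" and the $256$ corners are not interchangeable, so one cannot avoid either a computer-assisted check or a careful orbit analysis under $\mathrm{Aut}(\Gamma_{E_8})$. A secondary subtlety is the restriction to $\PPBo$ rather than $\PPB$: points on the boundary $\PPBbar \setminus \PPBo$ can be equidistant from a corner and a non-corner lattice point (the bisector can touch a face of $\PPB$), so the interior hypothesis is genuinely needed and the inequalities above must be shown \emph{strict} on $\PPBo$ — I would handle this by noting that equality in the bisector condition confines $y$ to a hyperplane meeting $\PPB$ only in its boundary.
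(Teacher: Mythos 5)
Your proposal is a plan for a finite, symmetry-reduced verification rather than a proof: every step that actually carries the load (``enumerate the candidate difference vectors,'' ``verify a finite (though large) list of inequalities,'' ``a computer-assisted check or a careful orbit analysis'') is described but never executed. Moreover, two of the intermediate reductions are shakier than you present them. The mod-$2$ reduction argument does not obviously force $z'-z''$ into a short list: being closer to $y$ than the corner $z''G$ only bounds $\|x'-y\|$ by the distance to that one corner, which can be as large as the diameter of $\PPB$, so the set of candidate non-corner points whose Voronoi regions could meet $\PPB$ is controlled by the covering radius and the geometry of $\PPB$, not directly by the $240$ relevant vectors of a single corner. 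And your ``cleanest write-up'' variant --- showing that the nearest-corner cell of each corner $v$ is contained in $\mathcal{V}(v)$ by checking vertices --- requires enumerating the vertices of an intersection of $\PPB$ with up to $255$ half-spaces, which is itself a substantial computation you do not perform. So as written there is a genuine gap: the theorem is reduced to a computation whose feasibility is asserted but whose outcome is not established. (Your final remark about why the interior $\PPBo$ is needed is correct and matches the true situation: the Voronoi regions of the nearest outside points are tangent to the facets of $\PPBbar$.)

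The paper avoids all enumeration with a short structural argument that you may find instructive. One first shows (for any lattice whose basis and whose dual basis both consist of minimal vectors) that the nearest lattice point outside $\PPBbar$ to a bounding hyperplane $H_{\mathcal{F}_i}$ is $-g_i$ up to translations parallel to the facet, and that its distance to $H_{\mathcal{F}_i}$ equals $|g_i\cdot u_i|/\|u_i\| = 1/d(\Lambda^*)$, where $u_i$ is the dual basis vector orthogonal to $\mathcal{F}_i$; for the self-dual $E_8$ with $\gamma=2$ this distance is exactly $\rho(E_8)$. Since the covering radius exceeds $\rho$, this alone is not enough, so the second step shows that $H_{\mathcal{F}_i}$ is a reflection hyperplane of $E_8$: the mirror image $s_{u_i}(-g_i)$ is again a lattice point (verified by checking integrality of its inner products with the dual basis, using $\|u_i\|^2=2$). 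Hence $-g_i$ has a nearest neighbor on the other side of $H_{\mathcal{F}_i}$, its Voronoi region has a facet lying exactly in $H_{\mathcal{F}_i}$, and by convexity $\mathcal{V}(-g_i)$ never enters $\PPBo$. This settles the theorem in a few lines, explains why the interior hypothesis is needed (tangency on the boundary), and simultaneously yields the negative result for $\Lambda_{24}$, where $\gamma=4$ makes the corresponding distance only $\rho/2$. If you want to salvage your approach, you must either actually run and report the finite check, or replace it with an argument of this duality/reflection type.
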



\begin{theorem}
\label{no_VR_24}
There exists no Voronoi-reduced basis for $\Lambda_{24}$.
\end{theorem}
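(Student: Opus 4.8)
The plan is to argue by contradiction. Suppose $\mathcal{B}=\{g_1,\dots,g_{24}\}$ is a Voronoi-reduced basis of $\Lambda_{24}$, with generator matrix $G$ and parallelotope $\mathcal{P}:=\mathcal{P}(\mathcal{B})$. Normalize so that $d(\Lambda_{24})=2$; then $\rho(\Lambda_{24})=1$, and by unimodularity $\det(\Lambda_{24})=\vol(\mathcal{P})=1$ and $\Lambda_{24}^{\ast}=\Lambda_{24}$. I will also invoke the classical fact that the covering radius is $R(\Lambda_{24})=\sqrt{2}$. The two ingredients I want to collide are: (i) $\mathcal{P}$ is forced to be extremely thin, and (ii) the decoding hypothesis forces the $2^{24}$ corners of $\mathcal{P}$ to $\sqrt{2}$-cover $\mathcal{P}$.

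First I would record the thinness. The facet of $\mathcal{P}$ on which the $i$-th coordinate in the basis $\mathcal{B}$ equals $1$, together with its parallel partner, is orthogonal to the $i$-th dual basis vector $g_i^{\ast}$, which lies in $\Lambda_{24}^{\ast}=\Lambda_{24}$ and hence satisfies $\|g_i^{\ast}\|^{2}\ge 4$. Thus the distance between these two facets is $1/\|g_i^{\ast}\|\le 1/2$, for every $i$; equivalently $\mathcal{P}$ sits inside a slab of width $\le 1/2$ in each of the $24$ linearly independent directions $g_i^{\ast}$, so any ball inscribed in $\mathcal{P}$ has radius $\le 1/4$. Since $\vol(\mathcal{P})=1$ while all $24$ widths $w_i:=1/\|g_i^{\ast}\|$ are $\le 1/2$ (note $\vol(\mathcal{P})\ge\prod_i w_i$ by Hadamard's inequality, so this bound is far from tight), the parallelotope is extremely skew: it must be stretched along some direction that is nearly orthogonal to all of the $g_i^{\ast}$. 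Separately, the decoding hypothesis gives: if $\mathcal{B}$ is Voronoi-reduced, then for every $y\in\mathcal{P}$ the nearest lattice point $\hat{x}$ is a corner, so $\|y-\hat{x}\|=d(y,\Lambda_{24})\le R(\Lambda_{24})=\sqrt{2}$; hence the $2^{24}$ corners $\sqrt{2}$-cover $\mathcal{P}$.

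The contradiction should come from the fact that a unit-volume parallelotope this skew contains an interior point at distance $>\sqrt{2}$ from all of its corners. The natural candidate is the center $c_0=\tfrac12\sum_i g_i$, whose distance to the corner $zG$ equals $\tfrac12\|\sum_i(1-2z_i)g_i\|$; as $z$ ranges over $\{0,1\}^{24}$ this is $\tfrac12\min_{\epsilon\in\{\pm1\}^{24}}\|\sum_i\epsilon_i g_i\|$, so it suffices to prove that $\min_{\epsilon\in\{\pm1\}^{24}}\epsilon\,\Gamma\,\epsilon^{\top}>8$, where $\Gamma=GG^{\top}$ is the Gram matrix — i.e. that no $\pm$-combination of all $24$ basis vectors has squared norm at most $8$. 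I expect this to be the main obstacle: one must turn the pointwise bounds $(\Gamma^{-1})_{ii}\ge 4$ together with $\det\Gamma=1$ into the lower bound $\min_{\epsilon}\epsilon\,\Gamma\,\epsilon^{\top}>8$. If the center alone does not suffice, the same scheme can be iterated over the centers of the lower-dimensional faces of $\mathcal{P}$ (the center of the $k$-face spanned by a $k$-subset $S$ of the basis contributes $\tfrac12\min_{\epsilon\in\{\pm1\}^{S}}\|\sum_{i\in S}\epsilon_i g_i\|$), or replaced by a volumetric packing argument: the $2^{24}$ pairwise disjoint packing balls $B(c,1)$ around the corners, intersected with the width-$\le 1/2$ slabs containing $\mathcal{P}$, cannot be reconciled with $\vol(\mathcal{P})=1$ together with the $\sqrt{2}$-covering property. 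Since the witnessing point (the center, or a face center) lies in $\PPBo$, this argument rules out a Voronoi-reduced basis even with respect to the interior, in contrast to Theorem~\ref{theo_E8}.
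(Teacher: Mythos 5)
Your two inputs are sound and the first one --- each width $w_i=1/\|g_i^{\ast}\|\le 1/2=\rho/2$ because the dual basis lies in $\Lambda_{24}^{\ast}=\Lambda_{24}$ --- is exactly the self-duality fact the paper's proof runs on. But the proposal is not a proof: the decisive step, exhibiting a point of $\PPB$ at distance greater than $\sqrt{2}$ from all $2^{24}$ corners, is only conjectured. You reduce it to $\min_{\epsilon\in\{\pm1\}^{24}}\epsilon\,\Gamma\,\epsilon^{\top}>8$ and acknowledge you do not know how to prove it; this inequality does not follow from $\det\Gamma=1$ and $(\Gamma^{-1})_{ii}\ge 4$, and it is genuinely delicate. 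Indeed $\sum_i\epsilon_ig_i$ ranges exactly over the $2^{24}$ vectors $\sum_ig_i-2c$ with $c$ a corner of $\PPB$, all lying in the single class $\sum_ig_i+2\Lambda_{24}$ of $\Lambda_{24}/2\Lambda_{24}$; by the classical classification of these classes into types $2$, $3$, $4$, every nonzero class contains a vector of squared norm at most $8$, so $d(c_0,\Lambda_{24})\le\sqrt{2}$ always holds, and your inequality holds if and only if no minimal-norm representative of that particular class is itself a $\pm1$ signed sum of the basis --- a basis-dependent arithmetic statement for which you give no argument. If it fails, the center is simply not a witness (its nearest lattice point \emph{is} a corner), and the fallbacks you mention (face centers, a volumetric packing count) are not carried out either.

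The paper argues locally at a facet instead of globally at the center. Its Lemma~\ref{lem_vr} shows that the exterior lattice point $-g_i$ lies at distance $1/\|u_i\|$ from the hyperplane $H_{\mathcal{F}_i}$ carrying the facet $\mathcal{F}_i$ --- this is literally your width $w_i$ --- and for a self-dual lattice with $\gamma(\Lambda_{24})=\gamma(\Lambda_{24}^{\ast})=4$ this distance is at most $\rho/2<\rho$. Hence the packing ball $B(-g_i,\rho)\subset\mathcal{V}(-g_i)$ crosses $H_{\mathcal{F}_i}$ into the slab containing $\PPB$, so the Voronoi region of the non-corner point $-g_i$ intrudes on $\PPB$ and the basis cannot be Voronoi-reduced; no covering-radius or signed-sum estimate is needed. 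What the packing-radius route buys is that it only uses the guaranteed ball \emph{inside} a Voronoi region, whereas your covering-radius route needs an upper bound on how far a specific point can be from the corners, which is exactly the part that resists. If you want to complete your own argument, move the witness from the center of $\PPB$ to a point just inside the facet $\mathcal{F}_i$ near the foot of the perpendicular from $-g_i$, and use your width bound there.
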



Unfortunately, for most lattices such theorems can not be proved. 
However, quasi-Voronoi-reduced bases can sometimes be obtained.
For instance, the following Gram matrix corresponds to a quasi-Voronoi-reduced basis of~$E_6$:

\begin{equation}
\Gamma_{E_6}=\left(
\begin{array}{cccccc}
  3 & \frac{3}{2} & 0 & 0 &\frac{3}{2} &\frac{3}{2} \\
  \frac{3}{2} & 3 & 0 & 0 &\frac{3}{2} &\frac{3}{2} \\  
  0 & 0 & 3 & \frac{3}{2} &\frac{3}{2} &\frac{3}{2} \\
  0 & 0 &\frac{3}{2} & 3 &\frac{3}{2}&\frac{3}{2} \\
  \frac{3}{2} & \frac{3}{2} &\frac{3}{2} &\frac{3}{2} & 3 & \frac{3}{2} \\
  \frac{3}{2} & \frac{3}{2} & \frac{3}{2} &\frac{3}{2} &\frac{3}{2} & 3
\end{array}
\right),
\end{equation}
with $\frac{d^2_{OC}}{\rho^2}=1.60$ (2dB of gain) and $\frac{\vol(O)}{\det(\Lambda)}=2.47\times 10^{-3}$.
The ratio of $P_e(ub)$ by  the second term of the right-hand side of (\ref{equ_PeO}) is about $10^{-4}$ at $\Delta=1$ (0 dB) then vanishes further for increasing $\Delta$. 

Obviously, the quasi-VR property is good enough to allow the application
of a decoder working with $\CB$.
If an optimal decoder is required, e.g. in specific applications such as lattice shaping and cryptography,
the user should let the decoder manage extra points outside $\CB$. For example,
the disconnected region $O$ (see \eqref{eq_reg_not_vr}) for $E_6$ defined by $\Gamma_{E_6}$ includes extra points
where $z_i \in \{ -1, 0, 1, +2\}$ instead of $\{ 0, 1\}$ as for $\CB$. 

\subsubsection{Unstructured MIMO lattices}

We investigate the VR properties of typical random MIMO lattices where the lattice is generated by a real matrix $G$ whose associated $n/2 \times n/2$ complex matrix has i.i.d. circular symetric  $\mathcal{CN}(0, 1)$ entries.
The basis obtained via this random process is  in general of poor quality.
As mentioned in the previous subsection, the standard and cheap process to obtained a basis of better quality is to apply the $LLL$ algorithm.
As a result, we are interested in the following question: Is a $LLL$-reduced random MIMO lattice quasi-Voronoi-reduced?

In the previous subsection, we highlighted that two specific quantities characterize the loss in the error probability on the Gaussian channel ($P_e(O)$, see Equation~$\eqref{equ_error_term}$)
due to non-VR parts of $\PPB$: Vol$(O)$ and $d_{OC}(\B)$. Unfortunately, for a given basis, these quantities are in general difficult to compute because it requires
sampling in a $n$-dimensional space. In fact, one can directly estimate the term $P_e(O)$, without evaluate numerically these two quantities via Monte Carlo simulations. It is simpler to directly compute $P_e(O)$. Noisy points $y_0=x+\eta$ are generated as in Step~0 of the parallelotope-based decoding in Section~\ref{sec_deco_step}, then the shifted versions of $\PPB$ containing $y_0$ are determined as in Step~1 of the parallelotope-based decoding, and finally $y_0$ points are decoded with an optimal algorithm. 
If the decoded point is not a corner of $\PPB$, i.e. $\hat{z} \not\in \{0,1\}^n$,
we declare an error. However, if the decoded point is a corner of $\PPB$ but it is different from the transmitted lattice point $x$, we also declare an error.   
This is shown by the curves with caption named CP (for Corner Points) in Figure~\ref{fig_perf_quasi_voro_reduced}.
Comparing the resulting performance with the one obtained with the optimal algorithm enables 
to assess the term $P_e(O)$ and observe the loss in the error probability on the Gaussian channel caused by the non-VR parts of $\PPB$.

The simulation results are depicted on Figure~\ref{fig_perf_quasi_voro_reduced}  
where we show performance loss, on the Gaussian channel, due to non-VR parts of $\PPB$ for $LLL$-reduced random MIMO lattices. 
For each point, we average the performance over 1000 random generator matrices $G$.
Up to dimension $n=12$, considering only the corners of $\PPB$ yields no significant loss in performance. 
We can conclude that, on average for the considered model, a $LLL$-reduced basis for $n \leq 12$ is quasi-VR.
However, for larger dimensions, the loss increases and becomes significant.
On the figure, we also added the performance of the dense lattice $\Lambda_{16}$ (also called Barnes-Wall lattice in dimension 16 \cite[Chap.~4]{Conway1999}) for comparison. Obviously, the basis considered in not VR.

\begin{figure}
\begin{minipage}{.4\textwidth}
    \centering
    \includegraphics[scale=0.6]{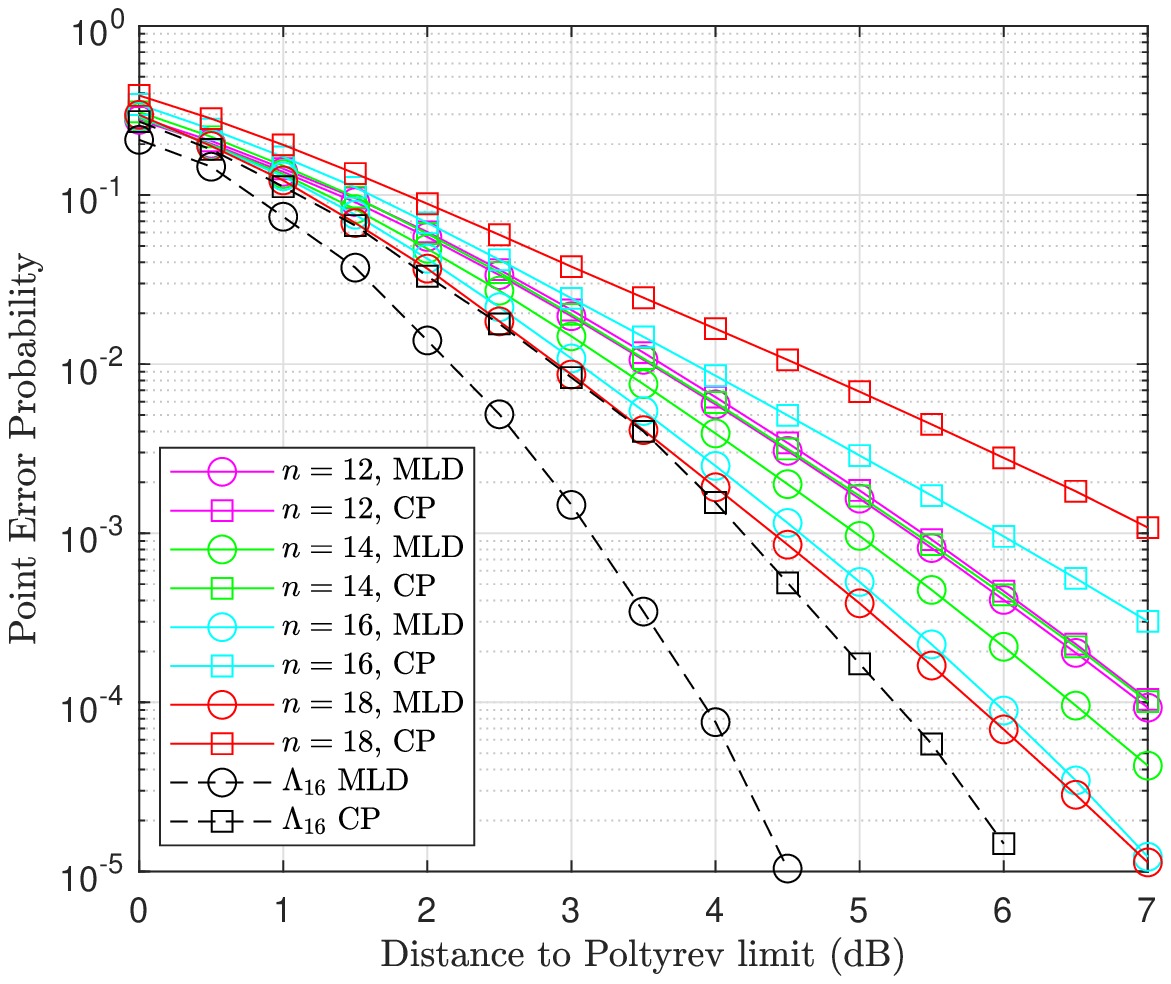}
     \caption{Assessment of the performance loss due to non-VR parts of $\PPB$.}
     \label{fig_perf_quasi_voro_reduced}
\end{minipage}%
\hspace{8mm}
\begin{minipage}{.4\textwidth}
    \centering
    \includegraphics[scale=0.6]{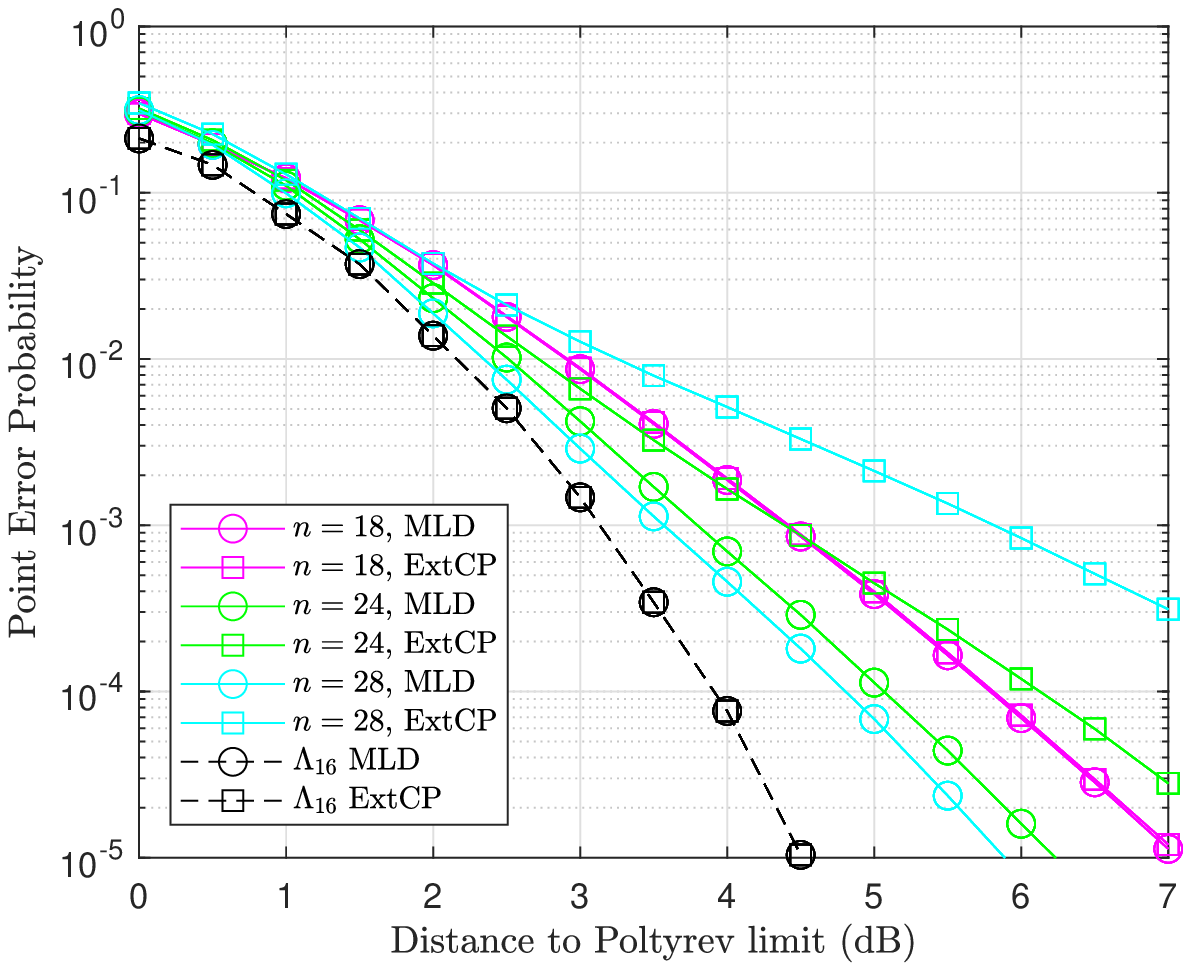}
     \caption{Assessment of the performance loss with the extended corner points.}
     \label{fig_perf_quasi_voro_reduced_ExtCP}
\end{minipage}
\end{figure}

%
Figure~\ref{fig_perf_quasi_voro_reduced_ExtCP} shows the performance of a decoder with extended corner points (ExtCP) versus the maximum-likelihood decoder (MLD).
The VR concept assumes $z_i \in \{0,1 \}$. Here, the ExtCP decoder looks for the nearest lattice point slightly beyond the corners of $\PPB$ by considering $z_i \in \{-1,0,1,2\}$. 
This illustrates that the VR notion can be extended to consider $z_i$ values belonging to a larger set.

In summary,  the VR approximation can be made for a $LLL$-reduced random MIMO lattice up to dimension 12 (6 antennas)
and the extended corner-points decoding is quasi-optimal up to dimension 18 (9 antennas).

\section{Finding the closest corner of $\PPB$ for decoding}
\label{sec_deco}

Thanks to the previous section, we know that the CVP in $\PPB$, with a VR basis, can be optimally solved with an algorithm having only binary outputs.
In this section, we show how each $z_i$ can be decoded independently  in $\PPB$ via a decision boundary.
Our main objective shall be to characterize this decision boundary.
The decision boundary enables to find, componentwise, the closest corner of $\PPB$ to any point $y \in \PPB$. 
This process exactly solves the CVP if the basis is VR.
This discrimination can be implemented with the hyperplane logical decoder (HLD).
It can also be applied to lattices admitting only a quasi-VR basis
to yield quasi-MLD performance in presence of additive white Gaussian noise. 
The complexity of the HLD depends on the number of affine pieces in the decision boundary,
which is exponential in the dimension.
More generally, we shall see that this exponential number of pieces induces shallow neural networks of exponential size.


\subsection{The decision boundary}

We show how to decode one component of the vector $\hz$.
Without loss of generality, if not specified, the integer coordinate to be decoded for the rest of this section is $\hat{z}_1$.
The process presented in this section should be repeated for each $z_i$, $1 \leq i \leq n$ to recover all the components of $\hat{z}$.
Given a lattice with a VR basis, exactly half of the corners of $\PPB$ are obtained with $z_1=1$ and the other half
with $z_1=0$. Therefore, one can partition $\PPB$ in two regions, where each region is:

\begin{align}
\mathcal{R}_{\mathcal{C}^{i}_{\PPB}} = \bigcup_{x \in \mathcal{C}^{i}_{\PPB}}  \mathcal{V}(x) \cap \PPB, 
\end{align}
with $i=1$ or $0$. The intersections between $\mathcal{R}_{\mathcal{C}^{1}_{\PPB}}$ and $\mathcal{R}_{\mathcal{C}^{0}_{\PPB}}$
define a boundary. 
This boundary splitting $\PPB$ into two regions $\mathcal{C}^0_{\PPB}$ 
and $\mathcal{C}^1_{\PPB}$, 
is the union of some of the Voronoi facets of the corners of $\PPB$.
Each facet can be defined by an affine function over a compact subset of $\mathbb{R}^{n-1}$,
and the boundary is locally described by one of these functions. 

Obviously, the position of a point to decode with respect to this boundary
determines whether $\hat{z}_1$ should be decoded to 1 or 0. For this reason,
we call this boundary the decision boundary. 
Moreover, the hyperplanes involved in the decision boundary are called boundary hyperplanes.
An instance of a decision boundary is illustrated on Figure~\ref{fig_exa_boundary} where the green point $y$, $\hat{z}_1$ should be decoded to $1$ because $y$ is above the decision boundary.

\begin{figure}
\begin{center}
\vspace{5mm}
 \begin{tikzpicture}[scale=2]
	\def\maxxy{1.6}
   \fill [black] (-0.48, -0.5) rectangle (1.7,\maxxy);

  \def\pts{}

    \xintFor* #1 in {\xintSeq {-2}{2}} \do{
		\xintFor* #2 in {\xintSeq {-2}{2}} \do{
      \pgfmathsetmacro{\ptx}{(#1)+0.5*(#2)} 
      \pgfmathsetmacro{\pty}{0.866*(#2)} 
	     \edef\pts{\pts, (\ptx,\pty)} 
  	}
}


   \xintForpair #1#2 in \pts \do{
    \edef\pta{#1,#2}
      \begin{scope}
	
        \xintForpair #3#4 in \pts \do{
          \edef\ptb{#3,#4}
          \ifx\pta\ptb\relax 
            \tikzstyle{myclip}=[];
          \else
            \tikzstyle{myclip}=[clip];
          \fi;
          \path[myclip] (#3,#4) to[half plane] (#1,#2);
        }
        \clip (-0.5,-0.5) rectangle (1.7,\maxxy); 
        \pgfmathsetmacro{\randhue}{rnd}
        \fill[white] (#1,#2) circle (4*\biglen); 
      \end{scope}
      \pgfresetboundingbox
}

           \foreach \x in {0,1,...,1}{
              \foreach \y in {0,1,...,1}{
                \node[draw,circle,inner sep=2pt,fill] at (\x+\y*0.5,\y*0.866) {};
              }
            }
            \draw[red,->,thick](0,0) -- (1,0);
           \draw[red,->,thick](0,0) -- (0.5,0.866);

	 \node[draw,circle,inner sep=2pt,fill,green] at (0.5,0.55)  {};
	  \draw  (0.5,0.52)  node[above] {$y$};

\node[draw,circle,inner sep=2pt,fill,blue] at (0,0) {};
			\draw   (0,0)  node[below] {$z_1 = 0 $};
        \node[draw,circle,inner sep=2pt,fill,blue] at (1,0) {};
			\draw (1,0)  node[below] {$z_1 = 0 $};
           \node[draw,circle,inner sep=2pt,fill,red] at (0.5,0.866) {};
			\draw  (0.5,0.866)  node[above] {$z_1 = 1 $};
	 \node[draw,circle,inner sep=2pt,fill,red] at (1.5,0.866) {};
			\draw  (1.5,0.866)  node[above] {$z_1 = 1 $};

	 \draw[orange,line width=1mm](0,0.58) -- (0.5,0.30);
	\draw[orange,line width=1mm](0.5,0.30) -- (1,0.58);
	\draw[orange,line width=1mm](1,0.58) -- (1.5,0.30);

    	\draw  (0.25,0.45)  node[below] {$h_1$};
	\draw  (0.75,0.45)  node[below] {$h_2$};
	\draw  (1.25,0.45)  node[below] {$h_3$};
         \end{tikzpicture}
        \end{center}
        \vspace{3mm}
\caption{The hexagonal lattice $A_2$ with a VR basis. 
The two upper corners of $\PPB$ (in red) are obtained with $z_1=1$ and the two other ones with
$z_1=0$ (in blue).
The decision boundary is illustrated in orange.}
\label{fig_exa_boundary}
      \end{figure}



\subsection{Decoding via a Boolean equation}

Let $\mathcal{B}$ be VR basis. The CVP in $\PPB$
is solved componentwise, by comparing the position of $y$ with the Voronoi facets partitioning $\PPB$. 
This can be expressed in the form of a Boolean equation,
where the binary (Boolean) variables are the positions with respect to the facets (on one side or another).
Therefore, one should compute the position of $y$
relative to the decision boundary via a Boolean equation to guess whether
$\hat{z}_1 = 0$ or $\hat{z}_1 = 1$.

Consider the orthogonal vectors to the hyperplanes containing the Voronoi facet of a point $x \in \CBONE$ and a point from $\CN(x) \cap \CBZERO$. These vectors are denoted by $v_j$ as in \eqref{equ_vec_vj}.
A Boolean variable $u_{j}(y)$ is obtained as:
\begin{equation}
\label{eq_bool}
u_{j}(y)=\text{Heav}(y \cdot v_j - \bias_j) \in \{ 0, 1\},
\end{equation}
where $\text{Heav}(\cdot)$ stands for the Heaviside function.
Since $\mathcal{V}(x)=\mathcal{V}(0)+x$, orthogonal vectors $v_j$ to all facets partitioning $\PPB$
are determined from the facets of $\mathcal{V}(0)$.

\begin{example}
\label{ex_HLD}
 Let $\hz=(\hz_1, \hz_2)$ and $y \in \PPB$ the point to be decoded. Given the red basis on Figure~\ref{fig_exa_boundary}, the first component $\hz_1$ is $1$ (true) if
$y$ is above hyperplanes $h_1$ and $h_2$ simultaneously or above $h_3$. 
Let $u_1(y),u_2(y),$ and $u_3(y)$ be Boolean variables, the state of which depends on the location of $y$ with respect to the hyperplanes $h_1,h_2,$ and $h_3$, respectively. 
We get the Boolean equation $\hz_1=u_1(y) \cdot u_2(y) + u_3(y)$, where $+$ is a logical OR and $\cdot$ stands for a logical AND.
\end{example}


Given a lattice $\Lambda \subset \R^n$ of rank $n$, Algorithm~\ref{alg_bool_eq} enables 
to find the Boolean equation of a coordinate $\hz_i$. 
It also finds the equation of each hyperplane needed to get the value of the Boolean variables involved in the equation. 
This algorithm can be seen as a ``training'' step to ``learn'' the structure of the lattice.
It is a brute-force search that may quickly become too complex as the dimension increases. However, we shall see in Section~\ref{sec_dec_bound} and \ref{sec_complex_ana} that these Boolean equations can be analyzed without this algorithm, via a study of the basis. 
Note that the decoding complexity does not depend on the complexity of this search algorithm.
\begin{algorithm}
\caption{Brute-force search to find the Boolean equation of a coordinate $\hat{z}_i$ for a lattice $\Lambda$}
\label{alg_bool_eq}
\begin{algorithmic}[1]
\STATE Select the $2^{n-1}$ corners of $\PPB$ where $z_{i}=1$ and all relevant Voronoi vectors of $\Lambda$.
 \FOR{each of the  $2^{n-1}$ corners where $z_{i}=1$} 
 \FOR{each relevant Voronoi vector of $\Lambda$} 
\STATE {
Move in the direction of the selected relevant Voronoi vector by half its norm + $\epsilon$ ($\epsilon$ being a small number).
} 
\IF{The resulting point is outside $\PPB$.}
\STATE Do nothing. //There is no decision boundary hyperplane in this direction.
\ELSE 
\STATE Find the closest lattice point $x'=z'G$ (e.g. by sphere decoding \cite{Agrell2002}).
\IF{$z'_{i}=1$}
\STATE Do nothing. //There is no decision boundary hyperplane in this direction.
\ELSE
\STATE Store the decision boundary orthogonal to this direction. //$z'_{i}=0$  
\ENDIF
\ENDIF
\ENDFOR
\FOR{each decision boundary hyperplane found (at this corner)}
\STATE Associate and store a Boolean variable to this hyperplane (corresponding to the position of the point to be decoded with respect to the hyperplane).
\ENDFOR
\STATE The Boolean equation of $\hz_i$ contains a Boolean AND of these variables.
\ENDFOR
\STATE The equation is the Boolean OR of the $2^{n-1}$ AND coming from all corners.
\end{algorithmic}
\end{algorithm}


\subsection{The HLD} 
The HLD is a brute-force algorithm to compute the Boolean equation provided by Algorithm~\ref{alg_bool_eq}.
The HLD can be executed via the three steps summarized in Algorithm~\ref{alg_hld}. 

\begin{algorithm}
\caption{HLD}
\label{alg_hld}
\begin{algorithmic}[1]
\STATE Compute the inner product of $y$ with the vectors orthogonal to the decision boundary hyperplanes.
\STATE Apply the Heaviside function on the resulting quantities to get its relative positions
under the form of Boolean variables.
\STATE Compute the logical equations associated to each coordinate.
\end{algorithmic}
\end{algorithm}

\subsubsection{Implementation of the HLD}

Since Steps 1-2 are simply linear combinations followed by activation functions,
these operations can be written as:
\begin{align}
\label{eq_nn_1}
l_1 = \sigma (y \cdot G_1 +b_1),
\end{align}
where $\sigma$ is the Heaviside  function, $G_1$ a matrix having the vectors $v_j$ as columns, and $b_1$ a vector of biases containing the $p_j$. Equation~\eqref{eq_nn_1} describes the operation performed by a layer of a neural network (see \eqref{eq_ff_nn}) .
The layer $l_1$ is a vector containing the Boolean variables $u_j(y)$.

Let $l_{i-1}$ be a vector of Boolean variables.
It is well known that both Boolean AND and Boolean OR can be expressed as:
 \begin{align*}
l_i = \sigma(l_{i-1} \cdot G_i +b_i),
\end{align*}
where  $G_i$ a matrix composed of 0 and 1, and $b_i$ a vector of biases.
Therefore, the mathematical expression of the HLD is:

\begin{equation}
z_1 = \sigma(\sigma(\sigma(y \cdot G_1 + b_1)\cdot G_2 + b_2) \cdot G_3 + b_3).
\label{equ_feed_fo}
\end{equation}
Equation~\eqref{equ_feed_fo} is exactly the definition of a feed-forward neural network (see \eqref{eq_ff_nn}) with three layers.
Figure~\ref{fig_A2_net} illustrates the topology of the neural network obtained
when applying the HLD to the lattice $A_2$. 
Heav($\cdot$) stands for Heaviside($\cdot$). The first part of the network computes the position of $y$ with respect to the boundary hyperplanes to get the variables $u_j(y)$. The second part (two last layers) computes the Boolean ANDs and Boolean ORs of the decoding Boolean equation.

%

\begin{figure}
\centering
\includegraphics[scale=0.835]{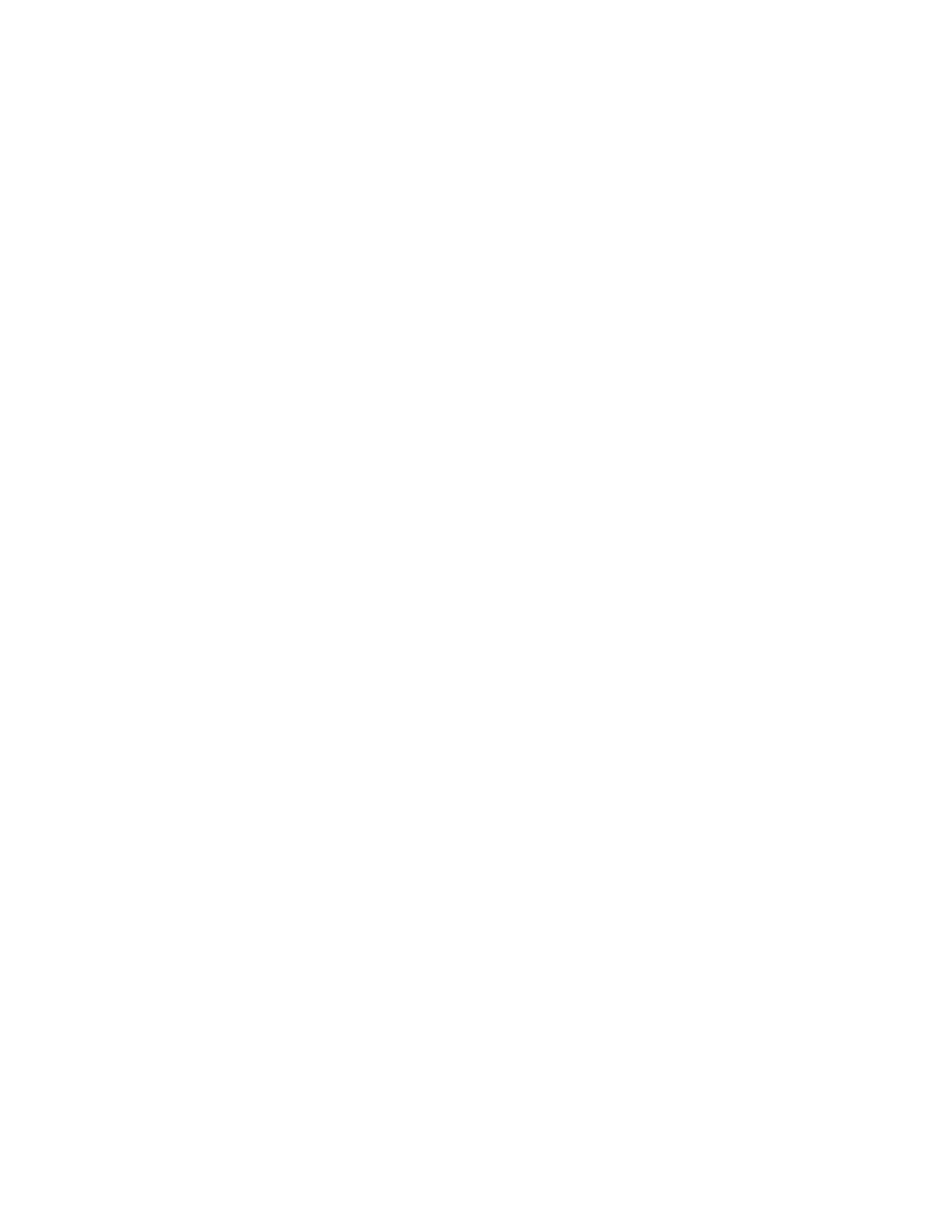}
\caption{Neural network performing HLD decoding on the first symbol $\hat{z}_{1}$ of a point in $\PPB$ for the lattice
$A_2$ (see Example~\ref{ex_HLD}). 
}
\label{fig_A2_net}
\end{figure}

\subsection{The decision boundary as a piecewise affine function}
\label{sec_dec_bound}

In order to better understand the decision boundary, we characterize it as a function rather than a Boolean equation. We shall see in the sequel that it is sometimes possible to efficiently compute this function and thus reduce the decoding complexity.

Let $\{ e_i \}_{i=1}^n$ be the canonical orthonormal basis of the vector space $\R^n$.
For $y \in \R^n$, the $i$-th coordinate is $y_i=y \cdot e_i$.
Denote $\tilde{y}=(y_2, \ldots, y_n) \in \R^{n-1}$ and let $\mathcal{H} = \{ h_j \}$ be the set of affine functions involved in the decision boundary.
The affine boundary function $h_{j}:\R^{n-1} \rightarrow \mathbb{R}$ is
\begin{equation}
\label{equ_hj}
h_{j}(\tilde{y})=y_{1}= \bigg( \bias_{j} -\sum_{k \neq 1} y_{k}v_{j}^{k} \bigg)/v_{j}^{1},
\end{equation}
where $v_{j}^{k}$ is the $k$th component of vector $v_{j}$.
For the sake of simplicity, in the sequel $h_{j}$ shall denote the function defined
in (\ref{equ_hj}) or its associated hyperplane depending on the context. 

\begin{theorem}
\label{th_func_VR_2}
Consider a lattice defined by a VR basis $\B=\{g_i\}_{i=1}^n$.
Let $\mathcal{H} = \{ h_j \}$ be the set of affine functions involved in the decision boundary.
Assume that $g_1^1>0$.
Suppose also that $x_1 > \lambda_1$ (in the basis $\{e_i\}_{i=1}^n$), $\forall x \in \CBONE$ and $\forall \lambda \in \CN(x) \cap \CBZERO$.
Then, the decision boundary is given by a CPWL function
$f:\R^{n-1} \rightarrow~\R$, expressed as
\begin{equation}
\label{eq_boundary_funct}
f(\tilde{y}) = \wedge_{m=1}^{M}\{\vee_{k=1}^{l_m}h_{m,k}(\tilde{y}) \},
\end{equation}
where $h_{m,k} \in \mathcal{H}$, $1 \leq l_m< \tau_f  $, and  $1 \leq M \leq 2^{n-1}$.
\end{theorem}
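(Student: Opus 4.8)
The plan is to build the CPWL function $f$ explicitly from the Boolean decoding equation and then verify it describes the decision boundary. First I would fix the coordinate $\hat z_1$ and recall from Algorithm~\ref{alg_bool_eq} that $\hat z_1 = 1$ is equivalent to a Boolean expression of the form $\bigvee_{m=1}^{M}\bigwedge_{k=1}^{l_m} u_{m,k}(y)$, where each $u_{m,k}(y) = \mathrm{Heav}(y\cdot v_{m,k} - p_{m,k})$ is the position indicator for a Voronoi facet hyperplane $h_{m,k}$ separating a corner $x\in\CBONE$ from a neighbour $\lambda\in\CN(x)\cap\CBZERO$. The outer OR ranges over the (at most $2^{n-1}$) corners with $z_1=1$, giving $M\le 2^{n-1}$; the inner AND at a given corner ranges over the boundary facets incident to that corner, and since each such facet corresponds to a relevant Voronoi vector, $l_m < \tau_f$. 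This accounts for the index ranges claimed in the statement.

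Next I would translate the logical operations into min/max of the affine functions $h_{m,k}$ of \eqref{equ_hj}. The assumption $g_1^1>0$ together with $x_1>\lambda_1$ for every separating pair ensures that the orthogonal vector $v_{m,k}$ can be oriented so that its first component $v_{m,k}^1$ is positive; under this orientation, ``$y$ is on the $z_1=1$ side of $h_{m,k}$'' (i.e.\ $u_{m,k}(y)=1$) is equivalent to $y_1 \ge h_{m,k}(\tilde y)$, because dividing the linear inequality $y\cdot v_{m,k} - p_{m,k}\ge 0$ by the positive number $v_{m,k}^1$ isolates $y_1$ on the correct side. Hence the Boolean AND $\bigwedge_k u_{m,k}(y)=1$ becomes $y_1 \ge \max_k h_{m,k}(\tilde y) = \vee_k h_{m,k}(\tilde y)$, and the Boolean OR over $m$ becomes $y_1 \ge \min_m \vee_k h_{m,k}(\tilde y)$. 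Therefore the locus where $\hat z_1$ flips from $0$ to $1$ is exactly the graph of $f(\tilde y)=\wedge_{m=1}^M\{\vee_{k=1}^{l_m} h_{m,k}(\tilde y)\}$, which is CPWL as a finite min of finite maxes of affine functions; over each polytope of the induced partition of $\R^{n-1}$ a single $h_{m,k}$ is active, matching the local description of the boundary noted before the theorem. The condition $g_1^1>0$ also guarantees that the $e_1$-direction is transverse to all boundary facets, so the boundary is genuinely a graph over $\tilde y$ rather than a vertical set, which is what makes $f$ well-defined.

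The main obstacle is the second step: carefully justifying the orientation argument and that no boundary hyperplane is ``vertical'' (has $v_{m,k}^1=0$), so that $h_{m,k}$ in \eqref{equ_hj} is well-defined, and that the $\ge$ versus $>$ conventions line up so the min/max expression picks out precisely $\mathcal{R}_{\CBONE}$ rather than its complement or a set differing by a measure-zero piece. This is where the hypotheses $g_1^1>0$ and $x_1>\lambda_1$ do the real work, and one must check that every facet appearing in the Boolean equation indeed separates points with differing $z_1$ in the $e_1$-direction with the correct sign; the VR hypothesis is used to ensure the decoder only ever needs the $2^{n-1}$ corners with $z_1=1$, so that the Boolean equation (and hence $f$) is complete. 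The remaining bookkeeping --- that min-of-max of affine functions is CPWL with the claimed domain partition, and the counting of $M$ and $l_m$ --- is routine once the geometric picture is pinned down.
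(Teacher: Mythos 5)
Your proposal is correct and follows essentially the same route as the paper's proof in Appendix~\ref{App_proof_func}: orient each facet normal $v_{m,k}$ so that the hypothesis $x_1>\lambda_1$ forces $v_{m,k}^1>0$ (ruling out vertical facets and the multi-valued situation), divide by $v_{m,k}^1$ to turn each AND into $y_1\ge\vee_k h_{m,k}(\tilde y)$ and the outer OR into $y_1\ge\wedge_m\vee_k h_{m,k}(\tilde y)$, and bound $M$ by the $2^{n-1}$ corners of $\CBONE$ and $l_m$ by the number of relevant Voronoi vectors of a single corner. The obstacles you flag (orientation, non-verticality, completeness of the Boolean equation under the VR hypothesis) are exactly the points the paper's argument addresses, so no gap remains.
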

The proof is provided in Appendix~\ref{App_proof_func}.
In the previous theorem, the orientation of the axes relative to $\B$
does not require $\{g_i\}_{i=2}^n$ to be orthogonal to $e_1$.
This is however the case for the next corollary, which involves a specific rotation satisfying the assumption of the previous theorem.
Indeed, with the following orientation, any point in $\CBZERO$ is in the hyperplane $\{y \in \R^n : \ y \cdot e_1 =0\}$ and has its first coordinate equal to 0, and $g_1^1>0$ (if it is negative, simply multiply the basis vectors by $-1$).

\begin{figure*}[h]
\centering
\begin{subfigure}{0.4\columnwidth}
 \includegraphics[width=1\columnwidth]{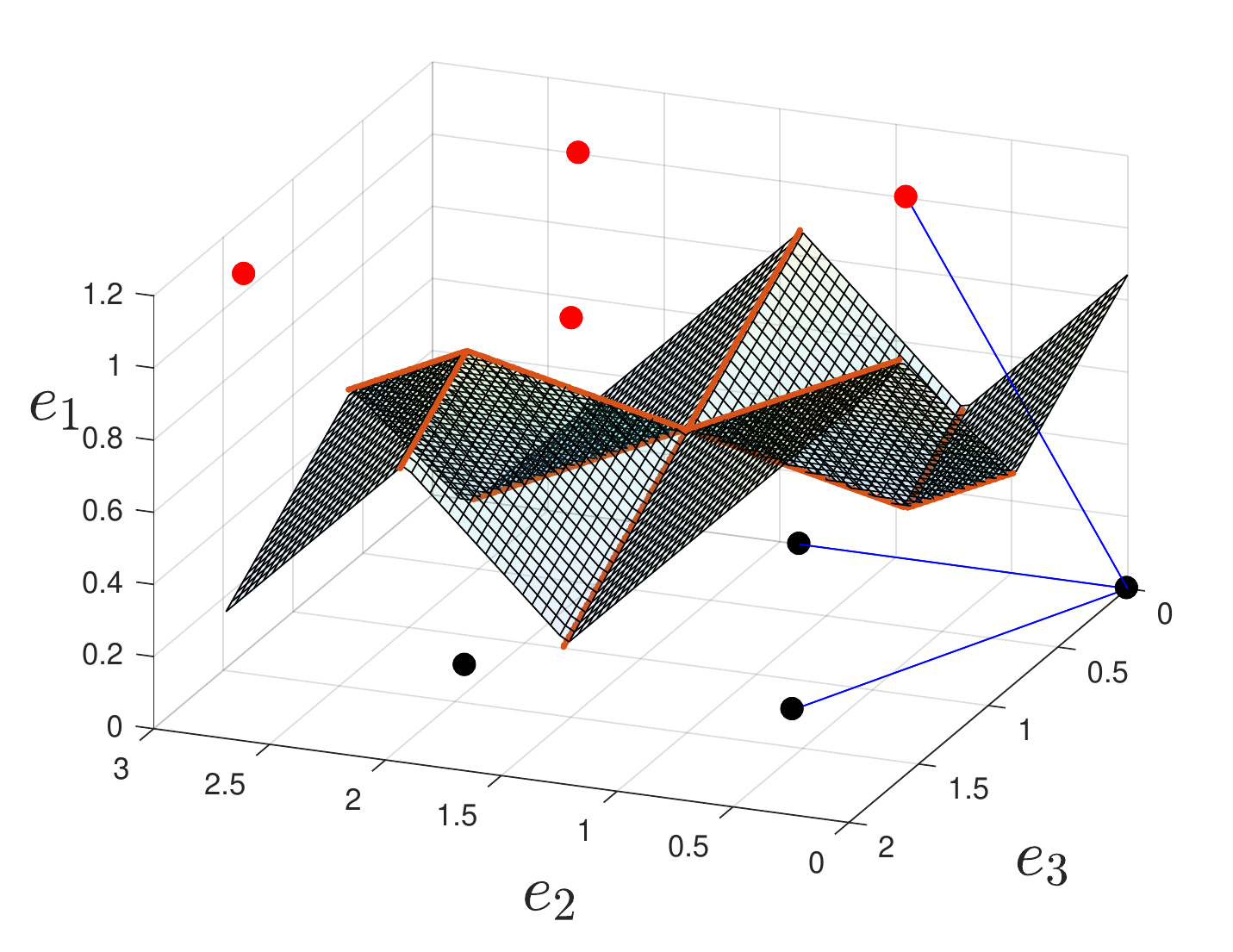}
\caption{The orientation of the basis satisfies the assumptions of Theorem~\ref{th_func_VR_2} and Corollary~\ref{th_func_VR}.}
\label{fig_rot_coro}
\end{subfigure}%
\hspace{6mm}
\begin{subfigure}{0.4\columnwidth}
\centering
\includegraphics[width=1\columnwidth]{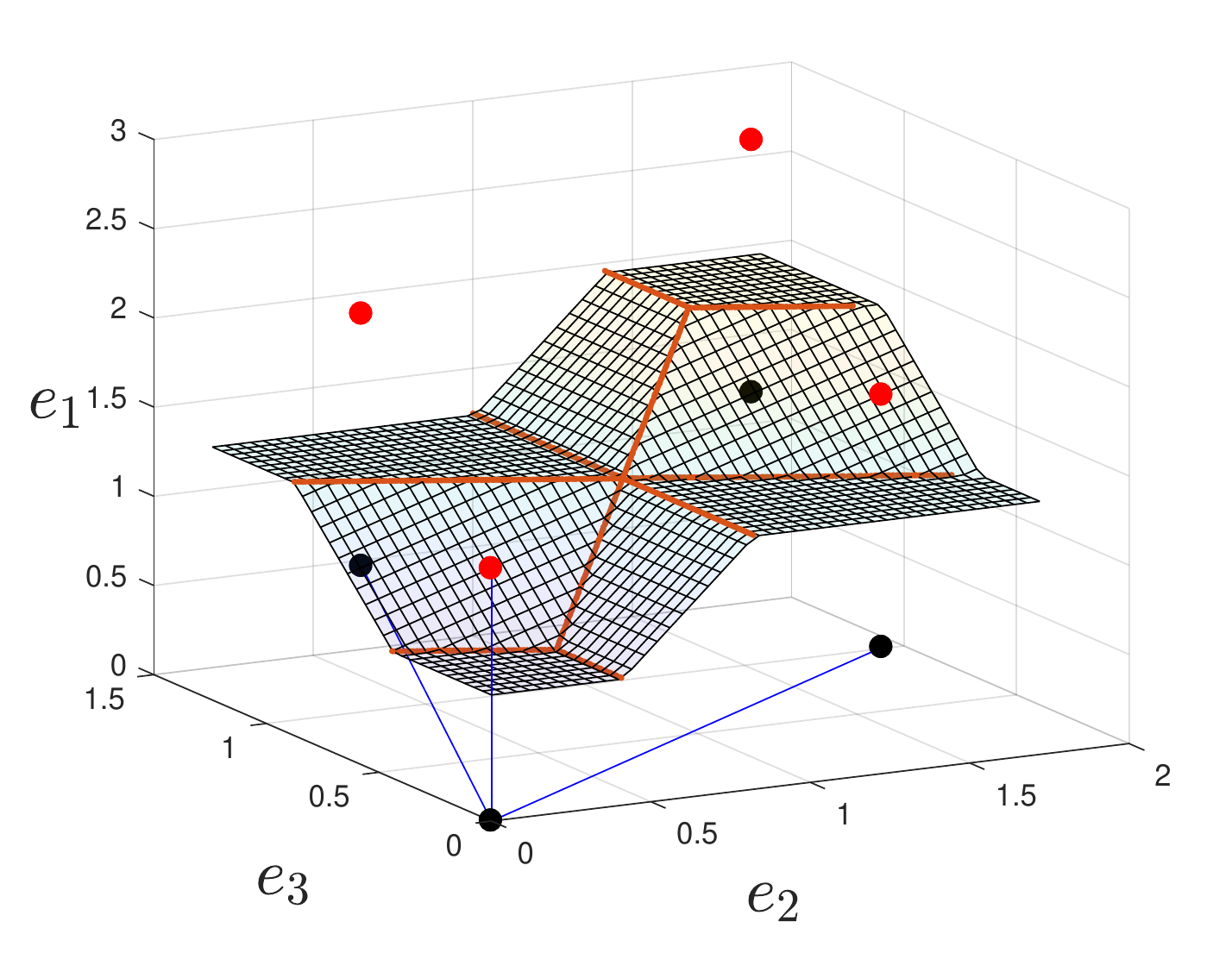}
     \caption{ The orientation of the basis satisfies the assumptions of Theorem~\ref{th_func_VR_2} but not the ones of Corollary~\ref{th_func_VR}.}
     \label{fig_func_A3}
\end{subfigure}%
\caption{CPWL decision boundary function for $A_3$. 
The basis vectors are represented by the blue lines. 
The corner points in $\CBONE$ are in red and the corner points in $\CBZERO$ in black.}
\label{fig_func_rot}
\end{figure*}

\begin{figure}[h]
    \centering
    \includegraphics[scale=0.45]{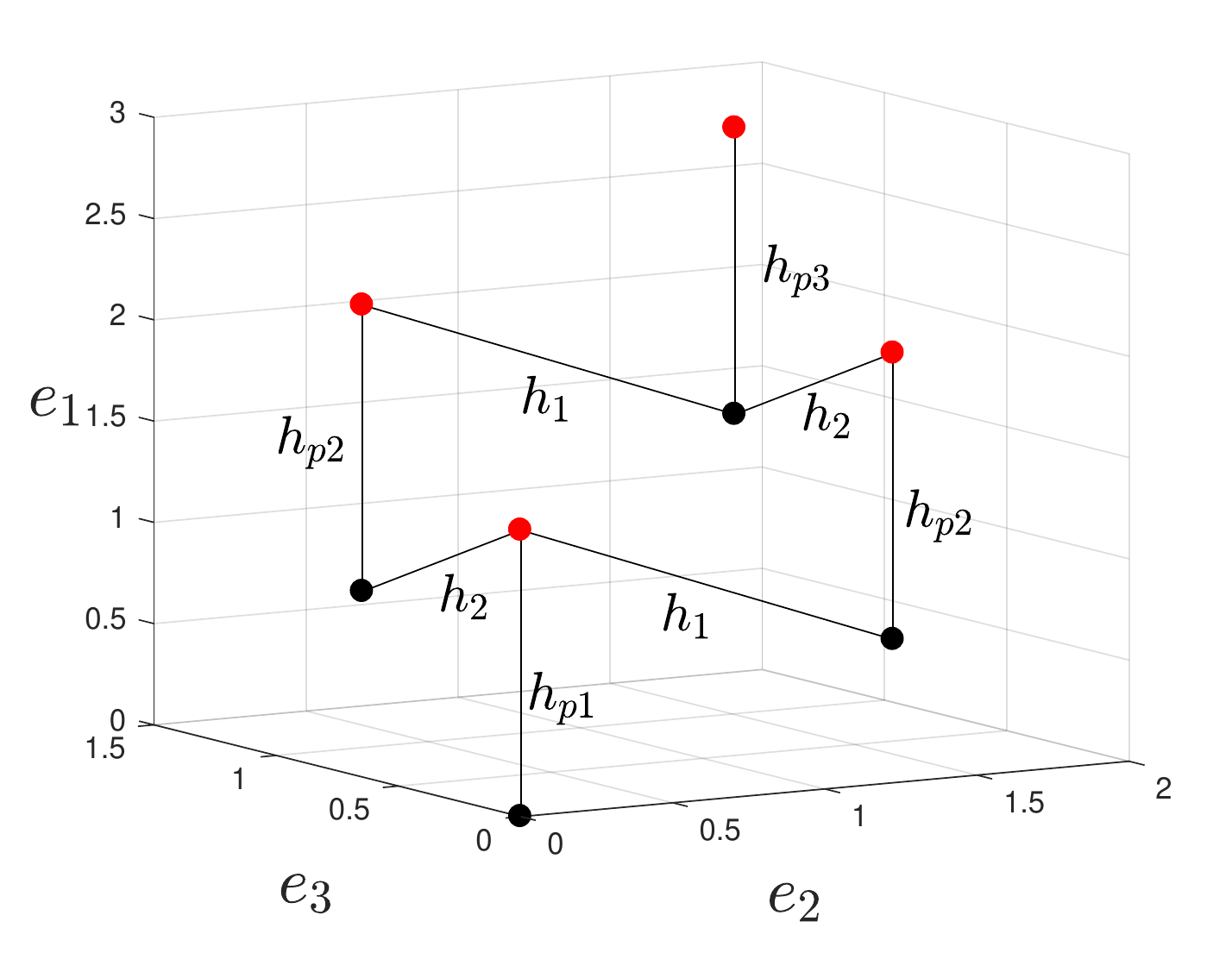}
     \caption{``Neighbor figure" of $\mathcal{C}_{\PPB}$ for the lattice $A_3$. 
		        }
     \label{fig_simplex_A3}
\end{figure}

\begin{corollary}
\label{th_func_VR}
Consider a lattice defined by a VR basis $\B=\{g_i\}_{i=1}^n$.
Suppose that the $n-1$ points $\mathcal{B} \backslash \{ g_1\}$
belong to the hyperplane $\{y \in \R^n : \ y \cdot e_1 =0\}$.
Then, the decision boundary is given by a CPWL function as in (\ref{eq_boundary_funct}).
\end{corollary}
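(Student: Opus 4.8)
The plan is to show that the hypothesis of Corollary~\ref{th_func_VR} is a special case of the hypothesis of Theorem~\ref{th_func_VR_2}, so that the conclusion follows immediately. First I would set up the geometry: since $\mathcal{B}\setminus\{g_1\}$ lies in the hyperplane $\{y : y\cdot e_1=0\}$, every corner of $\PPB$ of the form $\sum_{i\geq 2}\alpha_i g_i$ with $\alpha_i\in\{0,1\}$ has first coordinate $0$; these are precisely the $2^{n-1}$ points of $\CBZERO$. Each corner in $\CBONE$ has the form $g_1+\sum_{i\geq 2}\alpha_i g_i$, so its first coordinate equals $g_1^1$. As noted in the paragraph preceding the corollary, we may assume $g_1^1>0$ (otherwise replace every basis vector by its negative, which leaves the lattice, hence the decision boundary, unchanged up to relabelling).

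Next I would verify the remaining hypothesis of Theorem~\ref{th_func_VR_2}, namely that $x_1>\lambda_1$ for all $x\in\CBONE$ and all $\lambda\in\CN(x)\cap\CBZERO$. With the chosen orientation, $\lambda\in\CBZERO$ forces $\lambda_1=0$, while $x\in\CBONE$ gives $x_1=g_1^1>0$. Hence $x_1=g_1^1>0=\lambda_1$, and the strict inequality holds trivially. The condition $g_1^1>0$ is also met by construction. Therefore both standing assumptions of Theorem~\ref{th_func_VR_2} are satisfied.

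Finally, invoking Theorem~\ref{th_func_VR_2} directly yields that the decision boundary is the CPWL function $f(\tilde y)=\wedge_{m=1}^M\{\vee_{k=1}^{l_m}h_{m,k}(\tilde y)\}$ of the form (\ref{eq_boundary_funct}), which is exactly the assertion of the corollary. The only mild point to be careful about is the reduction $g_1^1>0$: one must check that multiplying all basis vectors by $-1$ maps $\PPB$ to a parallelotope whose corner sets $\CBZERO$, $\CBONE$ are simply interchanged with those of the reflected lattice, and that this reflection does not move any corner out of the hyperplane $\{y\cdot e_1=0\}$ (it does not, since that hyperplane is invariant under $y\mapsto -y$). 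I do not expect a genuine obstacle here; the corollary is essentially a specialization of the theorem, and the bulk of the work has already been done in the proof of Theorem~\ref{th_func_VR_2} in Appendix~\ref{App_proof_func}.
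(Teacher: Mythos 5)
Your proposal is correct and follows essentially the same route as the paper: the text preceding the corollary observes precisely that with this orientation every point of $\CBZERO$ has first coordinate $0$, every point of $\CBONE$ has first coordinate $g_1^1>0$ (negating the basis if necessary), so the hypotheses of Theorem~\ref{th_func_VR_2} are satisfied and the corollary follows. Your extra care about the sign normalization $g_1^1>0$ is a welcome but minor elaboration of the paper's parenthetical remark.
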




\begin{example2}
\label{ex_A3}
Consider the lattice $A_3$ defined by the Gram matrix~\eqref{eq_basis_An}. 
To better illustrate the symmetries we rotate the basis\footnote{Note that the orientation of the basis does not satisfy the assumptions of Corollary~\ref{th_func_VR}.} to have $g_1$ colinear with $e_1$. 
Theorem~\ref{th_func_VR_2} ensures that the decision boundary is a function.
The rotated function is illustrated in Figure~\ref{fig_func_A3} and the non-rotated version in Figure~\ref{fig_rot_coro}.
On Figure~\ref{fig_simplex_A3} each edge is orthogonal to a local affine function of $f$. The edges are labeled with the name of the corresponding affine function.
Each edge connects a point $x \in \CBONE$ to an element of $\CN(x) \cap \CBZERO$. 
		      Consequently,  each edge is orthogonal to a local affine function of the decision boundary function $f$.  
		        The edges are labeled with the names of the corresponding affine functions.
		        Theorem~\ref{theo_nbReg_Lin} and its proof show that each set $\CN(x) \cap \CBZERO$ generates a convex part of the decision boundary function $f$ with $|\CN(x) \cap \CBZERO|$ pieces.
		        E.g. on the figure there are one set with $|\CN(x) \cap \CBZERO|=3$, two with $|\CN(x) \cap \CBZERO|=2$, and one with $|\CN(x) \cap \CBZERO|=1$, thus $f$ has 8 pieces. 
The equation of the function is (we omit the $\tilde{y}$ in the formula to lighten the notations):
\begin{align*}
\label{eq_A3}
f = &\Big[ h_{p_1} \vee h_{1} \vee h_{2} \Big] \wedge \Big[ \left( h_{p_2} \vee h_{1}\right)  \wedge \left( h_{p_2}\vee h_{2} \right) \Big] \wedge  \Big[ h_{p_3} \Big], 
\end{align*}
where $h_{p_1}$, $h_{p_2}$ and $h_{p_3}$ are hyperplanes orthogonal to $g_1$ (the $p$ index stands for plateau) and the $[ \cdot ]$ groups all the set of convex pieces of $f$ that includes the same $h_{p_j}$. 
Functions for higher dimensions (i.e. $A_n, n\ge 3$) are available in Appendix~\ref{terms_An_before}.
\end{example2}


The notion of decision boundary function can be generalized to non-VR basis under 
the assumptions of the following definition.
A surface in $\R^n$ defined by a function $g$ of $n-1$ arguments is written as
$\text{Surf}(g)=\{ y = ( g(\tilde{y}), \tilde{y}) \in \R^n, \tilde{y} \in \R^{n-1} \}$. 
\begin{definition}
\label{def_semi-Voronoi-reduced}
Let $\B$ be a  is quasi-Voronoi-reduced basis of $\Lambda$. Assume that  $\B$ and $\{ e_i \}_{i=1}^n$ have the same
orientation as in Corollary~\ref{th_func_VR}. The basis is called
semi-Voronoi-reduced (SVR) if there exists at least two points $x_1, x_2 \in \CBONE$
such that 
$\text{Surf}(\vee_{k=1}^{\ell_1}g_{1,k}) \bigcap \text{Surf}(\vee_{k=1}^{\ell_2}g_{2,k})
\ne \varnothing$, where $\ell_1,\ell_2\geq 1$,
$g_{1,k}$ are the facets between $x_1$ and all points in $\CN(x_1) \cap \CBZERO$,
and $g_{2,k}$ are the facets between $x_2$ and all points in $\CN(x_2) \cap \CBZERO$.
\end{definition}
The above definition of a SVR basis imposes that the boundaries around two points of $\CBONE$,
defined by the two convex functions $\vee_{k=1}^{\ell_m}h_{m,k}$, $m=1,2$,
have a non-empty intersection.
Consequently, the min operator $\wedge$ leads to a boundary function as in (\ref{eq_boundary_funct}).
\begin{corollary}
\label{coro_SVR}
$\PPB$ for a SVR basis $\B$ admits a decision boundary defined by a CPWL function as in (\ref{eq_boundary_funct}).
\end{corollary}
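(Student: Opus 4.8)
The plan is to reduce Corollary~\ref{coro_SVR} to the argument behind Theorem~\ref{th_func_VR_2}, with the SVR hypothesis of Definition~\ref{def_semi-Voronoi-reduced} playing the role that full Voronoi-reducedness plays there. First I would note that the orientation imposed in Definition~\ref{def_semi-Voronoi-reduced} (the one of Corollary~\ref{th_func_VR}) makes the technical inequality of Theorem~\ref{th_func_VR_2} automatic: since every corner in $\CBZERO$ lies in $\{y\in\R^n:\ y\cdot e_1=0\}$ and since, for $x=zG$ with $z_1=1$ and $g_i^1=0$ for $i\ge 2$, one gets $x_1=g_1^1>0$ for every $x\in\CBONE$, we have $x_1>\lambda_1=0$ for all $x\in\CBONE$ and all $\lambda\in\CN(x)\cap\CBZERO$. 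Hence the \emph{local} geometry established in the proof of Theorem~\ref{th_func_VR_2} applies verbatim: around each corner $x\in\CBONE$ the part of the decision boundary facing $\mathcal{R}_{\CBZERO}$ is the graph of the convex CPWL function $f_x=\vee_{k}h_{x,k}$, where the $h_{x,k}$ carry the Voronoi facets shared by $x$ and the points of $\CN(x)\cap\CBZERO$, and each such cap is single-valued over $\R^{n-1}$ (first coordinate as a function of $\tilde y$) precisely because $x$ sits strictly above its $\CBZERO$-neighbours in the $e_1$ direction.

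Next I would glue these caps. A point $y=(y_1,\tilde y)\in\PPB$ lying outside the region $O$ of \eqref{eq_reg_not_vr} is decoded to $\hat z_1=1$ iff its closest lattice point is a corner in $\CBONE$, i.e.\ iff $y$ is on the $\CBONE$-side of every boundary hyperplane separating $\mathcal{R}_{\CBONE}$ from $\mathcal{R}_{\CBZERO}$; with the chosen orientation this side condition reads $y_1\ge f_x(\tilde y)$ for at least one $x\in\CBONE$, hence $y_1\ge\wedge_{x\in\CBONE}f_x(\tilde y)$. Therefore the decision boundary is the graph of $f=\wedge_{x\in\CBONE}f_x=\wedge_{m=1}^{M}\{\vee_{k=1}^{l_m}h_{m,k}\}$ over the projection of $\PPB$, which is exactly the form \eqref{eq_boundary_funct}; the bounds $1\le l_m<\tau_f$ and $1\le M\le 2^{n-1}$ follow as before from $l_m=|\CN(x)\cap\CBZERO|$ and $|\CBONE|=2^{n-1}$. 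Since the basis is only quasi-Voronoi-reduced, this identification holds up to the set $O$, whose effect on the decoder is the one controlled by Lemma~\ref{lem_quasi-Voronoi-reduced}, so $f$ is indeed the boundary used by the (quasi-MLD) componentwise decoder.

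The step that needs genuine care — and where the SVR hypothesis is actually consumed — is checking that the pointwise minimum $\wedge_{x}f_x$ is a well-defined CPWL surface coinciding with the decision boundary over the whole relevant domain, not merely locally. If the bounded pieces $\text{Surf}(f_x)$ carved out by neighbouring corners did not meet, the $\wedge$ would not splice them into a single connected graph and \eqref{eq_boundary_funct} could fail to describe the boundary. Definition~\ref{def_semi-Voronoi-reduced} supplies exactly the missing ingredient: two corners $x_1,x_2\in\CBONE$ with $\text{Surf}(\vee_k g_{1,k})\cap\text{Surf}(\vee_k g_{2,k})\ne\varnothing$ force a genuine crease, so the $\wedge$ operator is ``activated'' and, propagated along adjacent corners as in the proof of Theorem~\ref{th_func_VR_2}, yields a single CPWL function of the announced form. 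I expect this gluing/covering argument to be the main obstacle; the rest is a transcription of the VR case with $0$ substituted for the first coordinate of the $\CBZERO$ corners.
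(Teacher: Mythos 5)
Your proposal is correct and follows essentially the same route as the paper, whose entire justification is the two sentences preceding the corollary's statement: the SVR condition guarantees that the convex caps $\vee_{k}h_{m,k}$ around points of $\CBONE$ have intersecting surfaces, so the $\wedge$ operator splices them into a single boundary function of the form \eqref{eq_boundary_funct}, with the local single-valuedness inherited from the argument of Theorem~\ref{th_func_VR_2} under the orientation of Corollary~\ref{th_func_VR}. Your added remarks---that this orientation makes the hypothesis $x_1>\lambda_1$ automatic, and that for a merely quasi-VR basis the identification of $f$ with the decoder's boundary holds only up to the set $O$ of \eqref{eq_reg_not_vr} controlled by Lemma~\ref{lem_quasi-Voronoi-reduced}---make explicit points the paper leaves implicit, as does your caveat about propagating the intersection condition across all adjacent corners, which the paper's terse treatment glosses over as well.
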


From now on, the default orientation of the basis with respect to the
canonical axes of $\R^n$ is assumed to be the one of Corollary~\ref{th_func_VR}.
We call $f$ the decision boundary function.
The domain of $f$ (its input space) is $\D(\B) \subset \mathbb{R}^{n-1}$.
The domain $\mathcal{D}(\B)$ is the projection of $\PPB$ on the hyperplane $\{e_i\}_{i=2}^n$.
It is a bounded polyhedron that can be partitioned into convex regions which we call linear regions. 
For any $\tilde{y}$ in one of these regions,
$f$ is described by a unique local affine function $h_{j}$. 
The number of those regions is equal to the number of affine pieces of $f$.


%
                       
\subsection{Complexity analysis: the number of affine pieces of the decision boundary}
\label{sec_complex_ana}


An efficient neural lattice decoder should have a reasonable size, i.e. a reasonable number of neurons. 
Obviously, the size of the neural network implementing the HLD (such as the one of Figure~\ref{fig_A2_net})
depends on the number of affine pieces in the decision boundary function.
It is thus of high interest to characterize the number of pieces 
in the decision boundary as a function of the dimension. 
Unfortunately, it is not possible to treat all lattices in a unique framework. Therefore,
we investigate this aspect for some well-known lattices.
\\

\textbf{The lattice $A_n$} \\
We count the number of affine pieces of the decision boundary function $f$ obtained
for $z_1$ with the basis defined by the Gram matrix~\eqref{eq_basis_An}.

\begin{theorem}
\label{theo_nbReg_Lin}
Consider an $A_n$-lattice basis defined by the Gram matrix~\eqref{eq_basis_An}.
Let $o_i$ denote the number of sets $\CN(x) \cap \CBZERO$, $x \in \CBONE$, where  $|\CN(x) \cap \CBZERO|= i$.
The decision boundary function $f$ has a number of affine pieces equal to
\begin{equation}
\label{eq_An_pieces}
\sum_{i=1}^{n} i \cdot o_i,
\end{equation}
with $o_i =  \binom{n-1}{n-i}$.
\end{theorem}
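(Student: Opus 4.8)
\textbf{Proof plan for Theorem~\ref{theo_nbReg_Lin}.}

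The plan is to exploit the structure revealed in Example~2 and Figure~\ref{fig_simplex_A3}: the proof of Theorem~\ref{th_func_VR_2} shows that $f$ is built from convex pieces, one per corner $x \in \CBONE$, where the convex piece associated to $x$ is $\vee_{k} h_{m,k}$ with one affine function $h_{m,k}$ for each element of $\CN(x) \cap \CBZERO$ (plus the plateau hyperplane $h_{p}$ orthogonal to $g_1$). So the first step is to argue that the total number of affine pieces of $f$ equals $\sum_{x \in \CBONE} |\CN(x) \cap \CBZERO|$, i.e.\ that the pieces coming from distinct corners do not coincide and that no cancellation occurs when the $\wedge$ over all corners is taken. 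This should follow from the convexity/monotonicity argument already used in the proof of Theorem~\ref{th_func_VR_2}: each $h_{m,k}$ is the unique local description of $f$ on the linear region it governs, the regions are distinct because the hyperplanes are distinct (they are perpendicular bisectors toward distinct neighbours), and each contributes exactly one piece. Grouping the corners by the value $i = |\CN(x) \cap \CBZERO|$ immediately rewrites this sum as $\sum_{i=1}^n i \cdot o_i$, which is \eqref{eq_An_pieces}.

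The substantive combinatorial step is to establish $o_i = \binom{n-1}{n-i}$. Here I would use the explicit description of $A_n$ and of the basis~\eqref{eq_basis_An}. Recall $A_n = \{ z \in \Z^{n+1} : \sum_j z_j = 0 \}$, and with the standard coordinates the relevant Voronoi vectors (kissing number $\tau = n(n+1)$) are the vectors $e_j - e_k$, $j \ne k$. The corners of $\PPB$ are $\sum_{i \in S} g_i$ for $S \subseteq \{1,\dots,n\}$; those in $\CBONE$ are the ones with $1 \in S$. For a fixed $x \in \CBONE$ one must count how many relevant Voronoi vectors $v$ of $x$ point to a lattice neighbour $x' = x - $ (something) that again lies in $\CBZERO$, i.e.\ is a corner with $z_1 = 0$. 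Using the concrete form of the $g_i$ coming from~\eqref{eq_basis_An} — equivalently, working in the $(n+1)$-dimensional model where a corner $\sum_{i\in S} g_i$ corresponds to a $0/1$-type pattern — one checks that the neighbours of $x$ lying in $\CBZERO$ are in bijection with the ways of "moving one unit of mass" so as to flip the $z_1$-coordinate from $1$ to $0$ while keeping all coordinates in the admissible corner range. Counting these is a direct enumeration: a corner $x \in \CBONE$ indexed by $S \ni 1$ has $|\CN(x) \cap \CBZERO|$ determined by $|S|$, and as $S$ ranges over subsets containing $1$ the number with $|S| = n - i + 1$ (say) is $\binom{n-1}{n-i}$, which is precisely $o_i$.

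The main obstacle will be the careful bijective bookkeeping in that last step: translating the abstract condition "$x' \in \CN(x) \cap \CBZERO$" into an explicit count requires pinning down, for the particular basis~\eqref{eq_basis_An}, exactly which relevant Voronoi vectors of a given corner land back inside $\PPB$ on the $z_1 = 0$ side — this is essentially running Algorithm~\ref{alg_bool_eq} symbolically. I would handle it by passing to the $(n+1)$-coordinate model of $A_n$ where corners, the $e_j - e_k$ Voronoi vectors, and the parallelotope constraints all have transparent descriptions, verify the $n=2,3$ cases against Figures~\ref{fig_exa_boundary} and~\ref{fig_simplex_A3} as a sanity check (for $n=3$: $o_1 = \binom{2}{2} = 1$, $o_2 = \binom{2}{1} = 2$, $o_3 = \binom{2}{0} = 1$, giving $1 + 4 + 3 = 8$ pieces, matching Example~2), and then give the general counting argument. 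A secondary point to be careful about is the edge effect at the plateau: the hyperplanes $h_{p_j}$ orthogonal to $g_1$ should be accounted for consistently with how the proof of Theorem~\ref{th_func_VR_2} treats them, so that they do not introduce extra pieces beyond the $\sum i\cdot o_i$ count — I expect they merge into the convex pieces (as the $[\cdot]$ grouping in Example~2 suggests) rather than adding new linear regions.
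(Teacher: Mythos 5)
Your proposal is correct and takes essentially the same route as the paper: the paper also decomposes the piece count as $\sum_{x\in\CBONE}|\CN(x)\cap\CBZERO|$ (each neighbour set generating a convex part of $f$ with one affine piece per element) and then counts, for $x=g_1+\sum_{j\in S'}g_j$ with $S'\subseteq\{2,\dots,n\}$, that $|\CN(x)\cap\CBZERO|=n-|S'|$ and that there are $\binom{n-1}{|S'|}$ such corners — the only difference being that you verify the neighbour structure explicitly in the $(n+1)$-coordinate model whereas the paper asserts it as property~\eqref{eq_prop} of the basis. One small clarification on your plateau worry: the hyperplane $h_{p}$ orthogonal to $g_1$ is not an extra affine function on top of the elements of $\CN(x)\cap\CBZERO$; it is the Voronoi facet toward the neighbour $x-g_1\in\CN(x)\cap\CBZERO$, so it is already included in the count of $i$ pieces and no separate accounting is needed.
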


\begin{proof}
For any given point $x   \in \mathcal{C}^1_{\PPB}$, 
each element in the set  $\CN(x)~\cap~\CBZERO$ generates a Voronoi facet 
 of the Voronoi region of $x$. Since any Voronoi region is convex,
 the $|\CN(x)~\cap~\CBZERO|=i$ facets are convex. Consequently,
 the set $\CN(x)~\cap~\CBZERO$ generates a convex part of the decision boundary function with $i$ pieces.

%


We now count the number of sets $\CN(x)~\cap~\CBZERO$ with  cardinality  $i$.
It is obvious that $\forall x \in~\CBZERO$: $x + g_1 \in~\CBONE$. 
We walk in $\CBZERO$ and for each of the $2^{n-1}$ points $x \in \CBZERO$ 
we investigate the cardinality of the set $\CN(x+~g_1)~\cap~\CBZERO$. 
This is achieved 
via the following property of the basis. 
\begin{gather}
\label{eq_prop}
\begin{split}
&\forall x \in~\CBZERO, \ x' \in A_n \backslash \{g_j,0\},\ 2 \leq j\leq n: \\
&x+ g_j \in \CN(x+g_1), \ x+ x' \not\in \CN(x+g_1) \cap \CBZERO.
\end{split}
\end{gather}
Starting from the lattice point 0, the set $\CN(0+g_1) \cap \CBZERO$ is composed of $0$  and the $n-1$ other basis vectors. 
Then, for all $g_{j_1}$, $2 \leq j_1\leq n$, the sets $\CN(g_{j_1}+g_1) \cap \CBZERO$
are obtained by adding any the $n-2$ remaining basis vectors to $g_{j_1}$. 
Indeed, if we add~$g_{j_1}$ to $g_{j_1}$, the resulting point is outside of $\PPB$. 
Hence, the cardinality of these sets is $n-1$ and there are $\binom{n-1}{1}$
ways to choose $g_{j_1}$: any basis vectors except~$g_1$. 
Similarly, for $g_{j_1}+g_{j_2}$, $j_1 \neq j_2$, the cardinality of the sets  $\CN(g_{j_1}+g_{j_2} +g_1) \cap \CBZERO$ is  $n-2$ and there are $\binom{n-1}{2}$ ways to choose $g_{j_1}+g_{j_2}$. 
More generally, there are $\binom{n-1}{k}$ sets $\CN(x) \cap \CBZERO$ of cardinality $n-k$.

Summing over $k=n-i=0 \ldots n-1$ gives the announced result.
\end{proof}

Theorem~\ref{theo_nbReg_Lin} implies that the HLD, applied on $A_n$, induces a neural network (having the form given by \eqref{equ_feed_fo}) of exponential size. Indeed, remember that the first layer of the neural network implementing the HLD performs projections on the orthogonal vectors to each affine piece. 

Nevertheless, one can wonder whether a neural network with a different architecture can compute the decision boundary more efficiently. We first address another category of shallow neural networks: ReLU neural networks with two layers.
Deep neural networks shall be discussed later in the paper.
 Note that in this case we do not consider a single function computed by the neural network, like the HLD, but any function that can be computed by this class of neural network.


\begin{theorem}
\label{theo_shallow}
A ReLU neural network with two layers needs at least 
\begin{equation}
\label{eq_An_shallow}
\sum_{i=2}^{n}(i-1) \times \binom{n-1}{n-i}
\end{equation}
neurons for optimal decoding of the lattice $A_n$.
\end{theorem}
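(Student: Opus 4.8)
The plan is to reduce the question to a known lower bound on the size of shallow ReLU networks representing CPWL functions in terms of the number of distinct affine pieces, and then to count those pieces for the decision boundary function of $A_n$. First I would recall from Theorem~\ref{th_func_VR_2} (and Corollary~\ref{th_func_VR}, choosing the orientation there) that optimal decoding of $z_1$ amounts to evaluating the CPWL decision boundary function $f:\R^{n-1}\to\R$ given by \eqref{eq_boundary_funct}, since once the orientation is fixed the decoder must decide $\hat z_1=1$ iff $y_1>f(\tilde y)$. A two-layer ReLU network computes, on each linear region of its input-space partition, an affine map obtained as a $\pm1$ combination of ReLU-rectified affine forms, so the function it realizes is CPWL with at most (roughly) as many pieces as the number of ``activation patterns'' of its hidden neurons; concretely, a single hidden ReLU neuron contributes one new hyperplane to the arrangement partitioning $\R^{n-1}$, hence adds at most one ``fold'' to the output along that hyperplane. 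The key structural fact I would use is that the decision boundary function of $A_n$ is built out of the convex pieces $\vee_{k=1}^{\ell_m}h_{m,k}$ indexed by the points $x\in\CBONE$: by Theorem~\ref{theo_nbReg_Lin}, the set $\CN(x)\cap\CBZERO$ of cardinality $i$ produces a convex cap with exactly $i$ affine pieces, and there are $o_i=\binom{n-1}{n-i}$ such sets.

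The main step is the following counting-to-neurons argument. A convex CPWL function that is the maximum of $i$ distinct affine functions (none redundant) has $i$ pieces and requires at least $i-1$ hidden ReLU neurons to represent as a two-layer network, because each ReLU neuron can introduce at most one new breakpoint-hyperplane into the output surface and one affine function is ``free'' (it can be carried in the output layer's linear part). More importantly, I would argue that the breakpoint hyperplanes of the distinct convex caps of $f$ are all distinct and cannot be shared: the fold locus separating the two pieces $h_{m,k}$ and $h_{m,k'}$ coming from the cap at $x_m$ is the projection of the $(n-2)$-face of $\mathcal V(x_m)\cap\PPB$ where those two Voronoi facets meet, and because the caps sit over disjoint open subregions of $\D(\B)$ (their union forms the graph of $f$ with the $\wedge$ gluing occurring only on measure-zero sets), no single hidden neuron's hyperplane can serve as a fold for two different caps. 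Hence each cap with $i$ pieces independently demands at least $i-1$ hidden neurons, and these neuron requirements add up across caps. Summing $(i-1)$ over all caps gives
\begin{equation*}
\sum_{i=2}^{n}(i-1)\cdot o_i=\sum_{i=2}^{n}(i-1)\binom{n-1}{n-i},
\end{equation*}
which is exactly \eqref{eq_An_shallow} (the $i=1$ caps, being single affine pieces, cost no extra neurons beyond the output layer).

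To make the ``cannot be shared'' claim precise I would invoke the standard lower-bound technology for ReLU networks with one hidden layer: if $N$ is the number of hidden neurons, the realized function is affine on each cell of the hyperplane arrangement $\mathcal A$ generated by the $N$ hidden hyperplanes, and two adjacent cells of $\mathcal A$ whose shared facet is not on any hidden hyperplane must carry the \emph{same} affine function. Therefore every pair of adjacent linear regions of $f$ on which $f$ is genuinely non-affine-across must be separated by at least one hidden hyperplane, i.e. the hidden hyperplanes must cover the whole ``fold set'' of $f$ (the union of the $(n-2)$-dimensional relative boundaries between adjacent linear regions). I would then show, using the combinatorial description from the proof of Theorem~\ref{theo_nbReg_Lin}, that the fold set of $f$ decomposes into pieces in bijection with the non-redundant pairs $(h_{m,k},h_{m,k'})$ within each cap, no two of which are contained in a common hyperplane of $\R^{n-1}$; each hidden neuron, contributing one hyperplane, can cover at most one such fold-piece, and a cap with $i$ pieces has $i-1$ fold-pieces in a path/tree structure, so it needs $i-1$ neurons. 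Adding over all $o_i$ caps of each type yields the bound.

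The hard part will be the geometric non-degeneracy claim: verifying that the $\binom{n-1}{n-i}$ convex caps of $A_n$ really do contribute mutually non-coplanar fold hyperplanes so that the per-cap lower bounds add rather than overlap. This requires understanding the precise arrangement of Voronoi-facet hyperplanes of $A_n$ restricted to $\D(\B)$ — i.e. showing that the fold hyperplane between $h_{m,k}$ and $h_{m,k'}$ (coming from neighbor figure at $x_m$) is not equal to the fold hyperplane of any other cap $x_{m'}$. I expect this to follow from the explicit structure of the relevant Voronoi vectors of $A_n$ (differences of standard basis vectors of $\Z^{n+1}$) and the property \eqref{eq_prop} used in the proof of Theorem~\ref{theo_nbReg_Lin}, but pinning it down cleanly — rather than appealing to generic-position heuristics — is where the real work lies; everything else is the now-standard shallow-ReLU counting bound applied to the piece count \eqref{eq_An_pieces}.
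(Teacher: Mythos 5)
Your overall strategy coincides with the paper's: lower-bound the number of hidden neurons by the number of distinct $(n-2)$-dimensional hyperplanes needed to contain the non-differentiability (fold) locus of $f$, observe that a convex cap with $i$ pieces contributes $i-1$ folds, and sum $(i-1)\,o_i$ over the caps enumerated in Theorem~\ref{theo_nbReg_Lin}. The one step you explicitly leave open --- that all these fold-faces lie in pairwise \emph{distinct} hyperplanes of $\R^{n-1}$, so that the per-cap requirements add instead of overlapping --- is precisely where the content of the theorem sits, and the intermediate justification you offer does not give it to you: a hyperplane is a global object, so the fact that two caps sit over disjoint open subregions of $\D(\B)$ in no way prevents a single hidden neuron's hyperplane from containing a fold-face of each. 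Your sentence ``no single hidden neuron's hyperplane can serve as a fold for two different caps'' is therefore a non sequitur as argued, although you later restate the claim correctly as a non-coplanarity property and (rightly) flag it as the real work.

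The paper closes exactly this gap with a short symmetry argument you could adopt. Decompose each cap of size $i$ into the $i-1$ elementary triples $\{x,\;x+g_1,\;x+g_j\}$ with $x,\,x+g_j\in\CBZERO$; each triple yields one fold-face, the intersection of the two Voronoi facets of $x+g_1$ facing $x$ and $x+g_j$. Every such triple is the image of $\{0,g_1,g_2\}$ under a composition of reflections permuting the basis vectors and of translations along basis vectors. For the fold hyperplanes of two distinct triples to coincide, the translation relating them would have to be orthogonal to the $2$-face spanned by the first triple; since the only available translations are along basis vectors of $A_n$ and none of these is orthogonal to that $2$-face, the hyperplanes are all distinct. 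With that lemma in place, your counting argument is the paper's and yields \eqref{eq_An_shallow}.
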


The proof is provided in Appendix~\ref{App_theo_relu}.
Consequently, this class of neural networks is not efficient. However, we shall see in the sequel that deep neural networks are better suited. \\

\textbf{Other dense lattices} \\
Similar proof techniques can be used to compute the number of pieces obtain with some bases
of other dense lattices such as $D_n, n\geq 2$, and $E_n$, $6 \leq n \leq 8$. 

Consider the Gram matrix of $D_n$ given by \eqref{eq_second_kind}.
All basis vectors have the same length 
but we have either $\pi/3$ or $\pi/2$ angles between the basis vectors. 
This basis is not VR but SVR.
It is defined by the following Gram matrix.
\begin{equation}
\label{eq_second_kind}
\Gamma_{D_{n}}=
\left(
\begin{array}{cccccccc}
2 & 0 &1 &. . . &1 \\
0 &2& 1& . . . & 1 \\
1 &1& 2&  . . . & 1 \\
. & . & .& . .  . & . \\
1 & 1 & 1  & . . . & 2
\end{array}
\right).
\end{equation}

\begin{theorem}
\label{theo_nbReg_Lin_Dn_second_kind}
Consider a $D_n$-lattice basis defined by the Gram matrix~\eqref{eq_second_kind}.
Let $o_i$ denote the number of sets $\CN(x) \cap \CBZERO$, $x \in \CBONE$, where:
\begin{itemize}
\item \small $|\CN(x) \cap \CBZERO|= \underset{(l_i)}{\underbrace{\left[ 1+(n-2-i) \right]}}$\normalsize, and
\item \small $|\CN(x) \cap \CBZERO|= \underset{(ll_i)}{\underbrace{\left[\underset{(1)}{\underbrace{1+2(n-2-i)}}+ \underset{(2)}{\underbrace{\binom{n-2-i}{2}}}\right]}}$ \normalsize.
\end{itemize}
The decision boundary function $f$ has a number of affine pieces equal to
\small
\begin{align}
\label{eq_nbReg_Dn_second_kind}
\begin{split}
\sum_{i=0}^{n-2} \left(  (l_i) + (ll_i) \right)\times o_i -1,
\end{split}
\end{align}
\normalsize
with $o_i=\binom{n-2}{i}$.
\end{theorem}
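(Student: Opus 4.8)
The argument will parallel the one used for Theorem~\ref{theo_nbReg_Lin}, with two extra ingredients forced by the basis \eqref{eq_second_kind}: it is only SVR (not VR), and the orthogonality $g_1 \perp g_2$ changes the local neighbour structure. First I would invoke Corollary~\ref{coro_SVR}: the basis \eqref{eq_second_kind} being SVR, the decision boundary for $z_1$ is a CPWL function $f$ of the form \eqref{eq_boundary_funct}, i.e. a $\wedge$ of convex ($\vee$) terms. Exactly as in the proof of Theorem~\ref{theo_nbReg_Lin}, each $x \in \CBONE$ contributes one convex term whose affine pieces are in bijection with the facets of $\mathcal{V}(x)$ shared with the points of $\CN(x) \cap \CBZERO$; convexity of the Voronoi cell guarantees this term has exactly $|\CN(x) \cap \CBZERO|$ distinct pieces. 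Hence the count reduces to (a) summing $|\CN(x)\cap\CBZERO|$ over $x\in\CBONE$ and (b) subtracting the pieces that get identified when the convex terms are combined by $\wedge$.

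Next I would fix coordinates via a Cholesky factor $G$ of \eqref{eq_second_kind}, recording $\|g_i\|^2=2$, $g_1\cdot g_2=0$, $g_1\cdot g_j=g_2\cdot g_j=1$ for $j\ge3$, and $g_i\cdot g_j=1$ for distinct $i,j\ge 3$. Since $x\mapsto x+g_1$ is a bijection $\CBZERO\to\CBONE$, I index $\CBONE$ by $\CBZERO=\{\sum_{j\in T}g_j: T\subseteq\{2,\dots,n\}\}$, and split $T=T'$ (\emph{type} $l$: $2\notin T$) or $T=\{2\}\cup T'$ (\emph{type} $ll$: $2\in T$), with $T'\subseteq\{3,\dots,n\}$ and $|T'|=i$; there are $o_i=\binom{n-2}{i}$ choices of $T'$ for each type, accounting for all $2^{n-1}$ corners. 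The technical heart is to compute $\CN(x+g_1)\cap\CBZERO$ in each case, by an inner-product/closest-point check analogous to property \eqref{eq_prop}: for type $l$ I expect the neighbours to be $x$ itself together with $x+g_j$ for the $n-2-i$ indices $j\in\{3,\dots,n\}\setminus T'$, giving $|\CN(x+g_1)\cap\CBZERO|=1+(n-2-i)=(l_i)$; for type $ll$ the orthogonal $g_2$ produces, besides the analogous $1+2(n-2-i)$ translates (with or without $g_2$), the $D_n$-specific short vectors of the form $g_j+g_k-g_1-g_2$, contributing $\binom{n-2-i}{2}$ more, so $|\CN(x+g_1)\cap\CBZERO|=1+2(n-2-i)+\binom{n-2-i}{2}=(ll_i)$.

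Finally I would deal with the $-1$. Because the basis is SVR, the surfaces $\mathrm{Surf}(\vee_k g_{m,k})$ attached to (at least) two corners of $\CBONE$ intersect; I would show that exactly one affine piece — the plateau-type hyperplane orthogonal to $g_1$ forced by Definition~\ref{def_semi-Voronoi-reduced} — is common to two of the convex terms, so the naive sum $\sum_{i=0}^{n-2}\big((l_i)+(ll_i)\big)\binom{n-2}{i}$ overcounts the total number of linear regions of $f$ by exactly one. Subtracting it yields \eqref{eq_nbReg_Dn_second_kind}.

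\textbf{Main obstacle.} The delicate step is the neighbour count in this non-standard, non-VR basis: proving that the translates listed above are precisely the corners of $\CBZERO$ sharing a Voronoi facet with $x+g_1$, and in particular justifying the $\binom{n-2-i}{2}$ term from the $D_n$ minimal vectors $g_j+g_k-g_1-g_2$ while ruling out any other short vector creating a facet between $\CBONE$ and $\CBZERO$. A secondary subtlety is pinning down rigorously the single double-counted piece responsible for the $-1$, which must be extracted from the SVR structure of Definition~\ref{def_semi-Voronoi-reduced} rather than from a generic argument.
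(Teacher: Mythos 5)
Your proposal is correct and follows essentially the same route as the paper's proof in Appendix~\ref{App_theo_Dn_nb_pieces}: index $\CBONE$ by $x\mapsto x+g_1$ from $\CBZERO$, split the corners into the two neighbourhood patterns according to whether $g_2$ appears in the sum, count $1+(n-2-i)$ and $1+2(n-2-i)+\binom{n-2-i}{2}$ neighbours respectively with multiplicity $\binom{n-2}{i}$, and subtract one for the overcount. The paper pins the $-1$ down slightly more concretely than you do — at $i=n-2$ the bisector hyperplanes of $x,x+g_1$ and of $x+g_2,x+g_2+g_1$ coincide because $g_1\perp g_2$, so the single $(l_{n-2})$ piece and the single $(ll_{n-2})$ piece are one and the same — but this is the same mechanism you identify via the plateau hyperplane orthogonal to $g_1$.
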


\begin{figure}
\centering

\begin{minipage}{.4\textwidth}
    \centering
    \includegraphics[scale=0.35]{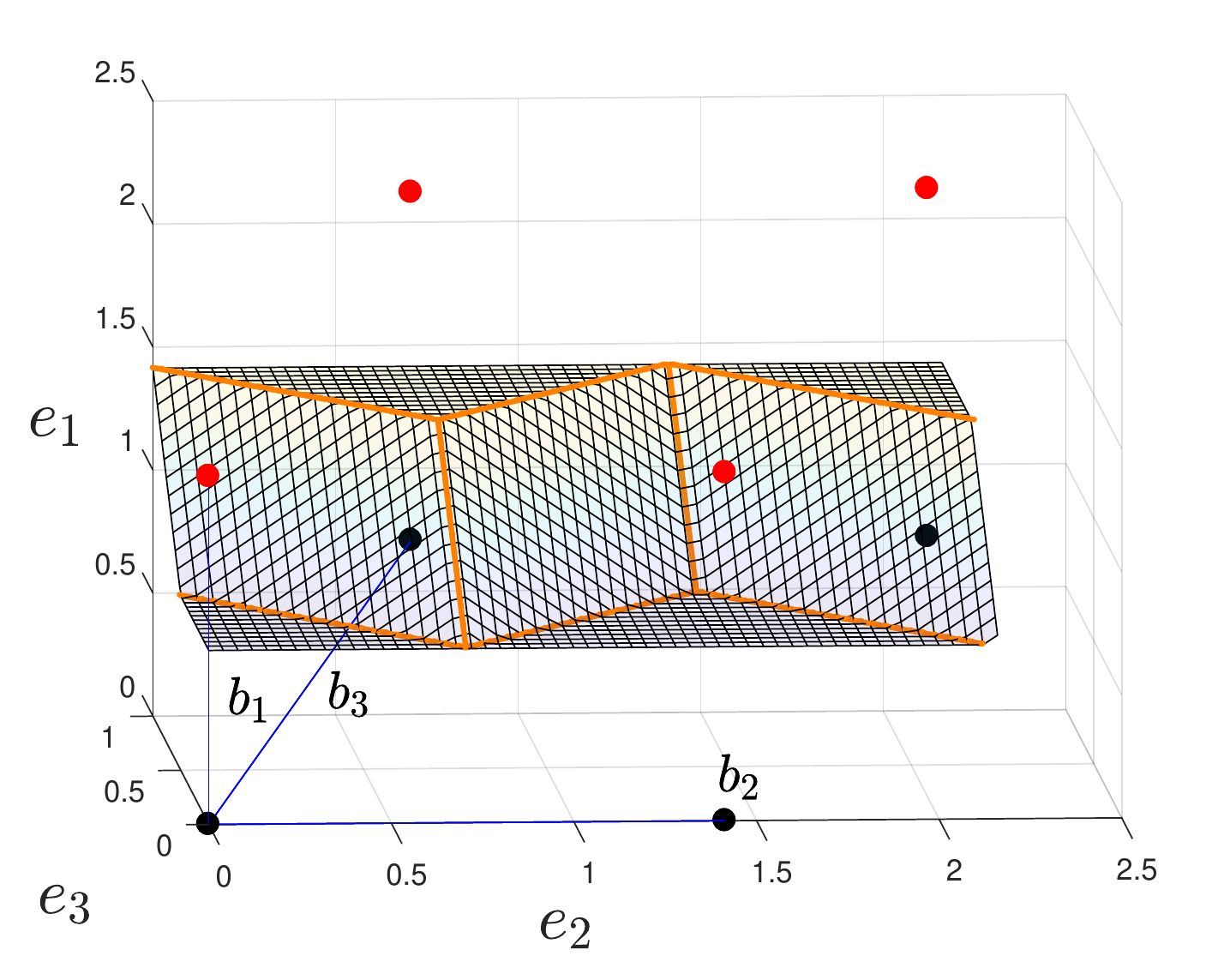}
     \caption{CPWL boundary function for $D_3$. 
	        The basis is rotated to better illustrate the symmetry: $g_1$ is colinear with $e_1$.}
     \label{fig_func_D3_second_kind}
\end{minipage}%
\hspace{8mm}
\begin{minipage}{.4\textwidth}
    \centering
    \includegraphics[scale=0.35]{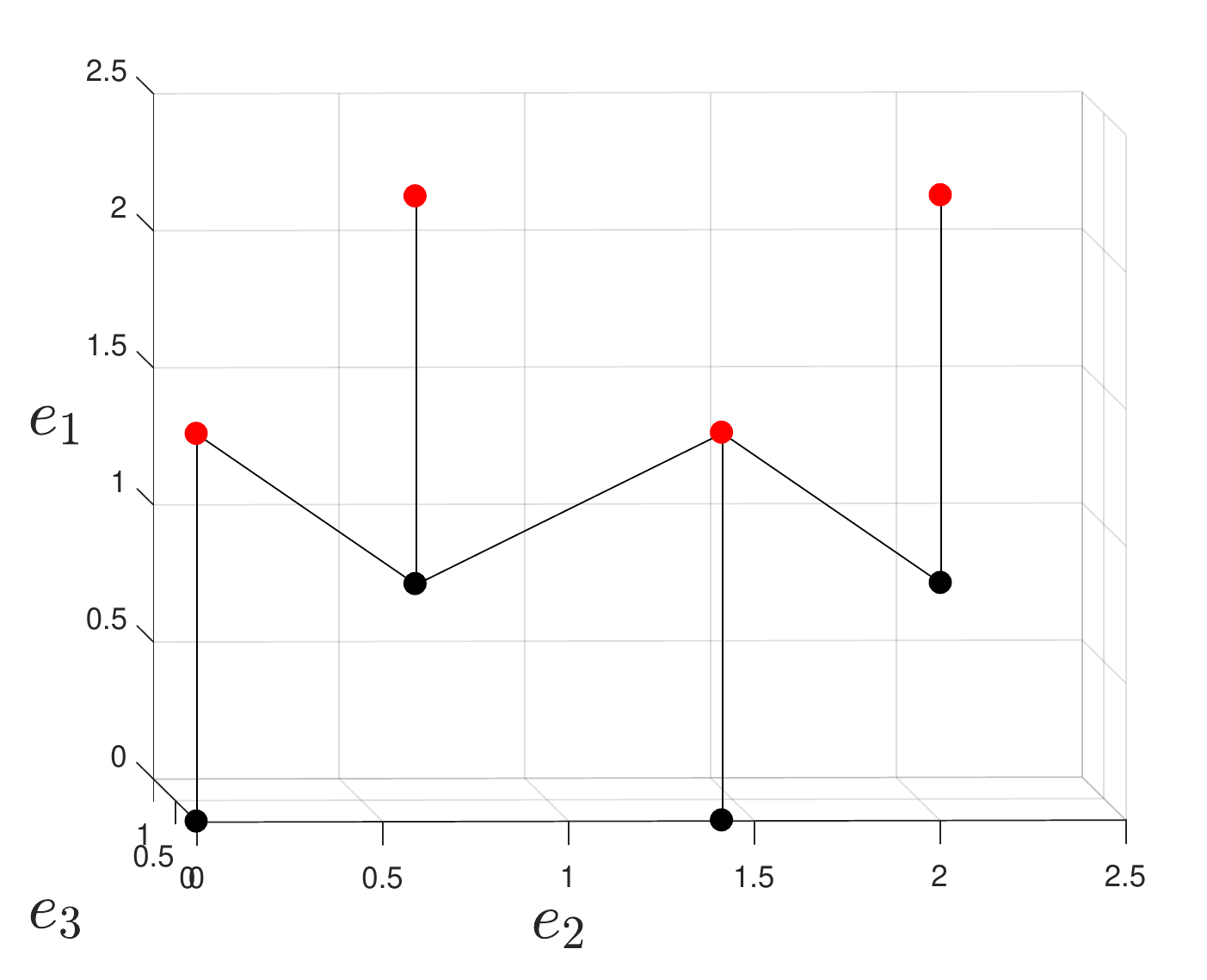}
     \caption{``Neighbor figure'' of $\mathcal{C}_{\PPB}$ for $D_3$. Each edge connects a point $x \in \CBONE$ to an element of $\CN(x) \cap \CBZERO$.}
     \label{fig_simplex_D3_2}
\end{minipage}
\end{figure}

We presents the two different ``neighborhood patterns"  encountered with this basis of $D_n$ (this gives $(l_i)$ and $(ll_i)$).
In the proof available in Appendix~\ref{App_theo_Dn_nb_pieces}, we then count the number of simplices  (i.e. $(o_i)$) in each of these two categories.

The decision boundary function for $D_3$ is illustrated on Figure \ref{fig_func_D3_second_kind}. 
We investigate the different ``neighborhood patterns" by studying Figure~\ref{fig_simplex_D3_2}: 
I.e. we are looking for the different ways to find the neighbors of $x \in \CBONE$ in $  \CN(x) \cap \CBZERO$, depending on $x$.
In the sequel, $(l_i)$, $(ll_i)$, and $(1)$, $(2)$ refer to Equation~\eqref{eq_nbReg_Dn_second_kind}
and $\sum_j g_j$ denotes any sum of points in the set $\{0,g_j\}_{j=3}^{n}$, where $g_2$ is the basis vector orthogonal to $g_1$. 
We recall that adding $g_1$ to any point $x \in \CBZERO$ leads to a point in $\CBONE$.

$(l_i)$ This pattern is the same as the (only) one encountered for $A_n$ with the basis given by Equation~$\eqref{eq_basis_An}$.
We first consider any point in $\CBONE$ of the form $\sum_j g_j + g_1$.
Its neighbors in $\CBZERO$ are $\sum_j g_j$ and any $\sum_j g_j+g_i$, 
where $g_i$ is any basis vector having an angle of $\pi/3$ with $g_1$ such that $\sum_j g_j+g_i$ is not outside $\PPB$.
Hence, $|\CN(\sum_{j=1}^i g_j + g_1) \cap \CBZERO|=1+n-2-i$.
E.g. for $n=3$, the closest neighbors of $0+g_1$  in $\CBZERO$ are $0$ and $g_3$. 
$g_2$ is perpendicular to $g_1$ and is not a closest neighbor of $g_1$. 


$(ll_i)$ The second pattern is obtained with any point of the form $\sum_j g_j + g_2+g_1$ and its neighbors in $\CBZERO$.
$\sum_j g_j + g_2$ and any $\sum_j g_j + g_2 +g_i$, $\sum_j g_j+g_k$ are neighbors of this point in $\CBZERO$, where $g_i$, $g_k$ are any basis vectors having an angle of $\pi/3$ 
with $g_1$ such that (respectively) $\sum_j g_j + g_2 +g_i$, $\sum_j g_j+g_k$  are not outside $\PPB$.
This terms generate the $(1)$ in the formula.
E.g. for $n=3$, the closest neighbors of $0+g_2+g_1$  in $\CBZERO$ are $g_2$, $g_2 + g_3$, and
$g_3$. 
Moreover, for $n=3$ one ``neighborhood case" is not happening: 
from $n=4$, the points $g_i+g_j \in \CBZERO$, $3 \leq i < j \leq n $, are also closest neighbors of $g_2 + g_1$. 
This explains the binomial coefficient $(2)$.
Hence, $|\CN(\sum_{j=1}^i g_j + g_2 +  g_1) \cap \CBZERO|=1+2 (n-2-i) +\binom{n-2-i}{2}$.

Finally, we investigate $E_n$, $6 \le n \le 8$. 
$E_8$ is one of the most famous and remarkable lattices due to its exceptional density relatively to its dimension (it was recently proved that $E_{8}$ 
is the densest packing of congruent spheres in 8-dimensions \cite{Viazovska2017}). 
The basis we consider is almost identical to the basis of $D_n$ given by \eqref{eq_second_kind}, except one main difference:
there are two basis vectors orthogonal to $g_1$  instead of one. 
This basis is not VR but SVR.
It is defined by the following Gram matrix.

\begin{equation}
\label{eq_En}
\Gamma_{E_{n}}=
\left(
\begin{array}{cccccccc}
2 & 0 &0 &1&. . . &1 \\
0 &2& 1&1& . . . & 1 \\
0 &1& 2&1&  . . . & 1 \\
1 &1&1&2& . . .  &1 \\ 
. & . & .& . & . .  . & . \\
1 & 1 & 1 &1 & . . . & 2
\end{array}
\right).
\end{equation}

\begin{theorem}
\label{theo_nbReg_Lin_En}
Consider an $E_n$-lattice basis, $6 \le n \le 8$, defined by the Gram matrix~\eqref{eq_second_kind}.
The decision boundary function $f$ has a number of affine pieces equal to
\small
\begin{align}
\label{eq_En}
\begin{split}
&\sum_{i=0}^{n-3} \Big( \underset{(l_i)}{\underbrace{\left[ 1+(n-3-i) \right]}} +\underset{(ll_i)}{\underbrace{2 \left[ 1+ 2(n-3-i)+ \binom{n-3-i}{2} \right]}} +  \\
&\underset{(lll_i)}{\underbrace{\left[ \underset{(1)}{\underbrace{1+3(n-3-i) }}+  \underset{(2)}{\underbrace{3\binom{n-3-i}{2}}} + \underset{(3)}{\underbrace{\binom{n-3-i}{3}}} \right]} }\Big)\underset{(o_i)}{\underbrace{\binom{n-3}{n-i}}}-3.
\end{split}
\end{align}
\normalsize
\end{theorem}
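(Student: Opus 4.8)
The plan is to follow exactly the same strategy used for $A_n$ (Theorem~\ref{theo_nbReg_Lin}) and for $D_n$ (Theorem~\ref{theo_nbReg_Lin_Dn_second_kind}): since the basis is SVR and oriented as in Corollary~\ref{th_func_VR}, Corollary~\ref{coro_SVR} tells us the decision boundary is a CPWL function of the form $f=\wedge_m\{\vee_k h_{m,k}\}$. As in the proof of Theorem~\ref{theo_nbReg_Lin}, the convexity of each Voronoi region shows that the set $\CN(x)\cap\CBZERO$, for a fixed $x\in\CBONE$, contributes exactly $|\CN(x)\cap\CBZERO|$ affine pieces to $f$ (one per facet). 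Hence the total number of pieces is $\sum_{x\in\CBONE}|\CN(x)\cap\CBZERO|$, possibly with a small correction for pieces shared between the convex sub-boundaries around different corners (this is where the final ``$-3$'' will come from). So the whole computation reduces to two tasks: (i) classify the possible ``neighborhood patterns'', i.e.\ the possible values of $|\CN(x)\cap\CBZERO|$ as $x$ ranges over $\CBONE$, and (ii) count how many $x\in\CBONE$ fall into each pattern.

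First I would set up the combinatorial bookkeeping exactly as in the $D_n$ case. Write $g_2,g_3$ for the two basis vectors orthogonal to $g_1$, and let $\sum_j g_j$ denote any subset-sum of $\{0,g_j\}_{j=4}^n$; adding $g_1$ to a point of $\CBZERO$ lands in $\CBONE$. For a point $x=\sum_j g_j + g_1\in\CBONE$, the neighbors in $\CBZERO$ are obtained by subtracting $g_1$ and then either staying put or adding one of the remaining basis vectors $g_i$ ($i\geq 4$) that (a) makes a $\pi/3$ angle with $g_1$ and (b) keeps the point inside $\PPB$, plus the extra neighbors of the form $\sum_j g_j + g_i+g_k$ etc.\ coming from the $D_n$-type analysis. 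The new feature relative to $D_n$ is the presence of a \emph{second} orthogonal vector $g_3$: this produces three families of corners in $\CBONE$ according to whether $x$ contains neither $g_2$ nor $g_3$ (pattern $(l_i)$), exactly one of them (pattern $(ll_i)$, occurring with multiplicity $2$), or both (pattern $(lll_i)$). I would compute $|\CN(x)\cap\CBZERO|$ in each of the three cases by the same ``add one $\pi/3$-vector, then two, then three'' enumeration that gave $(1),(2),(3)$ in Theorem~\ref{theo_nbReg_Lin_Dn_second_kind}; the $(lll_i)$ case now has all of the $1$, $\binom{\cdot}{2}$ and $\binom{\cdot}{3}$ terms because one can add up to three admissible $\pi/3$-vectors while $g_2,g_3$ are already present.

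Next I would count the $o_i$. Fixing how many of the ``free'' basis vectors $g_4,\dots,g_n$ appear in $x$ — say $i$ of the $n-3$ of them — gives the binomial factor $\binom{n-3}{n-i}$ (matching $(o_i)$ in the statement, reindexed as in the $A_n$ and $D_n$ proofs), and the remaining $n-3-i$ free vectors are precisely the ones available to be added as admissible $\pi/3$-neighbors, which is why $n-3-i$ appears throughout the three bracketed terms. Summing the three patterns weighted by $o_i$ over $i=0,\dots,n-3$ gives the displayed sum; the final ``$-3$'' accounts for the pieces that are double-counted because, as in the $D_n$ case, certain sub-boundaries share a common affine piece (one shared piece for each of the three orthogonal-vector families, hence $3$). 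The restriction $6\le n\le 8$ is needed so that the $E_n$ construction is actually the root lattice $E_n$ and the SVR/quasi-VR property (Section~\ref{sec_vr_proof}) holds.

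The main obstacle I expect is the bookkeeping of the shared pieces leading to the ``$-3$'' correction term: verifying that the convex sub-boundaries $\vee_k h_{m,k}$ around distinct corners of $\CBONE$ overlap in \emph{exactly} the claimed set of affine pieces (and no others) requires a careful geometric argument about which Voronoi facets of $\mathcal{V}(0)$ are simultaneously facets of two corner Voronoi cells, rather than a purely combinatorial count. The rest — the pattern classification and the binomial counting — is routine once the $D_n$ argument is in hand, essentially a matter of carrying the extra orthogonal vector $g_3$ through the same computation.
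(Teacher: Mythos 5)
Your proposal follows essentially the same route as the paper's proof: the same three-way classification of corners by whether they contain neither, one, or both of the orthogonal vectors $g_2,g_3$ (giving the $(l_i)$, doubled $(ll_i)$, and $(lll_i)$ patterns), the same binomial count $\binom{n-3}{i}$ over the free vectors $g_4,\dots,g_n$, and the same explanation of the $-3$ as coming from coincident bisector pieces among the four families (the paper pins this down at $i=n-3$, where the bisectors of $x,x+g_1$ and of $x+g_2,x+g_2+g_1$, etc., lie in one hyperplane because $g_2,g_3\perp g_1$). The plan is correct and matches the paper's argument.
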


We first highlight the similarities with the function of $D_n$ defined by~\eqref{eq_second_kind}.  As with $D_n$, we have case $(l_i)$. Case $(ll_i)$ of $D_n$ is also present but obtained twice because of the two orthogonal vectors.  The terms $n-2-i$ in $(l_i)$ and $(ll_i)$ of Equation~\eqref{eq_nbReg_Dn_second_kind} are replaced by $n-3-i$ also because of the additional orthogonal vector.

Then, there is a new pattern $(lll_i)$: Any point of the form $\sum_j g_j + g_3+g_2+g_1$ and its neighbors in $\CBZERO$, where $\sum_j g_j$ represents any sum of points in the set $\{0,g_j\}_{j=4}^{n}$.
For instance, the closest neighbors in $\CBZERO$ of $g_3 + g_2+g_1 \in \CBONE$  are the following points, which we can sort in three groups as on Equation~\eqref{eq_En}: 
(1) $g_2+g_j$, $g_3+g_j$, $g_2+g_3+g_j$, (2) $g_j+g_k$, $g_2+g_j+g_k$, $g_3+g_j+g_k$, (3) $g_j+g_i+g_k$,
 $4 \leq i<j<k \leq n$. The formal proof is available in Appendix~\ref{App_func_En}.\\




\section{Complexity reduction}
\label{sec_complex_reduc}

In this section, we first show that a technique called the folding strategy enables to compute the decision boundary function at a reduced (polynomial) complexity. The folding strategy can be seen as a preprocessing step to simplify the function to compute.
The implementation of this technique involves a deep neural network. 
As a result, the exponential complexity of the HLD is reduced to a polynomial complexity by moving from a shallow neural network to a deep neural network.
The folding strategy and its implementation is first presented for the lattice $A_n$. We then show that folding is also possible for $D_n$ and $E_n$. 

In the second part of the section, we argue that, on the Gaussian channel, the problem to be solved by neural networks is easier for MIMO lattices than for dense lattices: In low to moderate dimensions, many pieces of the decision boundary function can be neglected for quasi-optimal decoding.  
Assuming that usual training techniques naturally neglect the useless pieces,
this explains why neural networks of reasonable size are more efficient with MIMO lattices than with dense lattices.

\subsection{Folding strategy}
\label{sec_fold}

\subsubsection{The algorithm}

Obviously, at a location $\tilde{y}$,
we do not want to compute all affine pieces in (\ref{eq_boundary_funct}),
whose number is for instance given by (\ref{eq_An_pieces}) for $A_n$.
To reduce the complexity of this evaluation,
the idea is to exploit the symmetries of $f$
by ``folding" the function and mapping distinct regions
of the input domain to the same location.
If folding is applied sequentially, i.e. fold a region that has already been folded, the gain becomes exponential. 
The notion of folding the input space in the context of neural networks
was introduced in \cite{Szymanski2013} and \cite{Montufar2014}.
We first present the folding procedure for the lattice $A_n$
and explain how this translate into a deep neural networks.
We then show that this strategy
can also be applied to the other dense lattices studied in Section~\ref{sec_complex_ana}. \\

\textbf{Folding of $A_n$} \\
The input space $\mathcal{D}(\B)$ is defined as in Section~\ref{sec_dec_bound}.
Given the basis orientation as in~Corollary~\ref{th_func_VR},
the projection of $g_j$ on $\mathcal{D}(\B)$ is $g_j$ itself, for $j\ge 2$. 
We also denote the bisector hyperplane
between two vectors $g_j, g_k$ by $BH(g_j, g_k)$
and its normal vector is taken to be $v_{j,k}=~g_j-g_k$. 
Let $\tilde{y} \in \mathcal{D}(\B)$ and let $\tilde{v}_{j,k}$ be a vector with the $n-1$ last coordinates of $v_{j,k}$. 
First,  we define the function $F_{j,k}$, where $2\le j < k \le~n$, which performs the following reflection.
Compute $\tilde{y} \cdot \tilde{v}_{j,k}$.
If the scalar product is non-positive, replace $\tilde{y}$
by its mirror image with respect to $BH(g_j, g_k)$.
Since $2\le j < k \le~n$, there are $\binom{n-1}{2}=(n-1)(n-2)/2$ functions $F_{j,k}$. 
The function $F_{A_n}$ performs sequentially these $O(n^2)$ reflections:
\begin{align}
F_{A_n}=F_{2,2} \text{ o } F_{2,3}  \text{ o } F_{3,3}  \text{ o }  ...  \text{ o } F_{n,n},
\end{align} 
and 
\begin{align}
F_{A_n}:\mathcal{D}(\B) \rightarrow \mathcal{D}(\B)'.
\end{align} 

\begin{theorem}
\label{theo_An_linear}
Let us consider the lattice $A_n$ defined by the Gram matrix~\eqref{eq_basis_An}. 
We have (i) $\D(\B)' \subset \D(\B)$, (ii) for all $\tilde{y} \in \D(\B)$, $f(\tilde{y}) = f(F_{A_n}(\tilde{y}))$ and (iii) $f$ has exactly 

\begin{equation}
\label{eq_nb_pieces_folding}
2n-1
\end {equation}
pieces on $\D(\B)'$.
\end{theorem}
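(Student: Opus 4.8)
The plan is to establish the three claims in order, with the bulk of the work lying in (ii) and in the piece count (iii). For (i), I would argue directly from the definition of the reflections $F_{j,k}$: each reflection maps $\mathcal{D}(\B)$ into itself, because $BH(g_j,g_k)$ is a symmetry hyperplane of the domain $\mathcal{D}(\B)$ (the projection of $\PPB$), owing to the permutation symmetry of the Gram matrix~\eqref{eq_basis_An} among the coordinates $2,\dots,n$. Composing finitely many such self-maps and then passing to the image gives $\D(\B)' \subset \D(\B)$; I would also note that $\D(\B)'$ is precisely the intersection of $\D(\B)$ with the half-spaces $\{\tilde{y}\cdot \tilde{v}_{j,k} \ge 0\}$, i.e. a fundamental domain for the action of the symmetric group $S_{n-1}$ permuting $g_2,\dots,g_n$ (a Weyl-chamber-type region). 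Making this identification explicit is the cleanest route, since it feeds directly into (iii).

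For (ii), the key point is that the decision boundary function $f$ of Theorem~\ref{th_func_VR_2} inherits the same permutation symmetry. Concretely, swapping $g_j \leftrightarrow g_k$ (for $2\le j<k\le n$) permutes the corner set $\CBONE$ and the neighbor sets $\CN(x)\cap\CBZERO$ among themselves, hence permutes the affine pieces $h_{m,k}$ in~\eqref{eq_boundary_funct} while leaving the $\wedge/\vee$ structure and the value of $f$ invariant. Since each $F_{j,k}$ either fixes $\tilde{y}$ or replaces it by its image under exactly such a transposition-reflection, $f(\tilde{y}) = f(F_{j,k}(\tilde{y}))$ for every $j,k$, and composing gives $f(\tilde{y}) = f(F_{A_n}(\tilde{y}))$. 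The care needed here is to check that the reflection $F_{j,k}$ acting on $\tilde{y}\in\mathbb{R}^{n-1}$ really does correspond, under the orientation of Corollary~\ref{th_func_VR}, to the ambient transposition of basis vectors $g_j,g_k$ — this is a short linear-algebra verification using $v_{j,k}=g_j-g_k$ and the fact that the $g_j$ for $j\ge2$ lie in the hyperplane $\{y_1=0\}$.

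For (iii), once $\D(\B)'$ is identified as the $S_{n-1}$-fundamental domain, restricting $f$ to it collapses all pieces that are related by permutations down to one representative per orbit. From the proof of Theorem~\ref{theo_nbReg_Lin}, the pieces of $f$ are organized into convex groups indexed by $x\in\CBONE$, one group of size $i=|\CN(x)\cap\CBZERO|$ for each. Under the $S_{n-1}$ action the orbit of such a group is determined only by $i$; moreover inside $\D(\B)'$ each orbit of size-$i$ groups contributes a single convex cluster, and I would count how many distinct affine pieces survive there. The expectation is that one gets the ``staircase'' structure: the plateau hyperplanes $h_{p_j}$ orthogonal to $g_1$ collapse to $n$ distinct levels (one per value $i=1,\dots,n$), contributing $n$ pieces, while the non-plateau facets of the representative groups contribute $n-1$ further pieces, for a total of $2n-1$. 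Verifying this last tally carefully — i.e. that no two representative pieces accidentally coincide and that the fundamental domain sees exactly one copy of each — is the main obstacle; the natural tool is to track, as in Example~2 and Figure~\ref{fig_simplex_A3}, which edges of the ``neighbor figure'' of $\mathcal{C}_{\PPB}$ lie in $\D(\B)'$ after folding, and to confirm inductively in $n$ that their labels are pairwise distinct.
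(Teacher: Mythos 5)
Your proposal is correct and follows essentially the same route as the paper's proof: the reflections $F_{j,k}$ realize the transpositions $g_j\leftrightarrow g_k$ that leave the basis (and hence $f$ and $\D(\B)$) invariant, $\D(\B)'$ is the resulting fundamental domain, and the surviving pieces are counted by keeping one representative corner per orbit, giving $n$ plateau pieces plus $n-1$ others. The paper performs your final tally concretely by walking the chain of corners $g_1, g_1+g_2,\dots,g_1+\cdots+g_n$ and checking which neighbors in $\CBZERO$ lie on the non-negative side of all bisector hyperplanes, which is exactly the verification you identify as the remaining step.
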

Equation~\eqref{eq_nb_pieces_folding} is to be compared with \eqref{eq_An_pieces}.
\begin{example2}[Continued]
\label{ex_A3_2}
The function $f$ for $A_3$ restricted to $\D(\B)'$ (i.e. the function to evaluate after folding), say $f_{\D(\B)'} $, is
\begin{align}
\label{eq_A3}
f_{\D(\B)'} = \Big[ h_{p1} \vee h_{1} \Big] \wedge \Big[  h_{p2} \vee h_{2}  \Big] \wedge \Big[ h_{p3} \Big]. 
\end{align}
The general expression of $f^n_{\D(\B)'} $ for any dimension $n$ is 
\begin{align*}
f^n_{\D(\B)'} = & \Big[  h_{p_1} \vee h_{1}  \Big] \wedge \Big[ h_{p_2} \vee h_{2} \Big] \wedge ... \ \wedge    \Big[  h_{p_{n-1}} \vee h_{n-1} \Big] \wedge \Big[ h_{p_n} \Big].
\end{align*}
\end{example2}

\begin{proof}
To prove (i) we  use the fact that $BH(g_j,g_k)$, $2\le  j < k \le~n$,
is orthogonal to  $\D(\B)$, then the image of $\tilde{y}$  via the folding
$F$ is  in $\D(\B)$.   

(ii) is the  direct result of  the symmetries  in the
$A_n$  basis where  the  $n$ vectors have the same length and the angle between any two basis vectors is $\pi/3$.  
A reflection with respect $BH(g_j, g_k)$ switches $g_j$  and $g_k$ in
the hyperplane containing $\D(\B)$ and orthogonal to $e_1$. Switching $g_j$
and $g_k$ does  not change the decision boundary because  of the basis
symmetry, hence  $f$ is unchanged.

Now, for (iii), how  many pieces are left after all reflections?
Similarly to the proof of Theorem~\ref{theo_nbReg_Lin}, 
we walk in $\CBZERO$ and for a given point $x \in \CBZERO$ we count 
the  number of elements of $\CN(x+b_1) \cap \CBZERO$ (via Equation~\eqref{eq_prop}) 
that are on the proper side of all bisector hyperplanes.
Starting with $\CN(x+b_1) \cap \CBZERO$, only $0$ and $g_2$ are on the proper side: 
any other point $g_{j}$, $j \ge 3$, is on the  other  side  of  the  the bisector  hyperplanes $BH(g_2, g_j)$.
Hence, the lattice  point $g_1$, which had
$n$ neighbors in $\CBZERO$ before folding, only has  2 now. 
$f$ has only two pieces around $g_1$ instead of $n$.
Then, from  $g_{2}$ one can add $g_{3}$ but no other for
the same reason. 
The point $g_2+g_1$ has only 2  neighbors in  $\CBZERO$ on the proper side.
The pattern replicates until the last corner reaching
$g_1+g_2+\ldots+g_n$ which has only one neighbor.
So we get $2(n-1)+1$ pieces.
\end{proof}

\textbf{From folding to a deep ReLU neural network} \\
For sake of simplicity and without loss of generality, in addition to the standard ReLU activation function ReLU$(a)=\max(0, a)$, we also allow the function $\max(0,-a)$ and the identity as activation functions in the neural network.

To implement a reflection $F_{j,k}$, one can use the following strategy. 

\begin{itemize}
\item Step~1: rotate the axes to have the $i$th axis $e_i$ perpendicular to the reflection hyperplane and shift the point (i.e. the $i$th coordinate) to have the reflection hyperplane at the origin.
\item Step~2: take the absolute value of the $i$th coordinate. 
\item Step~3: do the inverse operation of step 1.
\end{itemize}

Now consider the ReLU neural network\footnote{This neural network uses both ReLU and linear activation functions. It can still be considered as a ReLU neural network as a linear activation function can be implemented with ReLU neurons.} illustrated in Figure~\ref{fig_relu_ref}. 
The edges between the input layer and the hidden layer (the dashed square) represent the rotation matrix (Step 1), where the $i$th column is repeated twice, and $p$ is a bias applied on the $i$th coordinate.
Within the dashed square, the absolute value of the $i$th coordinate is computed
and shifted by $-p$. The activation functions in the dashed square, $\max(0,a)$, $\max(0,-a)$, and $a$, implement the absolute value operation (Step 2). 
Finally, the edges between the hidden layer and the output layer represent the inverse rotation matrix (Step 3). 
This ReLU neural network computes a reflection $F_{j,k}$. We call it a reflection block.
Note that the width of a reflection block is $O(n)$.

The function $F_{A_n}$ can be implemented by a simple
concatenation
of reflection blocks. 
This leads to a very deep 
and narrow neural network
of depth $\mathcal{O}(n^2)$ (the number of functions $F_{j,k}$) and width $\mathcal{O}(n)$ (the width of a reflection block is linear in $n$).
\begin{figure}
    \centering
    \includegraphics[scale=0.75]{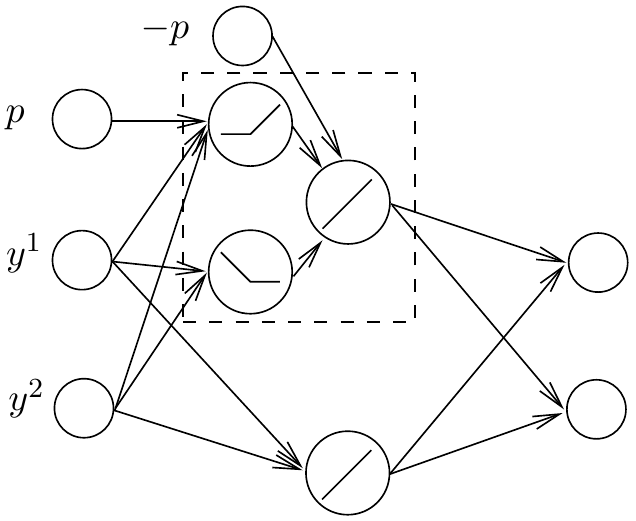}
     \caption{Reflection ReLU neural network (called reflection block). }
     \label{fig_relu_ref}
\end{figure}

Regarding the $2n-1$ remaining pieces after folding, we have two options 
(in both cases, the number of operations involved is negligible compared to the previous folding operations). 
To directly discriminate the point with respect to $f$, 
we implement the HLD on these remaining pieces with two additional hidden layers 
(as in Figure~\ref{fig_A2_net}): 
project $y_{folded}$ on the $2n-1$ hyperplanes (see Theorem~\ref{theo_An_linear}), with one layer of width $2n +1$, 
and compute the associated Boolean equation with an additional hidden layer.
If needed, we can evaluate $f(\tilde{y})$ via $\mathcal{O}(\log(n))$ additional hidden layers. 
First, compute the $n-1$ 2-$\vee$ via two layers of size $\mathcal{O}(n)$ containing several ``max ReLU neural networks" 
(see e.g. Figure 3 in \cite{Arora2018}). Then, compute the $n$-$\wedge$ via $\mathcal{O}(\log(n))$ layers. 

Consequently, $f$ can be computed by a ReLU network of depth $\mathcal{O}(n^2)$ and width $\mathcal{O}(n)$. \\ 


\textbf{Folding of other dense lattices} \\
We now present the folding procedure for other lattices.
%
%
%


First, we consider $D_n$
defined by the Gram matrix~\eqref{eq_second_kind}.
$F_{D_n}$ is defined as $F_{A_n}$
except that we keep only the $F_{j,k}$ for $j,k\geq3$. 
Moreover, the $g_i$ are now the basis vectors of $D_n$ instead of $A_n$,
where $g_2$ is the basis vector orthogonal to $g_1$. 
There are $\binom{n-2}{2}=(n-2)(n-3)/2$ functions $F_{j,k}$
and the function $F_{D_n}$ performs sequentially the $O(n^2)$ reflections.

\begin{theorem}
\label{theo_Dn_second_kind_lin}
Let us consider the lattice $D_n$ defined by the Gram matrix~\eqref{eq_second_kind}. 
We have (i) for all $\tilde{y} \in \D(\B)$, $f(\tilde{y})~=~f(F_{D_n}(\tilde{y}))$ and (ii) $f$ has exactly 

\begin{equation}
\label{equ_fold_Dn}
6n-12
\end {equation}
pieces on $\D'(\B)$.
\end{theorem}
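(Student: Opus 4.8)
The plan is to mirror the structure of the proof of Theorem~\ref{theo_An_linear}, adapting it to the $D_n$ geometry encoded by the Gram matrix~\eqref{eq_second_kind}. First I would establish (i): the reflections $F_{j,k}$, $3 \le j < k \le n$, have mirror hyperplanes $BH(g_j,g_k)$ whose normals $v_{j,k}=g_j-g_k$ lie in the hyperplane $\{y : y\cdot e_1 = 0\}$ containing $\D(\B)$ (since $g_j,g_k$ both have vanishing first coordinate under the orientation of Corollary~\ref{th_func_VR}), so each $F_{j,k}$ maps $\D(\B)$ into itself; and that the decision boundary is invariant under $F_{j,k}$ because swapping $g_j \leftrightarrow g_k$ is a symmetry of the basis (all $g_j$, $j\ge 3$, have equal length and pairwise $\pi/3$ angles, and each makes the same angle with $g_1$ and with $g_2$), hence $f(\tilde y) = f(F_{D_n}(\tilde y))$. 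This reuses verbatim the argument of parts (i)--(ii) of Theorem~\ref{theo_An_linear}, with the only subtlety being to check that $g_2$ (the vector orthogonal to $g_1$) plays a neutral role: it is fixed by every $F_{j,k}$, $j,k\ge 3$, so the two ``neighborhood patterns'' $(l_i)$ and $(ll_i)$ of Theorem~\ref{theo_nbReg_Lin_Dn_second_kind} are each preserved by folding.

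Next I would count the surviving pieces, walking through $\CBZERO$ exactly as in the proof of Theorem~\ref{theo_nbReg_Lin}, but now keeping only those neighbors in $\CN(x+g_1)\cap\CBZERO$ that lie on the correct side of all bisector hyperplanes $BH(g_j,g_k)$, $j,k\ge 3$. For a point of pattern $(l_i)$, i.e.\ $\sum_j g_j + g_1$ with $\sum_j g_j$ a partial sum of $\{0,g_3,\dots,g_n\}$, the same reasoning as for $A_n$ shows only $\sum_j g_j$ and $\sum_j g_j + g_3$ survive (any $g_k$, $k>3$, is cut off by $BH(g_3,g_k)$), so the $(l)$-type contributes $2$ pieces at each of the $2^{n-2}$ such corners — but after folding these collapse to the single chain $0, g_3, g_3+g_4, \dots, g_3+\cdots+g_n$, giving $2(n-2)+1$ pieces along that chain, analogous to the $2(n-1)+1$ count for $A_n$. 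For pattern $(ll_i)$, i.e.\ corners of the form $\sum_j g_j + g_2 + g_1$, the extra neighbor $g_2$ (and the $g_2$-shifted copies) persists through folding since $g_2$ is untouched; here I expect each surviving corner along the folded chain to contribute $3$ neighbors — $\sum_j g_j + g_2$, $\sum_j g_j + g_2 + g_3$, and $\sum_j g_j + g_3$ — mirroring the $(1)$ term $1+2(n-2-i)$ of Theorem~\ref{theo_nbReg_Lin_Dn_second_kind} reduced by folding, while the binomial term $(2)$ of pairs $g_i+g_j$ is entirely folded away. Summing the folded $(l)$ and $(ll)$ contributions along the two chains, then subtracting the overlap ($-1$, exactly as the ``$-1$'' in \eqref{eq_nbReg_Dn_second_kind}), should yield $6n-12$; I would verify this against the $D_3$ case ($6\cdot 3 - 12 = 6$) using Figure~\ref{fig_simplex_D3_2}.

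The main obstacle I anticipate is the bookkeeping in the $(ll)$ case: unlike $A_n$, the boundary around a $g_2$-corner is itself a min of two $\vee$-convex pieces (this is precisely why the basis is SVR rather than VR, per Definition~\ref{def_semi-Voronoi-reduced}), so I must track how folding interacts with both the ``$g_2$-branch'' and the ``$g_3$-branch'' of $\CN(x+g_1)\cap\CBZERO$ and make sure no pieces are double-counted where the two chains meet. A clean way to organize this is to project everything onto $\D(\B)$, identify the folded domain $\D'(\B)$ explicitly as a simplex-like cone bounded by the $BH(g_j,g_k)$, and enumerate which corners of the ``neighbor figure'' have representatives in $\D'(\B)$ — then the piece count is just the total degree of $f$ over those representatives. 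I would conclude by noting, as in the $A_n$ discussion, that $F_{D_n}$ is realized by a concatenation of $\binom{n-2}{2} = O(n^2)$ reflection blocks of width $O(n)$, so $f$ on $D_n$ is computed by a ReLU network of depth $O(n^2)$ and width $O(n)$, to be compared with the exponential count \eqref{eq_nbReg_Dn_second_kind}.
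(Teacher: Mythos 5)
Your part (i) and the overall strategy (walk along the chain of corners surviving the folding and count, for each, the neighbours in $\CN(x)\cap\CBZERO$ that also lie on the non-negative side of every $BH(g_j,g_k)$, $3\le j<k\le n$) match the paper's proof, as does the treatment of the $(l_i)$ pattern, which correctly reduces to $2(n-2)+1$ pieces. The genuine gap is in your treatment of the $(ll_i)$ pattern: you assert that ``the binomial term $(2)$ of pairs $g_i+g_j$ is entirely folded away,'' so that each surviving corner $\sum_j g_j+g_2+g_1$ contributes only $3$ pieces. That is false. For the corner $x+g_1$ with $x=g_3+\cdots+g_i+g_2$, the neighbour $g_3+\cdots+g_i+g_{i+1}+g_{i+2}\in\CBZERO$ coming from the $\binom{n-2-i}{2}$ family is a prefix sum of the chain $g_3,g_3+g_4,\dots$ and hence lies on the non-negative side of \emph{all} reflection hyperplanes $BH(g_j,g_k)$, $3\le j<k\le n$; exactly one representative of that family therefore survives at each step $i\le n-4$. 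The correct per-corner counts are $4$ (namely $x$, $x+g_{i+1}$, $x-g_2+g_{i+1}$, and $x-g_2+g_{i+1}+g_{i+2}$) for $i=0,\dots,n-4$, then $3$ and $1$ at the two terminal corners, giving $(n-3)\cdot4+3+1$ for the $(ll)$ chain, which together with $(n-2)\cdot2+1$ from the $(l)$ chain and the $-1$ overlap yields $6n-12$.

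With your counts the arithmetic does not close: $3$ neighbours per interior $(ll)$-corner gives roughly $(n-3)\cdot 3+\cdots$, i.e.\ a total of the form $5n-12$, contradicting the statement. The check against $D_3$ (and even $D_4$ only barely) hides the error because the binomial term vanishes or nearly vanishes there; you would need $n\ge 5$ to see the missing pieces. A secondary, purely notational slip: in the $(l_i)$ pattern the second surviving neighbour of $\sum_j g_j+g_1$ is $\sum_j g_j+g_{i+1}$ (the next vector in the chain), not $\sum_j g_j+g_3$; your subsequent description of the chain is nevertheless correct, so this does not affect the count.
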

Equation~\eqref{equ_fold_Dn} is to be compared with \eqref{eq_nbReg_Dn_second_kind}.\\ 
\textbf{Sketch of proof.} 
\ To count the number of pieces of $f$, defined on $\D'(\B)$, 
we need to enumerate the cases where both $x \in \CBONE$ and $x'\in \CN(x) \cap \CBZERO$ are 
on the non-negative side of all reflection hyperplanes. 
Among the points in $\mathcal{C}_{\PPB}$ only the points
\begin{enumerate}
\item $x_1=g_3+...+g_{i-1}+g_i$ and $x_1+g_1$,
\item $x_2=g_3+...+g_{i-1}+g_i+g_2$ and $x_2+g_1$,
\end{enumerate}
$i \leq n$, are on the non-negative side of all reflection hyperplanes.
It is then easily seen that the number of pieces of $f$, defined on $\D'(\B)$,
is given by equation~\eqref{eq_nbReg_Dn_second_kind} reduced as follows. 
The three terms $(n-2-i)$ (i.e. $2(n-2-i)$ counts for two), the term $\binom{n-2-i}{2}$, and the term $\binom{n-2}{i}$ become 1 at each step $i$,
for all $0 \leq i \leq n-3$ (except $\binom{n-2-i}{2}$ which is equal to 0 for $i=n-3$).
Hence, \eqref{eq_nbReg_Dn_second_kind} becomes $(n-3)\times(2+4)+(2+3)+1$, which gives the announced result. \\

Consequently, $f$ can be computed by a ReLU network of depth $\mathcal{O}(n^2)$ and width $\mathcal{O}(n)$ 
(i.e. the same size as the one for $A_n$).

Second, we show how to fold the function for $E_n$.
$F_{E_n}$ is defined as $F_{A_n}$
except that, for the functions $F_{j,k}$, $4\le j < k \le~n$ and $j=2,k=3$ instead of 
$2\le j < k \le~n$,
where $g_2,g_3$ are the basis vectors orthogonal to $g_1$. 
There are $\binom{n-3}{2}+1=(n-3)(n-4)/2+1$ functions $F_{j,k}$
and the function $F_{E_n}$ performs sequentially the $O(n^2)$ reflections.

\begin{theorem}
\label{theo_En_folding}
Let us consider the lattice $E_n$, $6\leq n \leq 8$, defined by the Gram matrix~\eqref{theo_nbReg_Lin_Dn_second_kind}. 
We have (i) for all $\tilde{y} \in \D(\B)$, $f(\tilde{y})~=~f(F_{E_n}(\tilde{y}))$ and (ii) $f$ has exactly 

\begin{equation}
\label{equ_fold_En}
12n-40
\end {equation}
pieces on $\D'(\B)$.
\end{theorem}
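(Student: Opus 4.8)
The plan is to mirror the proofs of Theorems~\ref{theo_An_linear} and~\ref{theo_Dn_second_kind_lin}. Part~(i) is the ``folding leaves $f$ invariant'' statement, part~(ii) is the ``count the surviving pieces'' statement, and the only genuine work lies in~(ii).

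For~(i), recall that with the orientation of Corollary~\ref{th_func_VR} all of $g_2,\dots,g_n$ lie in $\{y\in\R^n:\ y\cdot e_1=0\}$, so each bisector normal $v_{2,3}=g_2-g_3$ and $v_{j,k}=g_j-g_k$, $4\le j<k\le n$, is orthogonal to $e_1$; hence every hyperplane $BH(g_2,g_3)$, $BH(g_j,g_k)$ contains the $e_1$-direction, so each reflection $F_{j,k}$ maps $\D(\B)$ into itself and does not change the first coordinate, i.e. does not change $f$. Next, inspecting the Gram matrix $\Gamma_{E_n}$ one sees that the transposition $(g_2\,g_3)$ and the transpositions $(g_j\,g_k)$ with $j,k\ge 4$ are automorphisms of $\Gamma_{E_n}$ (the rows/columns exchanged are equal up to that exchange, and row/column~$1$ is fixed), and the associated orthogonal maps are exactly the Euclidean reflections across $BH(g_2,g_3)$ and $BH(g_j,g_k)$. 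Such a map is a lattice automorphism fixing $g_1$, hence it preserves $\CBONE$, $\CBZERO$ and the whole Voronoi tessellation, and therefore fixes the decision boundary. Composing the $F_{j,k}$ yields $f(\tilde y)=f(F_{E_n}(\tilde y))$ for every $\tilde y\in\D(\B)$, which is~(i).

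For~(ii), the reflections occurring in $F_{E_n}$ generate the group $S_2\times S_{n-3}$ permuting the directions $\{g_2,g_3\}$ and $\{g_4,\dots,g_n\}$ (the two root subsystems are mutually orthogonal, since $(g_2-g_3)\cdot(g_j-g_k)=0$), and executing the $F_{j,k}$ in the prescribed order sends $\tilde y$ to the unique orbit representative whose $\{g_4,\dots,g_n\}$-coordinates are sorted and whose $\{g_2,g_3\}$-coordinates are ordered. Consequently the pieces of $f$ on $\D'(\B)$ are exactly those arising from facets between a corner $x$ and a neighbour $x'\in\CN(x)\cap\CBZERO$ whose $\{g_4,\dots,g_n\}$-parts are ``sorted prefixes'' $g_4+\dots+g_{3+i}$, possibly augmented by $0$, $g_2$, or $g_2+g_3$. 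In the enumeration of Theorem~\ref{theo_nbReg_Lin_En} this means that every combinatorial factor counting a free choice of further vectors among $g_4,\dots,g_n$ (the factors $n-3-i$, $\binom{n-3-i}{2}$, $\binom{n-3-i}{3}$, and the outer multiplicity $o_i$) collapses to $1$ as soon as it is nonzero, while the reflection $F_{2,3}$ identifies each $g_2$-corner / $g_2$-neighbour with its $g_3$-mirror, so the factor $2$ in front of $(ll_i)$ disappears and the mixed terms $g_2+g_j\leftrightarrow g_3+g_j$ and $g_2+g_j+g_k\leftrightarrow g_3+g_j+g_k$ in $(lll_i)$ are each halved. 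Carrying out this reduction level by level, level $i$ contributes $2$ from $(l_i)$, $4$ from $(ll_i)$ and $6$ from $(lll_i)$ — total $12$ — whenever $n-3-i\ge 3$; the vanishing binomials then cut this to $2+4+5=11$ at the level $n-3-i=2$ and to $2+3+3=8$ at the level $n-3-i=1$, and at the extreme level $i=n-3$ only the $(l_i)$-corner survives, contributing $1$ (the analogue of the ``$-1$'' trim in Theorem~\ref{theo_Dn_second_kind_lin}). Since $6\le n\le 8$ there are $n-5\ge 1$ interior levels, so the number of pieces on $\D'(\B)$ is $12(n-5)+11+8+1=12n-40$.

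The main obstacle is the geometric claim underpinning~(ii): one must show that for each surviving corner $x$ the neighbours $x'\in\CN(x)\cap\CBZERO$ that stay on the fundamental-chamber side of every reflection hyperplane are precisely the ``sorted-prefix-compatible'' ones, and — crucially — that the fold creates no spurious pieces, neither by gluing together distinct pre-images carrying different local affine maps, nor along $\partial\D'(\B)$ (in particular the $F_{2,3}$-fixed corners containing $g_2+g_3$ sit on a chamber wall and must be counted exactly once). This is the same incidence bookkeeping already done for $A_n$ and $D_n$; once it is secured, the arithmetic above is immediate, and the restriction $6\le n\le 8$ is exactly what makes $\Gamma_{E_n}$ a genuine Gram matrix of $E_n$ and keeps the small-$n$ degeneracies out of the picture.
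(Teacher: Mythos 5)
Your proposal follows the paper's proof essentially verbatim: part (i) via the orthogonality of the bisector hyperplanes to $e_1$ together with the Gram-matrix symmetry, and part (ii) by restricting the enumeration of Theorem~\ref{theo_nbReg_Lin_En} to the corners and neighbours that remain on the non-negative side of all reflection hyperplanes (the paper's Lemma~\ref{lem_fold_3}) and collapsing each combinatorial factor accordingly. Your per-level ledger $12(n-5)+11+8+1$ agrees term-by-term with the paper's reduction of the $(l_i)$, $(ll_i)$ and $(lll_i)$ contributions — indeed it handles the boundary levels and the final merging more cleanly than the paper's own closed-form expressions — so the count $12n-40$ is reached by the same route.
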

Equation~\eqref{equ_fold_En} is to be compared with~\eqref{eq_En}. 
Consequently, $f$ can be computed by a ReLU network of depth $\mathcal{O}(n^2)$ and width $\mathcal{O}(n)$. \\





\subsection{Neglecting many affine pieces in the decision boundary } 

In the previous section, we showed that complexity reduction can be achieved for some structured
lattices by exploiting their symmetries.
What about unstructured lattices? We consider the problem of decoding on the Gaussian channel. The goal is to obtain quasi-MLD performance.

\subsubsection{Empirical observations}
In \cite{Corlay2018}, we performed several computer simulations with dense lattices (e.g. $E_8$) and MIMO lattices
(such as the ones considered in \cite{Samuel2017}), which are typically not dense in low to moderate dimensions. 
We aimed at minimizing the number of parameters in a standard fully-connected feed-forward sigmoid neural network \cite{Goodfellow2016} while maintaining quasi-MLD performance. The training was performed with usual gradient-descent-like techniques~\cite{Goodfellow2016}. The network considered is shallow, similar to the HLD, as it contains only three hidden layers. 
Let $W$ be the number of parameters in the neural networks (i.e. the number of edges). 
To be competitive, $W$ should be smaller than $2^n$. For $E_8$ we obtained a complexity ratio $\frac{\log_2 W}{n}=2.0$ whereas for the MIMO lattice the ratio is $\frac{\log_2 W}{n}=0.78$.

We also compared the decoding complexity of MIMO lattices and dense lattices ($BW_{16}$ in this case) in \cite{Corlay2018_a}, with a different network architectures (but still having the form of a feed-forward neural network). The conclusion was the same: While it is possible to get a reasonable complexity for MIMO lattices, it is much more challenging for dense lattices.

\subsubsection{Explanation}
We explained in the first part of this paper that all pieces of the decision boundary function are facets of Voronoi regions. As a result, the (optimal) HLD needs to consider all Voronoi relevant vectors, which is equal to $\tau_f=2^{n+1}-2$ for random lattices. 
However, \eqref{equ_bound_prel} shows that a term in the union bound decreases exponentially with $\|x\|^2$, which is a standard behavior on the Gaussian channel.
Numerical evaluations of a union bound truncated at a squared distance of  $2 \cdot d^2(\Lambda)$ (3dB margin in VNR) yield 
very tight results at moderate and high VNR.  Therefore, only the first lattice shells need to be considered for quasi-MLD performance on the Gaussian channel. 

Consequently, we performed simulations to know how many Voronoi facets contribute to the 3dB-margin quasi-MLD error probability for random MIMO lattices generated by a matrix $G$ with random i.i.d $\mathcal{N}(0,1)$ components. We numerically generated 200000 random MIMO lattices $\Lambda$ and computed the average number of lattice points in a sphere of squared radius $2\cdot d^2(\Lambda)$ centered at the origin. 
The results are reported in Table~\ref{table_nbPts}. Figure~\ref{fig_nbPts_approx} also provide the distribution for $n=14$.
The random lattices in dimension $n=14$ are generated by a matrix $G$ with random i.i.d. $\mathcal{N}(0,1)$ components. We numerically generated 200000 random lattices in dimension $n=14$ to estimate the probability distribution.
For comparison, the number of points in such a sphere is  25201 for the dense Coxeter-Todd lattice in dimension 12 and 588481 for the dense Barnes-Wall lattice in dimension 16 \cite[Chap.~4]{Conway1999}. Note however that while the numbers shown in Table~\ref{table_nbPts} are relatively low, the increase seems to be exponential: The number of lattice points in the sphere almost doubles when adding two dimensions.

\begin{table}
\begin{center}
\begin{tabular}{|c|c|c|c|c|c|}
 \hline
Dimension $n$  & 10 & 12 &  14 & 16   \\
  \hline
Average number of points & 59 & 109 & 201 & 361  \\
\hline
\end{tabular}
\end{center}
\caption{Average number of points in a sphere of squared radius $2\cdot d^2(\Lambda)$ centered at the origin for random MIMO lattices $\Lambda$ .}
\label{table_nbPts}
\end{table}
\begin{figure}[h]
    \centering
   	 \includegraphics[angle=270,scale=0.36]{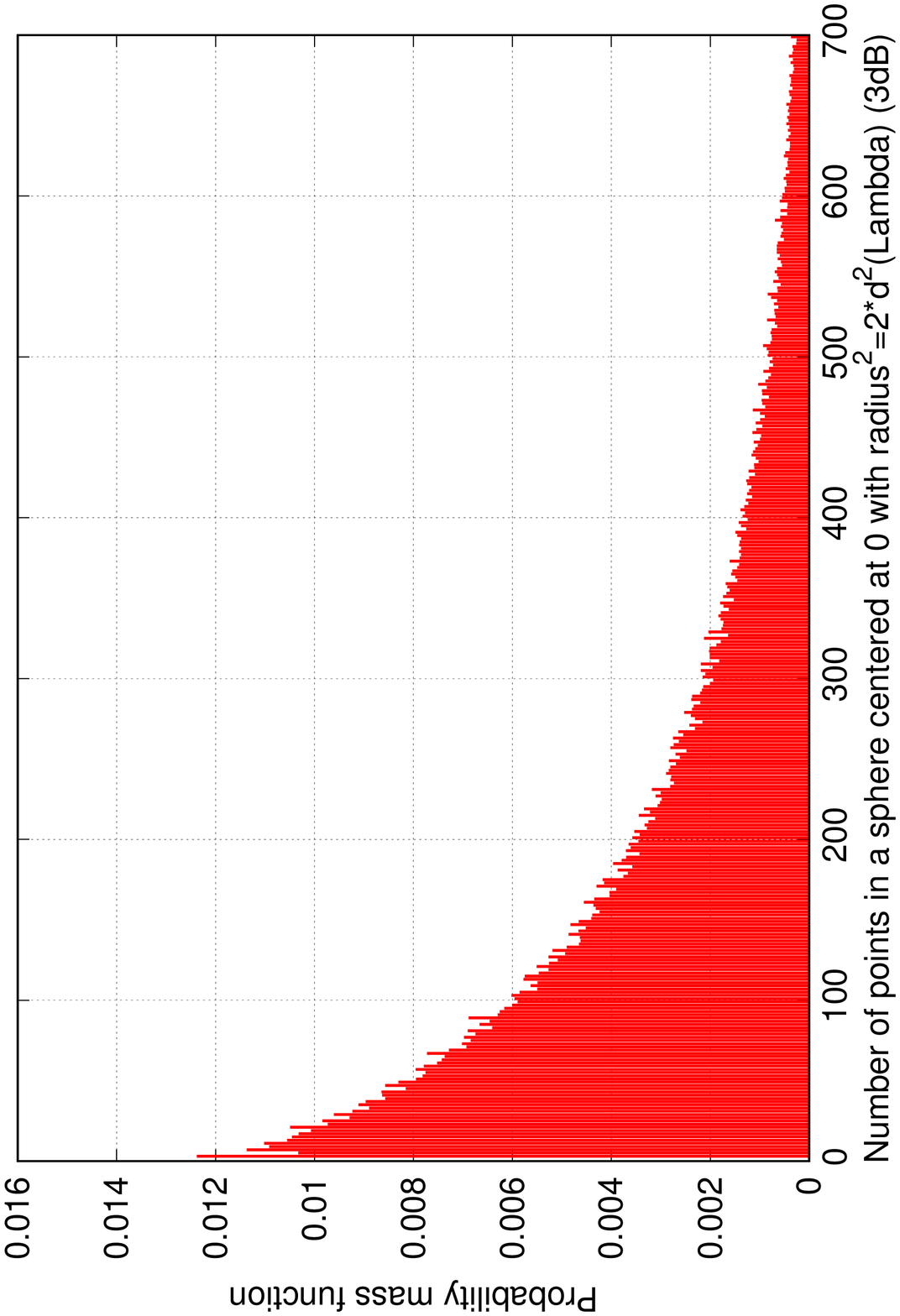}
     \caption{Distribution of the number of lattice points in a sphere of squared radius $2  d^2(\Lambda)$ for $n=14$. }
     \label{fig_nbPts_approx}
\end{figure}
This means that the number of Voronoi facets significantly contributing to the error probability is much smaller for random unstructured MIMO lattices compared to structured lattices in these dimensions. As a result, the number of hyperplanes
that should be taken into account for quasi-MLD is much smaller for random unstructured MIMO lattices. 
In other words, the function to compute for quasi-optimal decoding is ``simpler": A piecewise linear boundary with a relatively low amount of affine pieces can achieve quasi-MLD for random MIMO lattices.  

\subsection{Learning perspective}

We argue that regular learning techniques for shallow neural networks, such as gradient-descent, using Gaussian distributed data at moderate SNR for the training, naturally selects the Voronoi facets contributing to the error probability.
We estimated in the previous subsection, via computer search, that the number of Voronoi facets from this category is low for unstructured MIMO lattices. 
This explains why,
for quasi-optimal decoding in low to moderate dimensions, shallow neural networks can achieve satisfactory performance at reasonable complexity with unstructured MIMO lattices. 
However, the number of Voronoi facets to consider is much higher for structured lattices. This elucidates why it is much more challenging to train a shallow neural network with structured lattices.


In the first part of this section, we explained that for this latter category of lattices, such as $A_n$, one should consider a deep neural network. It is thus legitimate to suppose that training a deep neural network to decode $A_n$ should be successful.
However, when this category of neural networks is used, even when we know that their function class contains the target function, the training is much more challenging.
In particular, even learning simple one dimensional oscillatory function, such as the triangle wave function illustrated on Figure~\ref{fig_triangle}, is very difficult whereas they can be easily computed via folding. This can only be worst for high-dimensional oscillatory functions such as the boundary decision functions.

\begin{figure}[H]
\centering
\includegraphics[scale=0.835]{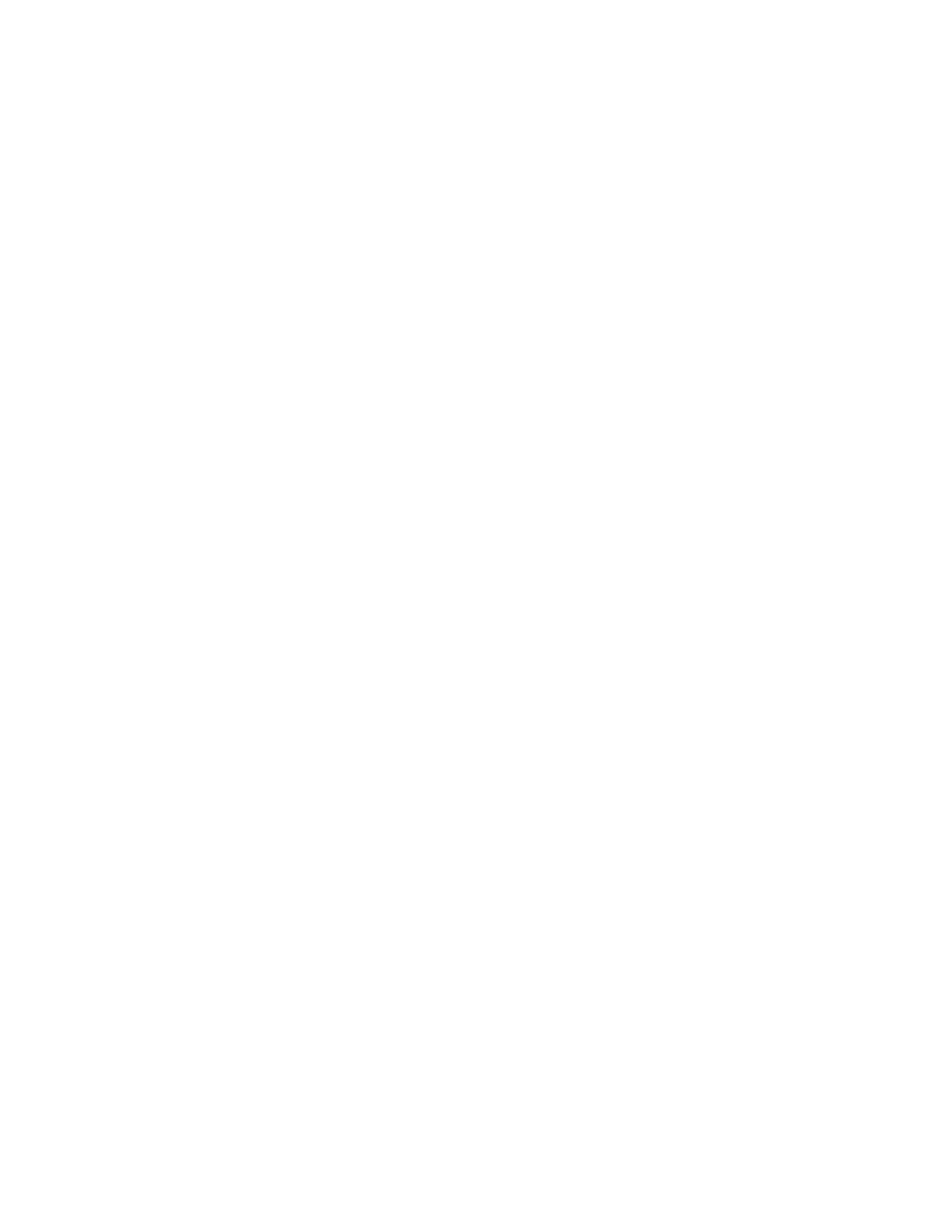}
\caption{Simple one-dimensional function which is challenging to learn via usual techniques.} 
\label{fig_triangle}
\end{figure}

This might explain the success of model-based techniques, where the neural network architectures are established by unfolding known decoding algorithms and where the weights are initialized based on these algorithms \cite{Nachmani2016}. Learning is then used to explore the functions in the function class of the neural network that are not ``too far'' from the initial point in the optimization space. Nevertheless, the initial point should already be of good quality to get satisfactory performance and learning amounts to fine tuning the algorithm.

\section{Conclusions}

The decoding problem has been investigated from a neural network perspective. 
We discussed what can and cannot be done with feed-forward neural networks in light of the complexity of the decoding problem.
We have highlighted that feed-forward neural networks should compute a CPWL boundary function to decode.
When the number of pieces in the boundary function is too high, the size of the shallow neural networks becomes prohibitive and
deeper neural networks should be considered. For dense structured lattices, this number of pieces is high even in moderate dimensions whereas it remains reasonable in low and moderate dimensions for unstructured random lattices.

\appendix

\section{Appendix}

\subsection{Proof of Equation~\eqref{eq_pe_opt}}
\label{App_proof_small_o}
$P_e(ub)=\frac{1}{2}\sum_{x\in\Lambda\setminus\{0\}}
\exp\left(-\frac{\|x\|^2}{8\sigma^2} \right)=
\frac{1}{2}\sum_{x\in\Lambda\setminus\{0\}}
\exp\left(-\frac{\Delta}{8\sigma_{max}^2}\cdot \|x\|^2 \right)$,
where the signal-to-noise ratio, here called VNR,
is $\Delta=\sigma_{max}^2/\sigma^2$.
After grouping the lattice points shell by shell, with shell of index $k$ located
at distance $d_k$ from the origin, we obtain
\begin{equation}
\label{equ_f_Delta}
P_e(ub)=f(\Delta)=\sum_{k=1}^{\infty} \tau_k \exp(-\frac{\Delta}{8\sigma_{max}^2}\cdot d_k^2),
\end{equation}
where $\tau_1=\tau$ is the kissing number and $d_1=d(\Lambda)=2\rho(\Lambda)$ is the lattice minimum distance. It is well-known that the series $f(\Delta)$ converges for $\Delta>0$, because
the Theta series itself converges for $|q|<1$ and it is holomorphic in $z$ for $q=e^{i \pi z}$
and $\Im{z}\ge 0$ \cite[Chap.2, Sec.2.3]{Conway1999}. 
Another direct method is to upperbound $\tau_k$, for $k$ large, by the number
of points on a sphere in $\R^n$ of radius $d_k$ where each point is occupying
an area given by a sphere in $\R^{n-1}$ of radius $\rho$ to prove that $\tau_k$ is polynomial in $d_k$.
The sequence $d_k$ is unbounded and strictly increasing, hence $f(\Delta)$ converges
for $\Delta>0$. We will be just using the fact that $f(1)$ is finite to prove (15).
Indeed, we can write
\begin{align*}
\frac{\sum_{k=2}^{\infty} \tau_k \exp(-\frac{\Delta}{8\sigma_{max}^2}\cdot d_k^2)}
{\tau_1 \exp(-\frac{\Delta}{8\sigma_{max}^2}\cdot d_1^2)} 
&=\sum_{k=2}^{\infty} \frac{\tau_k}{\tau_1} \exp\left(-\frac{\Delta}{8\sigma_{max}^2}\cdot (d_k^2-d_1^2)\right)\\ \nonumber
&=\sum_{k=2}^{\infty} \frac{\tau_k}{\tau_1} \left[\exp\left(-\frac{d_k^2-d_1^2}{8\sigma_{max}^2} \right)\right]^1
\left[\exp\left(-\frac{d_k^2-d_1^2}{8\sigma_{max}^2} \right)\right]^{\Delta-1}\\ \nonumber
&\le f(1) \cdot \left[\exp\left(-\frac{d_2^2-d_1^2}{8\sigma_{max}^2} \right)\right]^{\Delta-1},
\end{align*}
where the latest right term vanishes for $\Delta \rightarrow \infty$.
This proves that $P_e(ub)=f(\Delta)=\tau_1 \exp(-\frac{\Delta}{8\sigma_{max}^2}\cdot d_1^2)
+ o\left(\exp(-\frac{\Delta}{8\sigma_{max}^2}\cdot d_1^2)\right)$
with the Bachmann-Landau small o notation.
This is (15) after replacing $-\frac{\Delta}{8\sigma_{max}^2}\cdot d_1^2$ by $-\frac{\pi e \Delta\gamma}{4}$. 
The interpretation of (15) is that the error-rate performance of a lattice on a Gaussian channel is dominated
by the nearest neighbors in the small-noise regime. 

\subsection{Proofs of Section~\ref{sec_proofs_VR}}
\label{App_proof_VR_bases}

\subsubsection{Proof of Theorem~\ref{theo_An}}
\label{App_VR_An}

We need to show that none of $y \in \mathcal{V}(x)$, $x \in \Lambda \backslash \mathcal{C}_{\PPB}$, crosses a facet of $\PPBbar$.
In this scope, we first find the closest point to a facet of $\PPBbar$ and show that its Voronoi region do not cross $\PPBbar$.
It is sufficient to prove the result for one facet of $\PPBbar$ as the landscape is the same for all of them.

Let $H_{\mathcal{F}_1}$ denote the hyperplane defined by $\mathcal{B} \backslash g_1$ where the facet $\mathcal{F}_{1}$ of $\PPBbar$ lies.
While $g_{1}$ is in $\PPBbar$ it is clear that $-g_1$ is not in $\PPBbar$. 
Adding to $-g_1$ any linear combination of the $n-1$ vectors generating $\mathcal{F}_{1}$ is equivalent to moving in a hyperplane, 
say $H_{P_1}$, parallel to $\mathcal{F}_{1}$ and it does not change the distance from $H_{\mathcal{F}_1}$. 
Additionally, any integer multiplication of $-g_{1}$ results in a point which is further from the hyperplane (except by $\pm 1$ of course). 
Note however that the orthogonal projection of $-g_1$ onto $H_{\mathcal{F}_1}$ is not in $\mathcal{F}_{1}$. The only lattice point in $H_{P_1}$ having this property is obtained by adding all $g_j$, $2 \leq j \leq n $, to $-g_1$, i.e. it is the point $-g_1 + \sum_{j=2}^n g_j$.

This closest point to $\PPBbar$, along with the points $\mathcal{B} \backslash g_1$, form a simplex. 
The centroid of this simplex is a hole of the lattice (but it is not a deep hole of $A_n$ for $n \geq 3$). 
It is located at a distance of  $\alpha/(n+1)$, $\alpha>0$,  to the center of any facet of the simplex and thus to $\mathcal{F}_1$ and $\PPBbar$.

\subsubsection{Proof of Theorem~\ref{theo_E8}}
\label{App_VR_En}

In this appendix, we prove Lemma~\ref{theo_E8_bis}.
One can check that any generator matrix $G$ obtained from the following Gram matrix generates $E_8$ and satisfies the assumption of Lemma~\ref{theo_E8_bis}. Consequently, it proves Theorem~\ref{theo_E8}.

\begin{equation}
\label{eq_E8_bis}
\Gamma_{E_8}=
\left(
\begin{array}{cccccccc}
  4  & 2  & 0  & 2  & 2  & 2  & 2  & 2 \\
  2  & 4  & 2  & 0  & 2  & 2  & 2  & 2 \\
  0  & 2  & 4  & 0  & 2  & 2  & 0  & 0 \\
  2  & 0  & 0  & 4  & 2  & 2  & 0  & 0 \\
  2  & 2  & 2  & 2  & 4  & 2  & 2  & 0 \\
  2  & 2  & 2  & 2  & 2  & 4  & 0  & 2 \\
  2  & 2  & 0  & 0  & 2  & 0  & 4  & 0 \\
  2  & 2  & 0  & 0  & 0  & 2  & 0  & 4
\end{array}  
\right).
\end{equation}

\begin{lemma}
\label{theo_E8_bis}
Let $G$ be a generator matrix of $E_{8}$, where the set of basis vectors $\mathcal{B}$ are all from the first lattice shell. 
Let $H_{\mathcal{F}_i}$ denote the hyperplane defined by $\mathcal{B} \backslash g_i$ where the facet $\mathcal{F}_{i}$ of $\PPBbar$ lies.
Let $\mathring{\mathcal{P}}(\mathcal{B})$ be the interior of the fundamental parallelotope of $E_8$. 
If $(G^{-1})^{T}$ is a generator matrix of $E_{8}$ with basis vectors from the first shell,
then the $G$ basis is Voronoi-reduced with respect to $\mathring{\mathcal{P}}$. 
\end{lemma}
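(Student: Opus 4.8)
Rather than imitating the hole/facet argument of Theorem~\ref{theo_An}, the plan is to recast the statement as a one-line inequality in dual coordinates. First I would rephrase the conclusion: $\mathcal B$ is Voronoi-reduced with respect to $\mathring{\mathcal P}$ precisely when, for every $y\in\mathring{\mathcal P}(\mathcal B)$ and every lattice point $\hat x=\hat z\cdot G$ of $E_8$ closest to $y$, one has $\hat z\in\{0,1\}^{8}$. Write $y=\sum_{i}\alpha_i g_i$ with all $0<\alpha_i<1$ (this is the only place the interior hypothesis is used), and let $w_1,\dots,w_8$ be the columns of $G^{-1}$, so that $\hat x\cdot w_k=\hat z_k$ and $y\cdot w_k=\alpha_k\in(0,1)$; recall from Section~\ref{sec_prele} that the $w_k$ form a basis of the dual lattice $E_8^{*}$.

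Second, the core of the argument. With the normalisation of~\eqref{eq_E8_bis}, $\tfrac12\Gamma_{E_8}$ is a Gram matrix of the unique even unimodular rank-$8$ lattice, so $\tfrac1{\sqrt2}G$ generates the unimodular $E_8$; hence this $E_8$ is $2$-modular, $E_8^{*}=\tfrac12 E_8$, and $2w_k\in E_8$ for every $k$. The hypothesis that $(G^{-1})^{T}$ is a first-shell basis of $E_8^{*}$ then gives $\|w_k\|^2=d^2(E_8^{*})=\tfrac14 d^2(E_8)=1$, so $v_k:=2w_k\in E_8$ with $\|v_k\|^2=4=d^2(E_8)$: $v_k$ is a relevant Voronoi vector of $E_8$ and $\tfrac12\|v_k\|^2=2$. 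Now suppose, for contradiction, that $\hat z_k\ge 2$ for some $k$. Then $(y-\hat x)\cdot(-v_k)=2(\hat z_k-\alpha_k)>2(2-1)=2=\tfrac12\|v_k\|^2$, which is equivalent to $\|y-(\hat x-v_k)\|^2<\|y-\hat x\|^2$; since $\hat x-v_k=\hat x-2w_k\in E_8$, this contradicts the optimality of $\hat x$. The case $\hat z_k\le -1$ is symmetric, using $+v_k$: $(y-\hat x)\cdot v_k=2(\alpha_k-\hat z_k)>2(0+1)=2$, so $\hat x+2w_k$ is strictly closer to $y$. (Both inequalities cover $\hat z_k\ge 3$ and $\hat z_k\le -2$ a fortiori.) Hence every $\hat z_k\in\{0,1\}$, i.e. $\hat x$ is a corner of $\mathcal P(\mathcal B)$; this proves the Lemma, and Theorem~\ref{theo_E8} follows since every generator matrix obtained from~\eqref{eq_E8_bis} has its dual basis in the first shell.

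The point needing the most care — and the only genuine obstacle — is the structural input that makes $v_k=2w_k$ a relevant vector: the $2$-modularity $E_8^{*}=\tfrac12 E_8$ (so $2w_k$ lands in $E_8$) together with the first-shell hypothesis (so $\|v_k\|^2$ is the lattice minimum and $\tfrac12\|v_k\|^2=2$ is the tight Voronoi bound). I would establish the $2$-modularity directly from~\eqref{eq_E8_bis}, e.g. by checking $\det\Gamma_{E_8}=2^{8}$ and that $\tfrac1{\sqrt2}G$ generates the unimodular $E_8$. This is precisely what distinguishes $E_8$ from $A_n$, whose defining Gram matrix~\eqref{eq_basis_An} is not even and whose dual basis vectors are not half-integral, so that there the short argument fails and the geometric argument of Theorem~\ref{theo_An} is required; it also explains why the Lemma is stated only for the interior, since the strict inequalities above degenerate to equalities when some $\alpha_k\in\{0,1\}$, allowing a boundary point of $\mathcal P(\mathcal B)$ to be equidistant from a corner and from a non-corner $\hat x\mp 2w_k$.
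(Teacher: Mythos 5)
Your proof is correct, and it takes a genuinely different route from the paper's. The paper argues geometrically, facet by facet: Lemma~\ref{lem_vr} computes the distance from a bounding hyperplane $H_{\mathcal{F}_i}$ of $\PPBbar$ to the nearest lattice point outside $\PPBbar$ (namely $-g_i$ and its translates along $\mathcal{F}_i$), finds via self-duality that it equals $\rho(E_8)$, and then --- since the covering radius exceeds the packing radius, this alone is inconclusive --- shows that $H_{\mathcal{F}_i}$ is a reflection hyperplane for $-g_i$ by checking that $s_{u_i}(-g_i)=-g_i+2u_i$ is again a lattice point, so that a Voronoi facet of $-g_i$ lies inside $H_{\mathcal{F}_i}$ and convexity keeps $\mathcal{V}(-g_i)$ out of $\PPBo$. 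Your argument instead works pointwise in dual coordinates: the $2$-modularity $\Lambda^*=\tfrac12\Lambda$ at the scaling of \eqref{eq_E8_bis} makes $v_k=2w_k$ a minimal lattice vector with $\tfrac12\|v_k\|^2=2$, and the strict inequalities $0<\alpha_k<1$ then produce, for any putative closest point with $\hat z_k\notin\{0,1\}$, the strictly closer lattice point $\hat x\mp v_k$; I checked the inner-product inequalities and they are exact. The two proofs hinge on the same structural fact --- the paper's reflected point $-g_i+2u_i$ is precisely your $-g_i+v_i$, and its integrality test $2\,u_i\cdot u_j\in\Z$ is your $2w_k\in\Lambda$ --- but your version buys several things: it treats all non-corner lattice points uniformly instead of only those nearest each facet, it dispenses with the covering-radius and convexity discussion entirely, and it makes transparent both why the conclusion must be restricted to $\PPBo$ (the inequalities degenerate to equalities on the boundary) and why the same shortcut is unavailable for $A_n$. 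The only inputs you must still verify are finite and concrete: that $\tfrac12\Gamma_{E_8}$ is even unimodular (so $2w_k\in\Lambda$) and that the first-shell hypothesis on $(G^{-1})^{T}$ forces $\|w_k\|^2=1$; this is exactly where that hypothesis enters the paper's proof as well, there in the guise of $\gamma(E_8^{*})=\gamma(E_8)=2$.
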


To prove Lemma~\ref{theo_E8_bis}, we need the next lemma.

\begin{lemma}
\label{lem_vr}
Let $G$ be a generator matrix of a lattice $\Lambda$, where the rows of $G$ form a basis $\mathcal{B}$ of $\Lambda$ with lattice points from the first shell. 
Let $H_{\mathcal{F}_i}$ denote the hyperplane defined by $\mathcal{B} \backslash g_i$ where the facet $\mathcal{F}_{i}$ of $\PPBbar$ lies.
If $(G^{-1})^{T}$ generates $\Lambda^{*}$ with lattice points from the first shell of this dual lattice, then the minimum distance between any $H_{\mathcal{F}_i}$ and a lattice point in $\Lambda \backslash \PPBbar$ is
\begin{equation}
\label{equ_dist}
d(\Lambda \backslash \PPBbar), H_{\mathcal{F}_i})=\frac{d(\Lambda)}{\sqrt{\gamma(\Lambda^{*})} \times \sqrt{\gamma(\Lambda)}}.
\end{equation}
\end{lemma}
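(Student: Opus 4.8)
The plan is to pass to the coordinate system given by the dual basis. Write the rows of $(G^{-1})^{T}$ as $g_1^{*},\dots,g_n^{*}$, so that $g_i^{*}\cdot g_j=\delta_{ij}$ and $g_i^{*}$ is a normal vector of the facet hyperplane $H_{\mathcal{F}_i}$, which passes through the origin. The first step is the elementary observation that for any lattice point $x=\sum_j z_j g_j\in\Lambda$ one has $x\cdot g_i^{*}=z_i$, hence the Euclidean distance from $x$ to $H_{\mathcal{F}_i}$ is $|z_i|/\|g_i^{*}\|$; in particular it depends only on the integer $z_i$ and on $\|g_i^{*}\|$.

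Second, I would pin down the minimizing lattice point. Since the lattice points of $\PPBbar$ are exactly its $2^{n}$ corners, i.e.\ those with every $z_j\in\{0,1\}$, a lattice point that $H_{\mathcal{F}_i}$ genuinely separates from $\PPB$ has $z_i\notin\{0,1\}$, hence $|z_i|\ge 1$ and $d(x,H_{\mathcal{F}_i})\ge 1/\|g_i^{*}\|$. This lower bound is attained: the point $x=-g_i+\sum_{j\ne i}g_j$ has $z_i=-1$, lies outside $\PPBbar$, and (this is the point isolated in the proof of Theorem~\ref{theo_An}) is the one whose orthogonal projection onto $H_{\mathcal{F}_i}$ still falls inside the facet $\mathcal{F}_i$. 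Therefore $d(\Lambda\setminus\PPBbar,H_{\mathcal{F}_i})=1/\|g_i^{*}\|$ for every $i$.

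Third, the hypotheses enter only here: because $(G^{-1})^{T}$ generates $\Lambda^{*}$ with all its rows taken from the first shell, $g_i^{*}$ is a minimal vector of $\Lambda^{*}$, so $\|g_i^{*}\|=d(\Lambda^{*})$ and $d(\Lambda\setminus\PPBbar,H_{\mathcal{F}_i})=1/d(\Lambda^{*})$. It then remains to rewrite $1/d(\Lambda^{*})$ in the announced form: from the definition of the nominal coding gain, $\sqrt{\gamma(\Lambda)}=d(\Lambda)/\vol(\Lambda)^{1/n}$ and $\sqrt{\gamma(\Lambda^{*})}=d(\Lambda^{*})/\vol(\Lambda^{*})^{1/n}$, while $\vol(\Lambda)\vol(\Lambda^{*})=|\det G|\cdot|\det G^{-1}|=1$; multiplying the first two identities gives $\sqrt{\gamma(\Lambda)}\,\sqrt{\gamma(\Lambda^{*})}=d(\Lambda)\,d(\Lambda^{*})$, whence $1/d(\Lambda^{*})=d(\Lambda)/\big(\sqrt{\gamma(\Lambda^{*})}\,\sqrt{\gamma(\Lambda)}\big)$, which is \eqref{equ_dist}.

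I expect the only genuinely delicate point to be the bookkeeping in the second step, namely making precise which lattice points outside $\PPBbar$ are to be measured against $H_{\mathcal{F}_i}$: lattice points that merely lie on $H_{\mathcal{F}_i}$ (such as $2g_j$ with $j\ne i$) are outside $\PPBbar$ yet at distance $0$, so one must restrict to the points the hyperplane actually separates from $\PPB$, i.e.\ $z_i\notin\{0,1\}$ — exactly the reduction used implicitly throughout Appendix~\ref{App_proof_VR_bases} and needed downstream for the Voronoi argument of Lemma~\ref{theo_E8_bis}. Once the dual-basis coordinates are in place, everything else is a one-line computation, and it should also be checked that the primal first-shell hypothesis on $\mathcal{B}$ is, for this particular lemma, inherited from the ambient setting rather than used in the distance identity itself.
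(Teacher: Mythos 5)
Your proof is correct and follows essentially the same route as the paper's: the normal to $H_{\mathcal{F}_i}$ is the dual basis vector $g_i^{*}$ with $g_i\cdot g_i^{*}=1$, the minimizing point is $-g_i$ (up to translations parallel to the facet), and the first-shell hypothesis on the dual together with $\vol(\Lambda)\vol(\Lambda^{*})=1$ converts $1/d(\Lambda^{*})$ into the stated expression via the two coding gains. Your explicit dual-coordinate bookkeeping (distance $=|z_i|/\|g_i^{*}\|$ and the restriction to $z_i\notin\{0,1\}$) cleanly handles a point the paper's proof leaves implicit, namely that lattice points of $\Lambda\setminus\PPBbar$ lying on $H_{\mathcal{F}_i}$ itself must be excluded for the stated minimum to be nonzero.
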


\begin{proof}

We derive the minimum distance between a lattice point outside of $\PPBbar$, $x \in \Lambda \backslash \PPBbar$, and $H_{\mathcal{F}_i}$. This involves two steps: First, we find one of the closest lattice point by showing that any other lattice point is at the same distance or further and then we compute the distance between this point and $H_{\mathcal{F}_i}$.  In the following, $u_i$ is the basis vector of the dual lattice $\Lambda^*$ orthogonal to $\mathcal{F}_{i}$ and $g_i$ the only basis vector of $\Lambda$ where $u_i \cdot g_i \neq 0$, $g_i\in \mathcal{B}$.  

As explained in the proof for $A_n$, while $g_{i}$ is in $\PPBbar$ it is clear that $-g_{i}$ is not in $\PPBbar$. Adding any linear combination of the $n-1$ vectors generating the facet is equivalent to moving in a hyperplane parallel to $H_{\mathcal{F}_i}$. It does not change the distance from $H_{\mathcal{F}_i}$. 
Additionally, any integer multiplication of $-g_{i}$ results in a point which is further from the facet (except by $\pm$1 of course). Therefore, $-g_{i}$ is one of the closest lattice points in $\Lambda \backslash \PPBbar$ from $H_{\mathcal{F}_i}$. 


How far is this point from $\PPBbar$? This distance is obtained by projecting $-g_{i}$ on $u_i$, the vector orthogonal to $\mathcal{F}_{i}$ 
\begin{equation}
\label{equ_dist_def}
d(\Lambda \backslash \PPBbar, H_{\mathcal{F}_i})=\frac{|g_{i} \cdot u_{i}|}{||u_{i})||}.
\end{equation} 
First, the term $g_{i} \cdot u_{i}=1$ since $G\cdot G^{-1}=I$.
Second, from the Hermite constant of the dual lattice $\Lambda^{*}$, and using $\text{det} \ G~\cdot~\text{det} \ G^{-1} =1 $, we get:
\begin{equation}
\label{equ_dmin_inv}
d(\Lambda^{*})  = \frac{\sqrt{\gamma(\Lambda^{*})}}{|\text{det} \ G|^{1/n}}.
\end{equation}
Since all vectors of $\Lambda^*$ are from the first shell 
(i.e. their norm is  $d(\Lambda^{*})$, assumption of the lemma), \eqref{equ_dist_def} becomes
\begin{equation}
d(\Lambda \backslash \PPB, H_{\mathcal{F}_i}) = \frac{1}{d(\Lambda^*)} =\frac{|\text{det} \ G|^{1/n}}{\sqrt{\gamma(\Lambda^{*})}}.
\end{equation} 
The result follow by expressing det $G$ as a function of $\gamma(\Lambda)$ and $d(\Lambda)$.

\end{proof}
We are now ready to prove Lemma~\ref{theo_E8_bis}.
\begin{proof}[Proof (of Lemma~\ref{theo_E8})]
$g_i$, $u_i$, and $H_{\mathcal{F}_i}$ are defined as in the previous proof.
We apply \eqref{equ_dist} to $E_{8}$. 
 Since this lattice is self-dual, $\gamma(E_{8}^{*})=\gamma(E_{8}) = 2$ and \eqref{equ_dist} becomes
 \[
 d(E_{8} \backslash \PPBbar, H_{\mathcal{F}_i})=\frac{d(E_{8})}{2}=\rho(E_{8}),
 \]
As a result, the closest lattice point outside of $\PPBbar$ is at a distance equal to the packing radius. 
Since the covering radius is larger than the packing radius, the basis is VR only if the Voronoi region of the closest points have a specific orientation relatively to the parallelotope. 

The rest of the proof consists in showing that $H_{\mathcal{F}_i}$ is a reflection hyperplane for $-g_{i}$. Indeed, this would mean that there is a lattice point of $E_{8}$ on the other side of $H_{\mathcal{F}_i}$, located at a distance $\rho(E_{8})$ from $H_{\mathcal{F}_i}$. 
It follows that this lattice point is at a distance $d(E_{8})$ from $-g_{i}$ and is one of its closest neighbor. Hence, one of the facet of its Voronoi region lies in the hyperplane perpendicular to the vector joining the points, 
at a distance $\rho(E_{8})$ from the two lattice points. Consequently, this facet and $H_{\mathcal{F}_i}$ lie in the same hyperplane. Finally, the fact that a Voronoi region is a convex set implies that the basis is VR. 

To finish the proof, we show that $H_{\mathcal{F}_i}$ is indeed a reflection hyperplane for $-g_{i}$. The reflection of a point with respect to the hyperplane perpendicular to $u_{i}$ (i.e. $H_{\mathcal{F}_i}$) is expressed as
 \[
 s_{u_{i}}(-g_{i}) =- g_{i}+ 2 \cdot \frac{ u_{i}\cdot g_{i}}{||u_{i}||^{2}}\cdot u_{i}.
 \]
 We have to show that this point belongs to $E_{8}$. The dual of the dual of a lattice is the original lattice. Hence, if the scalar product between $s_{u_{i}}(-g_{i})$ and all the vectors of the basis of $E_{8}^{*}$ is an integer, it means that this point belongs to $E_{8}$.
 \[
 s_{u_{i}}(-g_{i}) \cdot u_{j} = -g_{i} \cdot u_{j} + 2 \cdot \frac{ u_{i}\cdot g_{i}}{||u_{i}||^{2}}\cdot u_{i} \cdot u_{j}.
 \]
We analyse the terms of this equation: $g_{i} \cdot u_{j} \in \mathbb{Z}$ since they belong to dual lattices. We already know that $u_{i}\cdot g_{i}=1$. Also $u_{i} \cdot u_{j} \in \mathbb{Z}$ as $E_{8}^{*}$ is an integral lattice.  With Equation~\eqref{equ_dmin_inv}, we get that $\frac{2}{||u_{i}||^{2}}=1$.  We conclude that $s_{u_{i}}(-g_{i}) \cdot u_{j} \in \mathbb{Z}$.
 \end{proof}
 
 \subsubsection{Proof of Theorem~\ref{no_VR_24}}

 \begin{proof}
 $\Lambda_{24}$ is self-dual with $\gamma(\Lambda_{24})=2$ and $d(\Lambda_{24})=2$. 
 Asumme that we have two generator matrices $G$ and $G^{-1}$ satisfying the assumption of Lemma~\ref{lem_vr}. Equation~\eqref{equ_dist} gives
\begin{align}
\label{equ_dist_leech}
d(\Lambda_{24} \backslash \PPB, H_{\mathcal{F}_i})=\frac{d(\Lambda_{24})}{4}=\frac{\rho(\Lambda_{24})}{2}.
\end{align}
This distance is clearly smaller than the packing radius of $\Lambda_{24}$.

Moreover, Equation~\eqref{equ_dist_def} shows that if $G^{-1}$ contains a point which is not from the first shell, $\underset{i}{\min} d(\Lambda \backslash \PPBbar, H_{\mathcal{F}_i})$ becomes smaller has $\underset{i}{\text{max}} \ ||u_{i}||$ is greater. Hence, \eqref{equ_dist_leech} is an upper bound on $d(\Lambda_{24} \backslash \PPB, H_{\mathcal{F}_i})$.
\end{proof}

\subsection{Proof of Theorem~\ref{th_func_VR_2}}
\label{App_proof_func}

All Voronoi facets of $f$ associated  to a same point of $\CBONE$ form a polytope.
The variables within a AND condition of the HLD discriminate a point with respect to the boundary hyperplanes where these facets lie: 
The condition is true if the point is on the proper side of all these facets.
For a given point $y \in \PPB$, we write a AND condition $m$ as Heav($yA_m + q_m) > 0$, where $A_m \in \mathbb{R}^{ n \times l_m }$, $ q_m \in \R^{l_m}$.
Does this convex polyhedron lead to a convex CPWL function?

Consider Equation~$\eqref{eq_bool}$. 
The direction of any $v_j$ is chosen so that the Boolean variable is true for the point in $\CBONE$ whose Voronoi facet is in the corresponding boundary hyperplane.
Obviously, there is a boundary hyperplane, which we name $\psi$, between the lattice point $0 \in \CBZERO$ and $g_1 \in \CBONE$. 
This is also true for any $x \in \CBZERO$ and $x + g_1 \in \CBONE$.
Now, assume that one of the vector $v_j$ has its first coordinate $v^1_j$ negative.
It implies that for a given location $\tilde{y}$, if one increases $y_1$ the term $y \cdot v_{j}^{T} - \bias_{j}$ decreases and eventually becomes negative if it was positive.
Note that the Voronoi facet corresponding to this $v_j$  is necessarily above $\psi$, with respect to the first axis $e_1$, as the Voronoi region is convex.
It means that there exists $\tilde{y}$ where one can do as follows. 
For a given $y_1$ small enough, $y$ is in the decoding region $z_1=0$.
If one increases this value, $y$ will cross $\psi$ and be in the decoding region  $z_1=1$. 
If one keeps increasing the value of $y_1$, $y$ eventually crosses the second hyperplane and is back in the region $z_1=0$.
In this case $f$ has two different values at the location $\tilde{y}$ and it is not a function.
If no $v^1_j$ is negative, this situation is not possible.
All $v^1_j$ are positive if and only if all  $x \in \CBONE$ have their first coordinates $x_1$ larger than the first coordinates of all $\CN(x) \cap \CBZERO$.
Hence, the convex polytope leads to a function if and only if this condition is respected.
If this is the case, we can write Heav($yA_m + q)> 0 \Leftrightarrow \wedge_{k=1}^{l_m} y \cdot a_{m,k} + q_{m,k} > 0$, $a_{m,k},q_{m,k} \in \{v_j, \bias_j\}$. 
We want $y_{1}>h_{m,k}(\tilde{y})$, for all $1\leq k \leq l_m$, which is achieved if $y_{1}$ is greater than the maximum of all values. 
The maximum value at a location $\tilde{y}$ is the active piece in this convex region and we get $y_{1} = \vee_{k=1}^{l_m} h_{m,k}(\tilde{y})$. 

A Voronoi facet of a neighboring Voronoi region is concave with the facets of the other Voronoi region it intersects.
The region of $f$ formed by Voronoi facets belonging to distinct points in $\CBONE$ form concave regions that are linked by a OR condition in the HLD. The condition is true if $y$ is in the Voronoi region of at least one point of $\CBONE$: $\vee_{m=1}^{M} \{ \wedge_{k=1}^{l_m} y \cdot a_{m,k} + q_{m,k} \}> 0$. We get $f(\tilde{y}) = \wedge_{m=1}^{M}\{\vee_{k=1}^{l_m}h_{m,k}(\tilde{y}) \}$.

Finally, $l_m$ is strictly inferior to~$\tau_{f}$ because all Voronoi facets lying in the affine function of a convex part of $f$ are facets of the same corner point. 
Regarding the bound on $M$, the number of logical OR term is upper bounded by half of the number of corner of $\PPB$ which is equal to $2^{n-1}$. 

\subsection{First order terms of the decision boundary function before folding for $A_n$}
\label{terms_An}
\label{terms_An_before}
The equations of the boundary function for $A_n$ are the following.

\begin{align*}
f^{n=2} = \Big[   h_{p1} \vee h_{1}  \Big] \wedge \Big[ h_{p2} \Big].
\end{align*}
\begin{align*}
f^{n=3}= & \Big[   h_{p1} \vee h_{1} \vee h_{2}  \Big] \ \wedge  \Big[ \left( h_{p2} \vee h_{1}\right)  \wedge \left( h_{p2}\vee h_{2} \right)  \Big] \ \wedge \Big[   h_{p3} \Big]. 
\end{align*}
\begin{align*}
f^{n=4} = & \Big[ h_{p1} \vee h_{1} \vee h_{2} \vee h_{3} \Big] \wedge 
             \Big[ \left( h_{p2}  \vee h_{1} \vee h_{2}\right) \wedge \left( h_{p2} \vee h_{2} \vee h_{3} \right)  \wedge 
             \left( h_{p2} \vee h_{1} \vee h_{3} \right)\Big] \wedge \\
            & \Big[ \left( h_{p3} \vee h_{1} \right) \wedge\left( h_{p3} \vee h_{2}\right) \wedge \left( h_{p3} \vee h_{3}\right)\Big] \wedge 
              \Big[ h_{p4}\Big].
\end{align*}

\subsection{Proof of Theorem~\ref{theo_shallow}}
\label{App_theo_relu}

A  ReLU neural network with $n$ inputs and $W_1$ neurons in the hidden layer can compute a CPWL function with at most $\sum_{i=0}^{n}\binom{W_1}{i}$ pieces \cite{Pascanu2013}. 
This is easily understood by noticing that the non-differentiable part of $\max(0,a)$ is a $n-2$-dimensional hyperplane that separates two linear regions. 
If one sums $W_1$ functions $\max(0,d_{i} \cdot y)$, where $d_i$, $1\leq i \leq w_1$, is a random vector, one gets $W_1$ of such $n-2$-hyperplanes.  
The result is obtained by counting the number of linear regions that can be generated by these $W_1$ hyperplanes. 

The proof of the theorem consists in finding a lower bound on the number of such $n-2$-hyperplanes 
(or more accurately the $n-2$-faces located in $n-2$-hyperplanes) partitioning $\D(\B)$. This number is a lower-bound on the number of linear regions.
Note that these $n-2$-faces are the projections in $\D(\B)$ of the 
$n-2$-dimensional intersections of the affine pieces of $f$.

We show that many intersections between two affine pieces linked by a $\vee$ operator (i.e. an intersection of affine pieces within a convex region of $f$) are located in distinct $n-2$-hyperplanes.
To prove it, consider all sets $\CN(x) \cap \CBZERO$ of the form $\{x, x + g_1, x+ g_j\}$, $x \in \CBZERO$, $x+ g_j \in \CBZERO$.
The part of decision boundary function $f$ generated by any of these sets has 2 pieces and their intersection is a $n-2$-hyperplane.   
Consider the set $\{0,g_1, g_2\}$.  Any other set is obtained by a composition of reflections and translations from this set. 
For two $n-2$-hyperplanes associated to different sets to be the same, the second set should be obtained from the first one by a translation along a vector orthogonal to the $2$-face defined by the points of this first set. 
However, the allowed translations are only in the direction of a basis vector. None of them is orthogonal to one of of these sets.  
 
Finally, note that any set $\{x \cup (\CN(x) \cap \CBZERO) \}$ where $|\CN(x) \cap \CBZERO|=i$, encountered in the proof of 
Theorem~\ref{theo_nbReg_Lin}, can be decomposed into $i-1$ of such sets (i.e. of the form $\{x,x-g_1,x-g_1+g_j \}$). 
 Hence, from the proof of Theorem~\ref{theo_nbReg_Lin}, 
we get that the number of this category of sets, and thus a lower bound on the number of $n-2$-hyperplanes, 
is $\sum_{k=0}^{n-1}(n-1-k)~\binom{n-1}{k}$. Summing over $k=n-i=0 \ldots n-1$ gives
the announced result.

\subsection{Proof of Theorem~\ref{theo_nbReg_Lin_Dn_second_kind}}
\label{App_theo_Dn_nb_pieces}
We count the number of sets $\CN(x)~\cap~\CBZERO$ with  cardinality  $i$.
We walk in $\CBZERO$ and for each of the $2^{n-1}$ points $x \in \CBZERO$ 
we investigate the cardinality of the set $\CN(x+~g_1)~\cap~\CBZERO$. 
In this scope, the points in $\CBZERO$ can be sorted into two categories: $(l_i)$ and $(ll_i)$. 
In the sequel, $\sum_j g_j$ denotes any sum of points in the set $\{0,g_j\}_{j=3}^{n}$.
These two categories and their properties (see also the explanations below Theorem~\ref{theo_nbReg_Lin_Dn_second_kind}), are: 
\small
\begin{gather}
\label{eq_prop_Dn_2}
\begin{split}
(l_i) \ &\forall \ x=\sum_j g_j \in~\CBZERO, \ x' \in D_n \backslash \{g_k,0\},\ 3 \leq k\leq n: \\
&x+ g_k \in \CN(x+g_1), \ x+ x' \not\in \CN(x+g_1) \cap \CBZERO. \\
\end{split}
\end{gather}
\begin{gather}
\label{eq_prop_Dn_2__2}
\begin{split}
(ll_i) \ &\forall \ x= \sum_j g_j + g_2\in~\CBZERO, \\ 
&x' \in D_n \backslash \{g_i,-g_2 + g_i,-g_2 + g_i +g_k,0\},\ 3 \leq i < k\leq n: \\
&(1)  \ (a) \ x+ g_i \in \CN(x+g_1), \ (b) \ x -g_2 + g_i \in \CN(x+g_1), \\
&(2) \ x -g_2 + g_i +g_k  \in \CN(x+g_1), \\
&(3) \  x+ x' \not\in \CN(x+g_1) \cap \CBZERO.
\end{split}
\end{gather}
\normalsize
We count the number of sets $\CN(x)~\cap~\CBZERO$ with  cardinality  $i$ per category.

$(l_i)$ is like $A_n$. 
Starting from the lattice point 0, the set $\CN(0+g_1) \cap \CBZERO$ is composed of $0$  and the $n-2$ other basis vectors (i.e. without $g_2$ because it is perpendicular to $g_1$). 
Then, for all $g_{j_1}$, $3 \leq j_1\leq n$,  the sets $\CN(g_{j_1}+g_1) \cap \CBZERO$
are obtained by adding any of the $n-3$ remaining basis vectors to $g_{j_1}$  (i.e. not $g_1$, $g_2$, or $g_{j_1}$). 
Indeed, if we add again~$g_{j_1}$, the resulting point is outside $\PPB$ and should not be considered. 
Hence, the cardinality of these sets is $n-2$ and there are $\binom{n-2}{1}$
ways to choose $g_{j_1}$: any basis vectors except~$g_1$ and $g_2$. 
Similarly, for $g_{j_1}+g_{j_2}$, $j_1 \neq j_2$, the cardinality of the sets  $\CN(g_{j_1}+g_{j_2} +g_1) \cap \CBZERO$ is  $n-3$ and there are $\binom{n-2}{2}$ ways to choose $g_{j_1}+g_{j_2}$. 
More generally, there are $\binom{n-2}{i}$ sets $\CN(x) \cap \CBZERO$ of cardinality $n-1-i$.

$(ll_i)$ To begin with, we are looking for the neighbors of $g_2+g_1$. First (i.e. property $(1)$), we have the following $1+2 \times (n-2)$  points in $\CN(g_2+g_1) \cap \CBZERO$:
$g_2$, any $g_j+g_2$, $3 \leq j \leq n$, and any $g_j$, $3 \leq j \leq n$.  
Second (i.e. property $(2)$), the $\binom{n-2}{2}$ points $g_j + g_k$, $3 \le j<k\le n$, 
are also neighbors of $g_2+g_1$. Hence, $g_2+g_1$ has $1+2 \times (n-2)+\binom{n-2}{2}$ neighbors in $\CBZERO$.
Then, the points $g_1+g_2+g_{j_1} $, $3 \leq j_1 \leq n$, have $1+2 \times (n-2-1) + \binom{n-2-1}{2}$ neighbors of this kind, 
using the same arguments, and there are $\binom{n-2}{1}$ ways to chose $g_{j_1}$. 
In general, there are $\binom{n-2}{i}$ sets of cardinality $1+2 \times (n-2-i) + \binom{n-2-i}{2}$.

To summarize, each set replicates $\sum_i \binom{n-2}{i}$ times, where for each $i$ we have both $(l_i)$ sets of cardinality $1+(n-2-i)$ and $(ll_i)$ sets of cardinality $1+2 \times (n-2-i) + \binom{n-2-i}{2}$.
As a result, the total number of pieces of $f$ is obtained as
\small
\begin{equation}
\label{eq_proof_big}
\sum_{i=0}^{n-2} \left(   \underset{(l_i)}{\underbrace{\left[ 1+(n-2-i) \right]}}+  \underset{(ll_i)}{\underbrace{\left[\underset{(1)}{\underbrace{1+2(n-2-i)}}+ \underset{(2)}{\underbrace{\binom{n-2-i}{2}}} \right]}} \right) 
\times \underset{(o_i)}{\underbrace{\ \binom{n-2}{i}}}-1,
\end{equation}
\normalsize
where the -1 comes from the fact that for $i=n-2$, the piece generated by $(l_i)$ and the piece generated by $(ll_i)$ are the same. Indeed, the bisector hyperplane of $x$, $x+g_1$ and the bisector hyperplane of $x+g_2$, $x+g_2+g_1$ are the same since $g_2$ and $g_1$ are perpendicular.

\subsection{Proof of Theorem~\ref{theo_Dn_second_kind_lin} 
}
\label{App_folding_second_kind}

\begin{lemma}
\label{lem_fold_2}
Among the elements of $\mathcal{C}_{\PPB}$, only the points
of the form 
\begin{enumerate}
\item $x_1=g_3+...+g_{i-1}+g_i$ and $x_1 +g_1$, 
\item $x_2=g_3+...+g_{i-1}+g_i+g_2$ and $x_2 + g_1$,
\end{enumerate}
$i \leq n$, 
are on the non-negative side of all $BH(g_j,g_k)$, $3\le  j < k \le~n$.  
\end{lemma}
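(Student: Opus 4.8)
My plan is to argue by a direct computation with the Gram matrix~\eqref{eq_second_kind}. Write an arbitrary corner of $\PPB$ as $x=\sum_{\ell\in S}g_\ell$ for some subset $S\subseteq\{1,\dots,n\}$. Since all basis vectors of this $D_n$-basis have squared norm $2$, the bisector $BH(g_j,g_k)$ is the hyperplane $\{y:y\cdot(g_j-g_k)=0\}$, which passes through the origin, and its non-negative side (the side containing $g_j$) is $\{y:y\cdot(g_j-g_k)\ge 0\}$. Moreover, for $j,k\ge 3$ the normal $v_{j,k}=g_j-g_k$ has vanishing first coordinate, so $x\cdot v_{j,k}$ is unaffected by the projection onto $\D(\B)$; hence it suffices to test the corners themselves against these inner products.

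The next step is to evaluate $x\cdot(g_j-g_k)=\sum_{\ell\in S}\big(g_\ell\cdot g_j-g_\ell\cdot g_k\big)$ using~\eqref{eq_second_kind}. The crucial observation is that $g_\ell\cdot g_j-g_\ell\cdot g_k=0$ for every $\ell\notin\{j,k\}$: indeed $g_1\cdot g_j=g_1\cdot g_k=1$, $g_2\cdot g_j=g_2\cdot g_k=1$, and $g_\ell\cdot g_j=g_\ell\cdot g_k=1$ for $\ell\ge 3$ with $\ell\notin\{j,k\}$. For $\ell=j$ the difference is $\|g_j\|^2-g_j\cdot g_k=2-1=1$, and for $\ell=k$ it is $g_k\cdot g_j-\|g_k\|^2=1-2=-1$. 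Writing $[\,\cdot\,]$ for the indicator (Iverson bracket), this gives
\[
x\cdot(g_j-g_k)=[\,j\in S\,]-[\,k\in S\,],
\]
so $x$ lies on the non-negative side of $BH(g_j,g_k)$ if and only if it is not the case that $k\in S$ while $j\notin S$, i.e.\ if and only if $k\in S\Rightarrow j\in S$.

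Imposing this condition for every pair $3\le j<k\le n$ forces $S\cap\{3,\dots,n\}$ to be an initial segment $\{3,4,\dots,i\}$ for some $i$ with $2\le i\le n$, where $i=2$ is read as the empty set. The membership of the indices $1$ and $2$ is left completely unconstrained, since neither $g_1$ nor $g_2$ affects any $x\cdot(g_j-g_k)$ with $j,k\ge 3$. Hence the admissible corners split into exactly four families according to whether $g_1$ and/or $g_2$ occur: $g_3+\dots+g_i$, $(g_3+\dots+g_i)+g_1$, $(g_3+\dots+g_i)+g_2$, and $(g_3+\dots+g_i)+g_1+g_2$. Setting $x_1=g_3+\dots+g_i$ and $x_2=g_3+\dots+g_i+g_2$ reproduces precisely the list in the statement, and conversely each such corner satisfies $k\in S\Rightarrow j\in S$ for all $3\le j<k\le n$, so it does lie on the non-negative side of all the relevant bisectors; this gives both inclusions.

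There is no serious obstacle here; the only care needed is in reading off~\eqref{eq_second_kind} correctly --- in particular that every off-diagonal entry among indices $\ge 3$, and every entry of the first two rows and columns other than $g_1\cdot g_2=0$, equals $1$ --- and in noting that the entry $g_1\cdot g_2=0$ plays no role, because neither $g_1$ nor $g_2$ is one of the vectors $g_j,g_k$ defining a reflection hyperplane of $F_{D_n}$. Everything else is an immediate consequence of the displayed identity.
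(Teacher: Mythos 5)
Your proof is correct and follows essentially the same route as the paper's: both arguments reduce membership on the non-negative side of $BH(g_j,g_k)$ to whether $g_k$ appears in the corner's expansion without $g_j$, and both use the fact that $g_1$ and $g_2$ lie on all of these hyperplanes so their presence is unconstrained. Your explicit Gram-matrix identity $x\cdot(g_j-g_k)=[\,j\in S\,]-[\,k\in S\,]$ simply makes the paper's verbal assertions precise and treats the four resulting families uniformly instead of splitting into the two enumerated cases.
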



\begin{proof}
In the sequel, $\sum_i g_i$ denotes any sum of points in the set $\{0,g_i\}_{i=3}^{n}$.
For 1), consider a point of the form $g_3+...+g_{j-1}+g_{j+1}+...+g_{i-1}+g_i$, $j+1<i-1\le n-1$.
This point is on the negative side of all $BH(g_j,g_k)$, $ j  < k \leq i$.
More generally, any point $\sum_i g_i$, where $\sum_i g_i$ includes in the sum $g_k$ but not $g_j$, $j<k \leq n$,
is on the negative side of $BH(g_j,g_k)$.
Hence, the only points in $\CBZERO$ that are on the non-negative side of all hyperplanes have the form $g_3+...+g_{i-1}+g_i$, $i \leq n$.

Moreover, if $x \in \CBZERO$ is on the negative side of one of the hyperplanes $BH(g_j,g_k)$, $3\le  j < k \le~n$, so is $x+g_1$
since $g_1$ is in all $BH(g_j,g_k)$.

2) is proved with the same arguments.
\end{proof}

\begin{proof} (of Theorem~\ref{theo_Dn_second_kind_lin})
(i) The folding  via $BH(g_j, g_k)$, $3 \leq j < k  \leq n$, switches $g_j$  and $g_k$ in
the hyperplane containing $\D(\B)$, which is orthogonal to $e_1$. Switching $g_j$
and $g_k$ does  not change the decision boundary because of the basis
symmetry, hence  $f$ is unchanged. 

Now, for (ii), how many pieces are left after all reflections?
To count the number of pieces of $f$, defined on $\D'(\B)$, 
we need to enumerate the cases where both $x \in \CBONE$ and $x'\in \CN(x) \cap \CBZERO$ are 
on the non-negative side of all reflection hyperplanes. 

Firstly, we investigate the effect of the folding operation on the term $\sum_{i=0}^{n-2}[1+(n-2-i)] \times \binom{n-2}{i}$ in Equation~\eqref{eq_proof_big}.
Remember that it is obtained via $(l_i)$ (i.e.  Equation~\eqref{eq_prop_Dn_2}).
Due to the reflections, among the points in $\CBONE$ of the form $\sum_j g_j+g_1$ only $x=g_3+g_4+...+g_{i-1}+g_i+g_1$, $j\leq n$, is on the non-negative side
of all reflection hyperplanes (see result 1. of Lemma~\ref{lem_fold_2}).
Similarly, among the elements in $\CN(x) \cap \CBZERO$, only $x-g_1$ and $x-g_1+g_{i+1}$ (instead of  $x-g_1+g_k$, $3\leq k \leq n$)  are on the non-negative side of all reflection hyperplanes.
Hence, at each step $i$, the term $[1+(n-2-i)]$ becomes 2 (except for $i=n-2$ where it is 1).
Therefore, the folding operation reduced the term $\sum_{i=0}^{n-2}[1+(n-2-i)] \times \binom{n-2}{i}$ 
to $(n-2)\times2 + 1$.

Secondly, we investigate the reduction of the term $\sum_{i=0}^{n-2}\left[1+2(n-2-i)+\binom{n-2-i}{2}\right] \times \binom{n-2}{i}$ obtained via $(ll_i)$ (i.e.  Equation~\ref{eq_prop_Dn_2__2}).
The following results are obtained via item 2. of Lemma~\ref{lem_fold_2}.
Among the points denoted by $\sum_j g_j+g_2+g_1 \in \CBONE$ only $x=g_3+g_4+...+g_{i-1}+g_i+g_2+g_1$ is on the proper side of all reflection hyperplanes.
Among the neighbors of any of these points, of the form $(ll_i)-(2)$, only $x+g_{i+1}+g_{i+2}$ is on the proper side of all hyperplanes.
Additionally, among the neighbors of the form $(ll_i)-(1)$ and $(ll_i)-(b)$, i.e. $x+g_k$ or $x-g_2+g_k$, $3\leq k \leq n$, $g_k$ can only be $g_{i+1}$.
Therefore, the folding operation reduces the term  $\sum_{i=0}^{n-2}[1+2(n-2-i)+\binom{n-2-i}{2}] \times \binom{n-2}{i}$ 
to $(n-3)\times4 + 3+1$.

\end{proof}

\subsection{Proof of Theorem~\ref{theo_nbReg_Lin_En}}
\label{App_func_En}
\begin{proof}
We count the number of sets $\CN(x)~\cap~\CBZERO$ with  cardinality  $i$.
We walk in $\CBZERO$ and for each of the $2^{n-1}$ points $x \in \CBZERO$ 
we investigate the cardinality of the set $\CN(x+~g_1)~\cap~\CBZERO$. 
In this scope, we group the lattice points $x \in \CBZERO$ in three categories.
The numbering of these categories matches the one given 
in the sketch of proof (see also Equation~\ref{eq_En_bis} below). $\sum_j g_j$ denotes any sum of points in the set $\{0,g_j\}_{j=4}^{n}$.

\footnotesize
\begin{gather}
\label{eq_En_1}
\begin{split}
(l_i) \ &\forall \ x=\sum_j g_j \in~\CBZERO, \ x' \in D_n \backslash \{g_j,0\},\ 4 \leq k\leq n: \\
&x+ g_k \in \CN(x+g_1), \ x+ x' \not\in \CN(x+g_1) \cap \CBZERO. \\
\end{split}
\end{gather}
\begin{gather}
\label{eq_En_2}
\begin{split}
(ll_i)-A \ &\forall \ x= \sum_j g_j + g_2\in~\CBZERO, \\
& x' \in D_n \backslash \{g_i,-g_2 + g_i,-g_2 + g_i +g_k,0\},\ 4 \leq i < k\leq n: \\
&(1)  \ x+ g_i \in \CN(x+g_1), \ x -g_2 + g_i \in \CN(x+g_1), \\
&(2) \ x -g_2 + g_i +g_k  \in \CN(x+g_1), \\
&(3) \ x+ x' \not\in \CN(x+g_1) \cap \CBZERO. 
\end{split}
\end{gather}
\begin{gather}
\label{eq_En_3}
\begin{split}
(ll_i)-B \ &\forall \ x= \sum_j g_j + g_3\in~\CBZERO, \\ 
&x' \in D_n \backslash \{g_i,-g_3 + g_i,-g_3 + g_i +g_k,0\},\ 4 \leq i < k\leq n: \\
&(1) \ x+ g_i \in \CN(x+g_1), \ x -g_3 + g_i \in \CN(x+g_1), \\
&(2) \ x -g_3 + g_i +g_k  \in \CN(x+g_1), \\
&(3) \ x+ x' \not\in \CN(x+g_1) \cap \CBZERO. 
\end{split}
\end{gather}
\begin{gather}
\label{eq_En_4}
\begin{split}
(lll_i) \ &\forall \ x= \sum_j g_j + g_2+g_3 \in~\CBZERO, \\
&x' \in D_n \backslash \{g_i,g_i + g_k,g_i + g_k + g_l,0\}, \ 4 \leq i < k <l\leq n: \\
&(1) \ x -g_2+ g_k \in \CN(x+g_1), \ x -g_3+ g_k \in \CN(x+g_1), \\
& x  + g_k \in \CN(x+g_1),\\
&(2) \ x -g_3 - g_2 + g_i + g_k \in \CN(x+g_1), \\
& x  - g_2 + g_i + g_k \in \CN(x+g_1), \ x -g_3 + g_i + g_k \in \CN(x+g_1),\\
&(3) \ x+ g_i + g_k +g_l \in \CN(x+g_1), \\
& (4) \ x+ x' \not\in \CN(x+g_1) \cap \CBZERO. 
\end{split}
\end{gather}
\normalsize
We count the number of $i$-simplices per category.

$(l_i)$ is like $A_n$. 
Starting from the lattice point 0, the set $\CN(0+g_1) \cap \CBZERO$ is composed of $0$  and the $n-3$ other basis vectors (i.e. without $g_2$ and $g_3$ because they are perpendicular to $g_1$). 
Then, for all $g_{j_1}$, $4 \leq j_1\leq n$,  the sets $\CN(g_{j_1}+g_1) \cap \CBZERO$
are obtained by adding any of the $n-4$ remaining basis vectors to $g_{j_1}$  (i.e. not $g_1$, $g_2$, $g_3$ or $g_{j_1}$). 
Hence, the cardinality of these sets is $n-3$ and there are $\binom{n-3}{1}$
ways to choose $g_{j_1}$: any basis vectors except~$g_1$, $g_2$, and $g_3$. 
Similarly, for $g_{j_1}+g_{j_2}$, $j_1 \neq j_2$, the cardinality of the sets  $\CN(g_{j_1}+g_{j_2} +g_1) \cap \CBZERO$ is  $n-4$ and there are $\binom{n-3}{2}$ ways to choose $g_{j_1}+g_{j_2}$. 
More generally, there are $\binom{n-3}{i}$ sets $\CN(x) \cap \CBZERO$ of cardinality $n-2-i$.

$(ll_i)$ is like the basis of $D_n$ (see $(ll_i)$ in the proof in Appendix~\ref{App_theo_Dn_nb_pieces}), repeated twice because we now have two basis vectors orthogonal to $g_1$ instead of one.
Hence, we get that there are $\binom{n-3}{i}$ sets of cardinality $2 \times \left(1+2(n-3-i)+ \binom{n-3-i}{2}\right)$.

$(lll_i)$ is the new category. We investigate the neighbors of a given point $x=\sum_j g_j+ g_3+g_2+g_1$.
First (1), any $\sum_j g_j+g_3+ g_2$ is in $\CN(x)\cap \CBZERO$. Any $\sum_j g_j + g_2+g_k$, $\sum_j g_j+g_3+g_k$, and $\sum_j g_j+ g_3+g_2+g_k$, where  $4\leq k \leq n$ and $k \not\in \{ j\}$ are also in $\CN(x)\cap \CBZERO$. 
Hence, there are $3 \times (n-3-i)$ of such neighbors, where $i = |\{j\}|$ (in $\sum_j g_j$). Then, (2) any $\sum_j g_j + g_i+g_k$, $\sum_j g_j + g_2+ g_i+g_k$, and $\sum_j g_j + g_3+ g_i+g_k$,  where $4\leq i<k \leq n$ and $i,k \not\in \{ j\}$,
are in $\CN(x)\cap \CBZERO$. There are $3\times \binom{n-3-i}{2}$ possibilities, where $i = |\{j\}|$. 
Finally (3), any $\sum_j g_j + g_i+ g_k+g_l$, $4 \leq i<k<l \leq n$ and $i,k,l \not\in \{ j\}$ are in $\CN(x)\cap \CBZERO$. There are $\binom{n-3-i}{3}$ of them, where $i = |\{j\}|$.

To summarize, each set replicates $\sum_i \binom{n-3}{i}$ times, where for each $i$ we have $(l_i)$ sets of cardinality $1+n-3-i$, $(ll_i)$  $2 \times \left(1+2(n-3-i)+ \binom{n-3-i}{2}\right)$, and $(lll_i)$
$1 + 3 \times (n-3-i) + 3\times \binom{n-3-i}{2}+ \binom{n-3-i}{3}$. As a result, the total number of pieces of $f$ is obtained as
\small
\begin{align}
\label{eq_En_bis}
&\sum_{i=0}^{n-3} \Bigg( \underset{(l_i)}{\underbrace{\left[ 1+(n-3-i) \right]}} +\underset{(ll_i)}{\underbrace{2 \left[ 1+ 2(n-3-i)+ \binom{n-3-i}{2} \right]}} + \\
&\underset{(lll_i)}{\underbrace{\left[ \underset{(1)}{\underbrace{1+3(n-3-i) }}+  \underset{(2)}{\underbrace{3\binom{n-3-i}{2}}} + \underset{(3)}{\underbrace{\binom{n-3-i}{3}}} \right]} }\Bigg) \times \underset{(o_i)}{\underbrace{\binom{n-3}{n-i}}}-3,
\end{align}
\normalsize
where the -3 comes from the fact that for $i=n-3$, the four pieces generated by $(l_i)$, $(ll_i)$, and $(lll_i)$ are the same. 
Indeed, the bisector hyperplane of $x$, $x+g_1$, is the same as the one of $x+g_2$, $x+g_2+g_1$, of $x+g_3$, $x+g_3+g_1$,
and of $x+g_2+g_3$, $x+g_2+g_3+g_1$, since both $g_2$ and $g_3$ are perpendicular to $g_1$.
\end{proof}

\subsection{Proof of Theorem~\ref{theo_En_folding}}
\label{App_folding_En}

\begin{lemma}
\label{lem_fold_3}
Among the elements of $\mathcal{C}_{\PPB}$, only the points
of the form 
\begin{enumerate}
\item $x_1=g_4+...+g_{i-1}+g_i$ and $x_1 +g_1$, 
\item $x_2=g_4+...+g_{i-1}+g_i+g_2$ and $x_2+g_1$,
\item $x_3=g_4+...+g_{i-1}+g_i+g_2+g_3$ and $x_3+g_1$, 
\end{enumerate}
$i \leq n$, 
are on the non-negative side of all $BH(g_j,g_k)$, $4\le  j < k \le~n$.  
\end{lemma}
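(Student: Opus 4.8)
The plan is to follow the proof of Lemma~\ref{lem_fold_2} for $D_n$ (Appendix~\ref{App_folding_second_kind}) almost verbatim, reducing the claim to one elementary inner-product computation read off the Gram matrix $\Gamma_{E_n}$. First I would record the two facts that make the $D_n$ argument work and that carry over unchanged. Since every basis vector of $E_n$ has squared length $2$, the bisector hyperplane $BH(g_j,g_k)$ passes through the origin and equals $\{p\in\R^n:\ p\cdot v_{j,k}=0\}$ with $v_{j,k}=g_j-g_k$; with the orientation convention fixed in Section~\ref{sec_fold}, ``the non-negative side'' of $BH(g_j,g_k)$ is $\{p:\ p\cdot v_{j,k}\ge 0\}$ (a point containing $g_k$ but not $g_j$, $j<k$, lies on the negative side, exactly as in the $D_n$ case). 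Hence a corner $x=\sum_{\ell\in S}g_\ell$ of $\PPB$, with $S\subseteq\{1,\dots,n\}$ the set of coordinates equal to $1$, lies on the non-negative side of $BH(g_j,g_k)$ iff $x\cdot(g_j-g_k)\ge 0$.

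Next comes the computation. Writing $x\cdot(g_j-g_k)=\sum_{\ell\in S}\bigl(g_\ell\cdot g_j-g_\ell\cdot g_k\bigr)$ and reading off $\Gamma_{E_n}$ — namely $g_\ell\cdot g_m=2$ if $\ell=m$, $g_\ell\cdot g_m=0$ if $\{\ell,m\}\in\{\{1,2\},\{1,3\}\}$, and $g_\ell\cdot g_m=1$ otherwise — one sees that for any pair $4\le j<k\le n$ the $\ell$-th summand equals $+1$ when $\ell=j$, $-1$ when $\ell=k$, and $0$ in every other case (the special zero entries never occur, since $j,k\ge 4$). Therefore
\[
x\cdot(g_j-g_k)=\mathbf 1[j\in S]-\mathbf 1[k\in S],
\]
so $x$ is on the non-negative side of every $BH(g_j,g_k)$, $4\le j<k\le n$, precisely when $k\in S$ forces $j\in S$ for all $4\le j<k$ — i.e. when $S\cap\{4,\dots,n\}$ is an initial segment $\{4,5,\dots,i\}$ (possibly empty, with the convention $g_4+\dots+g_i=0$ for $i<4$), which is the ``$g_4+\dots+g_i$, $i\le n$'' part of the statement. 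The same computation with $(j,k)=(2,3)$ — another reflection hyperplane of $F_{E_n}$, see the definition of $F_{E_n}$ before Theorem~\ref{theo_En_folding} — gives $x\cdot(g_2-g_3)=\mathbf 1[2\in S]-\mathbf 1[3\in S]$ (the $\ell=1$ term vanishes because $g_1\cdot g_2=g_1\cdot g_3=0$, the $\ell\ge 4$ terms cancel as before), so being on its non-negative side forces $3\in S\Rightarrow 2\in S$; among the eight subsets of $\{1,2,3\}$ this leaves exactly $\varnothing,\ \{1\},\ \{2\},\ \{1,2\},\ \{2,3\},\ \{1,2,3\}$.

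Finally I would split off the coordinate $z_1$, exactly as in the $D_n$ proof. Because $g_1\cdot(g_j-g_k)=1-1=0$ for $4\le j<k\le n$ and $g_1\cdot(g_2-g_3)=0$, the vector $g_1$ lies in all the reflection hyperplanes, so $x$ and $x+g_1$ are always on the same side of each of them; it thus suffices to list the surviving corners in $\CBZERO$ and append $g_1$ to obtain the surviving corners in $\CBONE$. Intersecting the two constraints, the corners of $\PPB$ in $\CBZERO$ on the non-negative side of all reflection hyperplanes of $F_{E_n}$ are exactly $x_1=g_4+\dots+g_i$, $x_2=x_1+g_2$ and $x_3=x_1+g_2+g_3$, $i\le n$; appending $g_1$ produces $x_1+g_1,\ x_2+g_1,\ x_3+g_1$, which is precisely the list in the lemma. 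I expect the only delicate point to be orientation bookkeeping: one must verify that the ``non-negative side'' convention is applied consistently with $F_{A_n}$/$F_{D_n}$, so that the surviving corners are genuinely those left fixed by $F_{E_n}$, and that the reflection set used is exactly $\{BH(g_j,g_k):4\le j<k\le n\}\cup\{BH(g_2,g_3)\}$; past that, the argument is the routine combinatorial enumeration above, structurally identical to the $D_n$ case.
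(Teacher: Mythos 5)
Your proof is correct and follows essentially the same route as the paper's, which simply points back to the $D_n$ argument of Lemma~\ref{lem_fold_2}: a corner containing $g_k$ but not $g_j$ lies on the negative side of $BH(g_j,g_k)$, $g_1$ lies in every reflection hyperplane, and hence the surviving corners are the initial-segment sums with $g_1$ optionally appended. Your explicit Gram-matrix computation $x\cdot(g_j-g_k)=\mathbf{1}[j\in S]-\mathbf{1}[k\in S]$ makes this rigorous, and you are right on the one delicate point: the constraint from $BH(g_2,g_3)$ (a reflection of $F_{E_n}$ but outside the range $4\le j<k\le n$ written in the lemma) must be adjoined to rule out the subsets $\{3\}$ and $\{1,3\}$ and obtain exactly the six listed families.
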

\begin{proof}
See the proof of Lemma~\ref{lem_fold_2}.
\end{proof}

\begin{proof}(of Theorem~\ref{theo_En_folding})
(i) The folding via $BH(g_j, g_k)$, $4 \leq j < k  \leq n$ and $j=2,k=3$, switches $g_j$  and $g_k$ in
the hyperplane containing $\D(\B)$, which is orthogonal to $e_1$. Switching $g_j$
and $g_k$ does  not change the decision boundary because  of the basis
symmetry, hence $f$ is unchanged. 

Now, for (ii), how many pieces are left after all reflections?
To count the number of pieces of $f$, defined on $\D'(\B)$, 
we need to enumerate the cases where both $x \in \CBONE$ and $x'\in \CN(x) \cap \CBZERO$ are 
on the non-negative side of all reflection hyperplanes. 

Firsly, we investigate the effect of the folding operation on the term $\sum_{i=0}^{n-3}[1+n-3-i] \times \binom{n-3}{i}$ in Equation~\eqref{eq_En_bis}.
Remember that it is obtained via $(l_i)$ (i.e.  Equation~\eqref{eq_En_1}).
Due to result 1 of Lemma~\ref{lem_fold_3} and similarly to the corresponding term in the proof of Theorem~\ref{theo_Dn_second_kind_lin}, this term reduces to
$(n-3)\times2 + 1$.

Secondly, we investigate the reduction of the term $2 \left[ 1+ 2(n-3-i)+ \binom{n-3-i}{2} \right] \times \binom{n-3}{i}$, obtained via $(ll_i)$ (i.e.  Equation~\eqref{eq_En_2}).
The following results are obtained via item 2 of Lemma~\ref{lem_fold_3}. 
$\binom{n-3}{i}$ reduces to 1 at each step $i$ because in $\CBONE$, only the points $x=g_2+g_3+g_{i-1}+g_i+g_1$ are on the non-negative side of all hyperplanes, $i\leq n$.
Then, since any $\sum_j g_j + g_3 +g_1$ is on the negative side of the hyperplane $BH(g_2,g_3)$, $(ll_i)-(B)$ generates no piece in $f$ (defined to $\D'(\B)$).
$(ll_i)-(A)$ is the same situation as the situation $(ll_i)$ in the proof of Theorem~\ref{theo_Dn_second_kind_lin}.
Hence, the term reduces to $(n-3)\times(4)+3+1$.

Finally, what happens to the term $\left[1+3(n-3-i) +  3\binom{n-3-i}{2} + \binom{n-3-i}{3} \right] \binom{n-3}{n-i}$, obtained via $(lll_i)$ (i.e.  Equation~\eqref{eq_En_3})?
The following results are obtained via item 3 of Lemma~\ref{lem_fold_3}. 
As usual, $\binom{n-3}{n-i}$ reduces to 1 at each step $i$.
Then, $3(n-3-i)$, due to $(lll_i)-(1)$, becomes $2\times 1$ at each step $i$ because any $x-g_2+g_k$ (in $(lll_i)-(1)$), $k \leq 4 \leq n$, is on the negative side of $BH(g_2,g_3)$. 
For $x-g_3+g_k$ and $x+g_k$, only one valid choice of $g_k$ remains at each step $i$, as explained in the proof  of Theorem~\ref{theo_Dn_second_kind_lin}.
Regarding the term $3\binom{n-3-i}{2}$, due to $(lll_i)-(2)$, any point $x-g_2+g_i+g_k$ (in $(lll_i)-(2)$) is on the negative side of $BH(g_2,g_3)$ and 
at each step $i$ there is only one valid way to
chose $g_j$ and $g_k$ for both $x-g_3-g_2+g_j+g_k$ and $x-g_3+g_j+g_k$.
Eventually, for the last term due to $(lll_i)-(3)$ only one valid choice remain at each step $i$.
Therefore, the term due to $(lll_i)$ is reduced to
to $(n-4)\times6 + 5+3+1$.
\end{proof}
%
%
%



\end{document}